\newcites{app}{References for Supplementary Material}
\newcommand{\du}{\textrm{d}u}
\newcommand{\dv}{\textrm{d}v}
\newcommand{\dx}{\textrm{d}x}
\newcommand{\dy}{\textrm{d}y}
\newcommand{\dz}{\textrm{d}z}
\newcommand{\dt}{\textrm{d}t}
\let\oldnl\nl
\newcommand{\nonl}{\renewcommand{\nl}{\let\nl\oldnl}}
\renewcommand{\descriptionlabel}[1]{%
  \hspace\labelsep \upshape\bfseries #1%
}
\theoremstyle{definition}
\let\orgdescriptionlabel\descriptionlabel
\renewcommand*{\descriptionlabel}[1]{%
  \let\orglabel\label
  \let\label\@gobble
  \phantomsection
  \edef\@currentlabel{#1\unskip}%
  \let\label\orglabel
  \orgdescriptionlabel{#1}%
}
\newtheorem{lemma}{Lemma}
\newtheorem{prop}{Proposition}
\newtheorem{corollary}{Corollary}
\newcommand{\HH}{\ensuremath{\mathbb{H}}\xspace}
\newcommand{\X}{\ensuremath{\mathbb{X}}\xspace}
\newcommand{\Y}{\ensuremath{\mathbb{Y}}\xspace}
\newcommand{\measurable}{\ensuremath{\mathcal{B}_b}}
\newcommand{\ESS}{\textrm{ESS}}
\DeclareMathOperator{\Exp}{\mathbb{E}}
\DeclareMathOperator{\pr}{\mathbb{P}}
\DeclareMathOperator{\KL}{KL}
\DeclareMathOperator{\mise}{MISE}
\DeclareMathOperator{\ise}{ISE}
\DeclareMathOperator{\mse}{MSE}
\DeclareMathOperator{\BL}{BL}
\DeclareMathOperator{\N}{\mathcal{N}}
\DeclareMathOperator{\Kop}{\operatorname{K}}
\DeclareMathOperator{\emsmap}{\operatorname{F}_{EMS}}
\DeclareMathOperator{\emmap}{\operatorname{F}_{EM}}
\def\real{\mathbb{R}}
\def\weight{G_n}
\def\weightN{G_n^N}
\def\predictive{\eta_n}
\def\predictiveN{\eta^N_n}
\def\updateN{\hat{\eta}_n^N}
\def\update{\hat{\eta}_n}
\def\bgN{\Psi_{G_n^N}}
\def\bg{\Psi_{G_n}}
\def\testfn{\varphi}
\newcommand{\lp}{\ensuremath{\mathbb{L}_p}}
\newcommand{\norm}[2]{\ensuremath{\Vert #1 \Vert_{#2}}}
\newcommand{\supnorm}[1]{\norm{#1}{\infty}}
\def\bwN{s_N}
\def\predictiveX{\eta_{n}\vert_{\X}}
\def\predictiveNX{\eta^N_{n}\vert_{\X}}
\newcommand{\ourtitle}{A Particle Method for Solving\\ Fredholm Equations of the First Kind}
\def\version{2}
\begin{document}

\def\spacingset#1{\renewcommand{\baselinestretch}%
{#1}\small\normalsize}
\spacingset{1}

\if1\version 
{
  \bigskip
  \bigskip
  \bigskip
  \begin{center}
    {\LARGE\bf \ourtitle}
\end{center}
  \medskip
}
\else 
{
  \title{\ourtitle}
	\author{Francesca R. Crucinio\thanks{This work was supported by funding from the EPSRC and MRC OXWASP Centre for Doctoral Training EP/L016710/1, EPSRC grants EP/R034710/1 and EP/T004134/1, and the Lloyd’s Register Foundation Programme on Data-Centric Engineering at the Alan Turing Institute.}\hspace{.2cm}\\
Department of Statistics, University of Warwick\\
and\\
Arnaud Doucet\\
University of Oxford\\
and\\
Adam M. Johansen\\
University of Warwick \& The Alan Turing Institute}
\maketitle
} \fi
\begin{abstract}
Fredholm integral equations of the first kind are the prototypical example of ill-posed linear inverse problems. They model, among other things, reconstruction of distorted noisy observations and indirect density estimation and also appear in instrumental variable regression. However, their numerical solution remains a challenging problem. Many techniques currently available require a preliminary discretization of the domain of the solution and make strong assumptions about its regularity. For example, the popular expectation maximization smoothing (EMS) scheme requires the assumption of piecewise constant solutions which is inappropriate for most applications. We propose here a novel particle method that circumvents these two issues. This algorithm can be thought of as a Monte Carlo approximation of the EMS scheme which not only performs an adaptive stochastic discretization of the domain but also results in smooth approximate solutions.
We analyze the theoretical properties of the EMS iteration and of the corresponding particle algorithm. Compared to standard EMS, we show experimentally that our novel particle method provides state-of-the-art performance for realistic systems, including motion deblurring and reconstruction of cross-section images of the brain from positron emission tomography.
 
\end{abstract}

\noindent%
{\it Keywords:} Expectation maximization, Indirect density estimation, Inverse problems, Monte Carlo methods, Positron emission tomography
\vfill


\if\version2 
        \spacingset{1}
	\fontsize{10pt}{12}\selectfont
\else
	\spacingset{1.5} 
\fi

\section{Introduction}
We consider Fredholm equations of the first kind of the form 
\begin{equation}
\label{eq:fe}
h(y) = \int_{\X} f(x) g(y \mid x) \dx\qquad \forall y \in \Y,
\end{equation}
with $f(x)$ and $h(y)$ probability densities on $\X$ and $\Y$, respectively, and $g(y \mid x)$ the density of a Markov kernel from $\X$ to $\Y$. Given $g$ and (some characterization of) $h$, we aim to estimate $f$. Our particular interest is the setting in which we have access to a collection of samples from $h$, rather than the function itself.

This class of equations has numerous applications in statistics and applied mathematics. For example, $h$ might correspond to a mixture model for which we wish to estimate its mixing distribution, $f$, from samples from $h$. This problem is known as density deconvolution or indirect density estimation \citep{delaigle2008alternative,ma2011indirect,pensky2017minimax,yang2020density}. In epidemiology, \eqref{eq:fe} links the incidence curve of a disease to the observed number of cases \citep{goldstein2009reconstructing, gostic2020practical, marschner2020back}. In instrumental variable regression and causal inference, Fredholm equations can be used to estimate a nonlinear regression function or identify causal effects in the presence of confounders  \citep{hall2005nonparametric,miao2018identifying}. Since the seminal work of 
\citet{vardi1985statistical,vardi1993image}, Fredholm equations have also been widely used in positron emission tomography. In this and similar contexts, $f$ corresponds to an image which needs to be inferred from noisy measurements \citep{aster2018parameter,clason2019regularization,snyder1992deblurring,jin2019expectation}.

In most interesting cases, Fredholm integral equations of the first kind are ill-posed and it is necessary to introduce a regularizer to obtain a unique solution. Solving the regularized problem remains computationally very challenging. 
For certain subclasses of this problem, such as density deconvolution \citep{delaigle2008alternative} good methods exist and can achieve optimal convergence rates as the number of observations increases \citep{carroll1988optimal}. However, generally applicable approaches which do not assume a particular form of $g$ typically require discretization of the domain, $\X$, which restricts their applications to low-dimensional scenarios, and often assume a piecewise-constant solution \citep{burger2019entropic, koenker2014convex,ma2011indirect,tanana2016approximate,yang2020density}. This is the case for the popular Expectation Maximization Smoothing (EMS) scheme \citep{silverman1990smoothed}, a smoothed version of the infinite dimensional expectation maximization algorithm of \cite{kondor1983method}.

In this paper, our contributions are three-fold. First, we provide novel theoretical results for the EMS scheme on continuous spaces, establishing that it admits a fixed point under weak assumptions. Second, we propose a novel particle version of EMS which does not suffer from the limitations of the original scheme. This Monte Carlo algorithm provides an adaptive stochastic discretization of the domain and outputs a sample approximation of $f$ through which a smooth approximation can be obtained via a natural kernel density estimation procedure. Although this algorithm is related to sequential Monte Carlo (SMC) methods which have been widely used to perform inference for complex Bayesian models \citep{chopin2020,del2013mean,douc2014nonlinear,doucet2011tutorial,liu1998sequential,liu2001monte}, standard SMC convergence results do not apply to this scheme so we also provide an original theoretical analysis of the algorithm. Third, we demonstrate this algorithm on both illustrative examples and realistic image processing applications. 

The rest of this paper is organized as follows. In Section~\ref{sec:background}, we review Fredholm integral equations of the first kind and the EMS algorithm, and establish existence of a fixed point for the continuous version. In Section~\ref{sec:mf}, we introduce a particle approximation of the EMS recursion and provide convergence results for this scheme. We demonstrate the application of the algorithm in Section~\ref{sec:examples} and then briefly conclude.

\section{Fredholm equations and EMS}
\label{sec:background}

\subsection{Fredholm integral equations of the first kind}
\label{sec:fe}
We recall that we consider equations of the form~\eqref{eq:fe}.
We concern ourselves in particular with the case in which
\begin{compactdesc}
\item[(A0)\label{a:space}] $\X\subset \real^{d_{\X}}$ and $\Y\subset \real^{d_{\Y}}$ are compact subsets of Euclidean spaces, $g$ can be evaluated pointwise and a sample, $\mathbf{Y}$, from $h$ is available.
\end{compactdesc}

In most applications the space $\HH= \X \times \Y \subset \real^{d_{\X}\times d_{\Y}}$ is closed and bounded and~\ref{a:space} is satisfied. For instance, in image processing both $\X$ and $\Y$ are typically of the form $[-a, a]\times[-b, b]$ for $a, b>0$, $f$ and $h$ are continuous densities on $\X$ and $\Y$, respectively. 
In applications the analytic form of $h$ is often unknown, and the available data arise from discretization of $h$ over $\Y$, as in, e.g., \cite{vardi1993image}, or from sampling, as in, e.g. \cite{ma2011indirect}. In the image processing context, the available data are usually either the values of $h$ over the discretization of $\Y$ induced by the pixels of the image (e.g. an image with $10\times 10$ pixels induces a discretization on $\Y$ in which the intervals $[-a, a]$ and $[-b, b]$ are each divided into 10 bins) or samples from $h$. We focus here on the sampling case.

Considering~\eqref{eq:fe} in the context of probability densities is not too restrictive. A wider class of integral equations can be recast in this framework by appropriate normalizations and translations, provided that $f$ and $h$ are bounded below \citep[Section 6]{chae2018algorithm}.

As the set of probability densities on $\X$ is not finite, if the kernel $g$ is not degenerate then the resulting integral equation is in general ill-posed \citep[Theorem 15.4]{kress2014linear}.
Fredholm's alternative (see, e.g., \citet[Corollary 4.18]{kress2014linear}) gives a criterion to assess the existence of solutions of~\eqref{eq:fe}; however, the lack of continuous dependence on $h$ causes the solutions to be unstable and regularization techniques are needed \citep{kress2014linear, groetsch2007integral}.
Common methods are Tikhonov regularization \citep{tihonov1963solution} and iterative methods \citep{landweber1951iteration, kondor1983method}.
See \cite{yuan2019overview} for a recent review.
\subsection{Expectation Maximization and Related Algorithms}
\subsubsection{Expectation Maximization}
\label{sec:em}

From a statistical point of view,~\eqref{eq:fe} describes an indirect density estimation problem: the mixing density $f$ has to be recovered from the mixture $h$. 
This can in principle be achieved by maximizing an incomplete data likelihood for $f$ through the Expectation Maximization (EM) algorithm \citep{dempster1977maximum}.
Nevertheless, the maximum likelihood estimator is not consistent, as the parameter to be estimated (i.e. $f$) is infinite dimensional \citep{laird1978nonparametric}; a problem aggravated by the ill-posedness of~\eqref{eq:fe} \citep{silverman1990smoothed}.

We briefly review a number of iterative schemes based on the EM algorithm which aim to find approximate solutions of~\eqref{eq:fe} through regularization. The starting point is the iterative method of \cite{kondor1983method}, an infinite dimensional EM algorithm,
\begin{equation}
\label{eq:em}
f_{n+1}(x) = f_n(x) \int \frac{g(y \mid x)}{\int f_n(z) g(y \mid z)
  \dz} h(y) \dy,
\end{equation}
which minimizes the Kullback--Leibler divergence,
\begin{equation}
\label{eq:minimization}
\KL\left(h, \int_{\X}f(x)g(\cdot \mid x)\ \dx\right) = \int_{\Y} h(y)\log\left( \frac{h(y)}{\int_{\X}f(x)g(y \mid x)\ \dx}\right) \dy,
\end{equation}
with respect to $f$ over the set of probability densities on $\X$ \citep{multhei1989properties}.
Minimizing~\eqref{eq:minimization} is equivalent to maximizing
\begin{align*}
\Lambda(f) := \int_{\Y} h(y)\log \int_{\X}f(x)g(y \mid x)\ \dx \ \dy, 
\end{align*}
a continuous version of the incomplete data log-likelihood for the function $f$ \citep{multhei1989properties}.
This scheme has several good properties, iterating~\eqref{eq:em} monotonically decreases~\eqref{eq:minimization}  \citep[Theorem 7]{multhei1987iterative} and if the iterative formula converges, then the limit is a minimizer of~\eqref{eq:minimization} \citep[Theorem 8]{multhei1987iterative} --- but the minimizer need not be unique.
Convergence of the EM iteration~\eqref{eq:em} to a fixed point has been proved under the existence of a sequence $(f^\star_s)_{s\geq 1}$ with $h^\star_s(y) = \int_{\X} f^\star_s(x)g(y \mid x)\ \dx$, such that $\KL(h, h^\star_s)$ converges to the infimum of~\eqref{eq:minimization} and additional integrability conditions \citep{chae2018algorithm}.

In general, implementing the recursive formula~\eqref{eq:em} analytically is not possible and discretization schemes are needed. Under the assumption of piecewise constant densities $f$, $h$ and $g$, with the discretization grid fixed in advance, the EM recursion~\eqref{eq:em} reduces to the EM algorithm for Poisson data \citep{vardi1993image}, known as the Richardson--Lucy (RL) algorithm in the image processing field \citep{richardson1972bayesian, lucy1974iterative}, where the intensities of pixels are modeled as Poisson counts,
\begin{equation}
\label{eq:em_discrete}
f_{b}^{(n+1)} = f_{b}^{(n)}\sum_{d=1}^D\left(\frac{h_d g_{bd}}{\sum_{k=1}^B f_k^{(n)}g_{kd}}\right),
\end{equation}
here $f_b$ for $b=1,\ldots, B$ and $h_d$ for $d=1, \ldots, D$ are the constant values over the deterministic discretization of the space for $f$ and $h$ respectively.

The Iterative Bayes (IB) algorithm of \cite{ma2011indirect} considers the case in which only samples from $h$ are available. These samples are used to build a kernel density estimator (KDE) for $h$, which is then plugged into the discretized EM iteration~\eqref{eq:em_discrete}.

As discussed earlier, despite being popular and easy to implement, the EM algorithm~\eqref{eq:em_discrete} has a number of drawbacks: after a certain number of iterations the EM approximations deteriorate resulting in unstable estimates that lack smoothness and give spiky estimates of $f$ \citep{silverman1990smoothed, nychka1990some}; in fact minimizing~\eqref{eq:minimization} does not deal with the ill-posedness of the problem and regularization is needed \citep{byrne2015algorithms}.

A natural way to introduce regularization is via maximum penalized likelihood estimation (MPLE; see, e.g. \cite{green1990use}), maximizing, for some penalty term, $P$:
\begin{align*}
\Lambda^\prime(f) := \int_{\Y} h(y)\log \int_{\X}f(x)g(y \mid x)\ \dx \ \dy - P(f).
\end{align*}
In most cases, an updating formula like~\eqref{eq:em_discrete} cannot be obtained straightforwardly for MPLE because the derivative of $P(f)$ usually involves several derivatives of $f$. A possible solution is to update the estimate of $f$ from iteration $f_n$ to $f_{n+1}$ evaluating the penalty term at $f_n$, rather than at the new value $f_{n+1}$. This is known as the one-step late (OSL) algorithm \citep{green1990use}.
The resulting update formula is usually easier to compute but there is no guarantee that each iteration will increase the penalized log-likelihood. However, if convergence occurs, the OSL algorithm converges more quickly than the corresponding EM for the penalized likelihood.

\subsubsection{Expectation Maximization Smoothing}
An easy-to-implement regularized version of the EM recursion~\eqref{eq:em_discrete} is the EMS algorithm of \cite{silverman1990smoothed}, an EM-like algorithm in which a smoothing matrix $\Kop$ is applied to the EM estimates at each iteration
\begin{equation}
\label{eq:ems_discrete}
f_{b}^{(n+1)} = \sum_{\kappa=1}^B \Kop_{b\kappa}f_{\kappa}^{(n)}\sum_{d=1}^D\left(\frac{h_d g_{\kappa d}}{\sum_{k=1}^B f_k^{(n)}g_{kd}}\right).
\end{equation}
The EMS algorithm has long been attractive from a practical point of view as the addition of the smoothing step to the EM recursion~\eqref{eq:em_discrete} gives good empirical results, with convergence occurring empirically in a relatively small number of iterations (e.g. \citet{silverman1990smoothed, li2017prober, becker1991method}).

Under mild conditions on the smoothing matrix the discretized EMS recursion~\eqref{eq:ems_discrete} has a fixed point \citep{latham1992hyperplane}. In addition, with a particular choice of smoothing matrix, the fixed point of~\eqref{eq:ems_discrete} minimizes a penalized likelihood with a particular roughness penalty \citep{nychka1990some}. With this choice of penalty, the OSL and the EMS recursion have the same fixed point \citep{green1990use}.
\cite{fan2011local} establish convergence of~\eqref{eq:ems_discrete} to local-EM, an EM algorithm for maximum local-likelihood estimation, when the smoothing kernel is a symmetric positive convolution kernel with positive bandwidth and bounded support. If the space on which the EMS mapping is defined is bounded, the discrete EMS mapping is globally convergent for sufficiently large bandwidth.

The focus of this work is a continuous version of the EMS recursion, in  which we do not discretize the space and use smoothing convolutions 
$K f(\cdot):=\int_{\X} K(u, \cdot)f(u)\ \du$ in place of smoothing matrices, i.e.
\begin{align}
\label{eq:ems}
f_{n+1}(x) 
&= \int_{\X} K(x^\prime, x) f_n(x^\prime) \int_{\Y} \frac{g(y \mid x^\prime) h(y)}{\int_{\X} f_n(z) g(y \mid z) \dz}\ \dy \ \dx^\prime.
\end{align}

\subsection{Properties of the Continuous EMS Recursion}
\label{sec:cems}
Contrary to the discrete EMS map~\eqref{eq:ems_discrete}, relatively little is known about the continuous EMS mapping. We prove, under the following assumptions, that it also admits a fixed point in the space of probability distributions:
\begin{description}
\item[(A1)\label{a:g}] The density of the kernel $g(y \mid x)$ is continuous and bounded away from 0 and $\infty$:
\begin{equation*}
 \exists m_g>0 \text{ such that }\qquad 0<  m_g^{-1} \leq  g(y \mid x) \leq m_g < \infty \qquad \forall (x, y)\in \X\times\Y.
\end{equation*}
\item[(A2)\label{a:k}] The smoothing 
kernel is specified via a continuous bounded density, $T$, over $\real^{d_{\X}}$, such that
  $\inf_{v \in \X} \int_\X T(u-v) \du > 0$ as:
\begin{align*}
K(v, u) = \frac{T(u-v)\mathbb{I}_{\X}(u)}{\int_{\X} T(u^\prime-v)\ \du^\prime}.
\end{align*}
\end{description}

Assumption \ref{a:g} is common in the literature on Fredholm integral equations as continuity of $g$ rules out degenerate integral equations which require special treatment \citep[Chapter 5]{kress2014linear}. The boundedness condition on $g$ ensures the existence of a minimizer of~\eqref{eq:minimization} \citep[Theorem 1]{multhei1992iterative}. Assumption \ref{a:k} on $T$ is mild and is satisfied by most commonly used kernels for density estimation \citep{silverman1986density} and implies that $K(v, \cdot)$ is a density over $\X$ for any fixed $v$. We can draw samples from $K(v, \cdot)$, e.g. by rejection sampling whenever $T$ is proportional to a density from which sampling is feasible.

The EMS map describes one iteration of this algorithm, for a probability density $f$, \begin{equation*}
\emsmap : f \mapsto \emsmap f := \int_{\X} f(x^\prime)  K(x^\prime, \cdot)  \int_{\Y} \frac{g(y \mid x^\prime) h(y)}{\int_{\X} f(z)g(y \mid z) \dz  }\ \dy\ \dx^\prime.
\end{equation*}
It is the composition of linear smoothing by the kernel $K$ defined in~\ref{a:k}
and the non-linear map corresponding to the EM iteration, $\emmap$,
\begin{align}
\label{eq:emop}
  \emmap(f)(x) &= \frac{\bar{G}_{f}(x) f(x)}{f(\bar{G}_{f})}
&\textrm{ where }
\bar{G}_{f}(\cdot) := \int_{\Y} \frac{g(y \mid \cdot)h(y)}{\int_{\X}f(z) g(y \mid z) \dz}\ \dy
\end{align}
and we introduce the normalizing constant $f(\bar{G}_{f})\equiv 1$ to highlight the connection with the particle methods introduced in Section~\ref{sec:mf} (here and elsewhere we adopt the convention that for any suitable integrable function, $\varphi$, and probability or density, $f$, $f(\varphi)=\int f(x) \varphi(x) \dx$). That is, $\emmap$ corresponds to a simple reweighting of a probability, with the weight being given by $\bar{G}_f$.

The existence of the fixed point of $\emsmap$ is established in Appendix~\ref{app:ems} using results from non-linear functional analysis. This result is obtained taking $h$ to be any probability distribution over $\Y$, and shows that a fixed point exists both in the case in which $h$ admits a density and that in which $h$ is the empirical distribution of a sample $\mathbf{Y}$ --- the latter is common in applications, and is the setting we are concerned with. 

\begin{prop}
\label{prop:eu}
Under~\ref{a:space}, \ref{a:g} and~\ref{a:k}, the EMS map, $\emsmap$, has a fixed point in the space of probability distributions over $\X$.
\end{prop}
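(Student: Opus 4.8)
My plan is to prove Proposition~\ref{prop:eu} by Schauder's fixed-point theorem, which guarantees a fixed point for any continuous self-map of a nonempty, compact, convex subset of a Banach space. I would work in the Banach space $C(\X)$ of continuous functions on $\X$ (assuming, as is typical, that $\X$ is compact, and otherwise passing to its closure) equipped with the supremum norm, and within it the convex set $\mathcal{P}$ of continuous probability densities on $\X$; this set is closed because uniform convergence on the bounded set $\X$ forces both nonnegativity and unit mass to pass to the limit. The decomposition $\emsmap = K\circ\emmap$ of the EMS map into the nonlinear reweighting $\emmap$ followed by linear smoothing by $K$ is what makes the argument tractable: the smoothing step is precisely the ingredient that will supply the compactness which the raw EM iteration lacks.

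The first substantive step is to extract uniform a priori bounds from \ref{a:g}. Since $f$ is a probability density and $1/m_g \le g(y\mid\cdot)\le m_g$, the denominator $\int_{\X} f(z) g(y\mid z)\ \dz$ lies in $[1/m_g, m_g]$ for every $y$, whence, using $\int_{\Y} h = 1$, the weight obeys $1/m_g^2 \le \bar{G}_f(\cdot)\le m_g^2$ uniformly in $f$, and the normalizing constant $f(\bar{G}_f)$ lies in the same interval. Writing $c_T := \inf_{v\in\X}\int_{\X} T(u-v)\ \du > 0$ and $M_T := \supnorm{T}$ from \ref{a:k}, the smoothing kernel satisfies $K(v,u)\le M_T/c_T$; since $\int_{\X} K(v,u)\ \du = 1$, the map $\emsmap$ sends $\mathcal{P}$ into itself and each image $\emsmap f$ is bounded by $M_T/c_T$ independently of $f$.

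Next I would establish relative compactness of the image. Because $T$ is continuous and $u-v$ ranges over a bounded set as $u,v$ vary over $\X$, $T$ is uniformly continuous there, so $\lvert T(u_1-v)-T(u_2-v)\rvert$ is uniformly small in $v$ once $\lvert u_1-u_2\rvert$ is small; dividing by the normalization (bounded below by $c_T$) and integrating against the probability density $\emmap(f)$ shows that the family $\{\emsmap f : f\in\mathcal{P}\}$ is equicontinuous with a modulus of continuity independent of $f$. Combined with the uniform bound $M_T/c_T$, Arzel\`a--Ascoli yields relative compactness in $C(\X)$. I would then take $C$ to be the closed convex hull of the image: by Mazur's theorem $C$ is compact and convex, it is nonempty, it lies in $\mathcal{P}$ (a closed convex set containing the image), and $\emsmap(C)$ lies in the image and hence in $C$.

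The remaining and, I expect, most delicate step is the continuity of $\emsmap$ on $C$; the crucial point is that the two-sided bound in \ref{a:g} keeps every denominator away from zero. If $f_n\to f$ uniformly, then $\int_{\X} f_n(z)g(y\mid z)\ \dz\to\int_{\X} f(z)g(y\mid z)\ \dz$ uniformly in $y$ while remaining $\ge 1/m_g$, so the reciprocals converge uniformly and dominated convergence gives $\bar{G}_{f_n}\to\bar{G}_f$ uniformly. Since the normalizing constant is bounded below by $1/m_g^2$, it follows that $\emmap(f_n)\to\emmap(f)$ uniformly, and as $K$ is a bounded linear operator on $C(\X)$ the composition $\emsmap$ is continuous; Schauder's theorem then delivers a fixed point in $C\subseteq\mathcal{P}$. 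The main obstacle throughout is ensuring that the nonlinear denominators never degenerate and that the smoothing genuinely compactifies the image, both of which hinge on making full use of \ref{a:g} and \ref{a:k}.
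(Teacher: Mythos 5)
Your proof is correct, and it reaches the conclusion by a genuinely different functional-analytic route from the paper's. The paper works in the space of finite signed measures $(\mathcal{M}(\X),\tvnorm{\cdot})$: it shows that $\emmap$ is continuous and bounded there, that the smoothing operator $\Kop$ is compact (via a Dobrushin-coefficient argument combined with Arzel\`a--Ascoli in total variation), concludes that $\emsmap=\Kop\circ\emmap$ is a compact operator, and applies Schauder's theorem to the closed, bounded, convex, non-compact set $\mathcal{P}(\X)$. You instead work in $C(\X)$ with the supremum norm, restrict to continuous probability densities, and extract compactness directly from the explicit form of $K$ in~\ref{a:k}: the uniform bound $M_T/c_T$ and the uniform modulus of continuity inherited from $T$ on the compact closure of $\X-\X$ give classical Arzel\`a--Ascoli relative compactness of the image, after which Mazur's theorem supplies the compact convex invariant set on which the elementary form of Schauder's theorem applies. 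Your two-sided bounds on $\bar{G}_f$ and the continuity argument are exactly the paper's use of~\ref{a:g} transplanted to the sup norm (uniform convergence on the bounded set $\X$ controls the $\mathbb{L}_1$ and hence the total variation distance, so the two continuity arguments are essentially interchangeable). What your route buys is a slightly stronger conclusion---the fixed point is exhibited as a \emph{continuous} density rather than merely a probability measure---and it avoids the more delicate question of what equicontinuity and relative compactness mean for families of measures under the TV norm, at the modest cost of having to check that $\emsmap$ preserves continuity of densities (which your equicontinuity estimate already delivers). The paper's measure-level formulation, in turn, keeps the fixed-point statement in the same space in which the particle approximations of Section~\ref{sec:mf} live.
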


\section{Particle implementation of the EMS Recursion}
\label{sec:mf}
In order to make use of the continuous EMS recursion in practice, it is necessary to approximate the integrals which it contains. To do so, we develop a particle method specialized to our context via a stochastic interpretation of the recursion.

\subsection{Particle methods}
\label{sec:smc}
Particle methods also known as Sequential Monte Carlo (SMC) methods are a class of Monte Carlo methods that sequentially approximate a sequence of target probability densities $\lbrace \eta_n(z_{1:n})\rbrace_{n\geq1}$ defined on the product spaces $\HH^n$ of increasing dimension, whose evolution is described by Markov transition kernels $M_n$ and positive potential functions $G_n$ \citep{ del2013mean}
\begin{equation}
\label{eq:smc}
\eta_{n+1}(z_{1:n+1}) \propto \eta_n(z_{1:n})G_n(z_n)M_{n+1}(z_{n+1}\mid z_{n}).
\end{equation}
These sequences naturally arise in state space models (e.g. \cite{liu1998sequential,doucet2011tutorial, li2016}) and many inferential problems can be described by~\eqref{eq:smc} (see, e.g., \cite{liu2001monte,chopin2020}, and references therein).

The approximations of $\eta_n$ for $n\geq 1$ are obtained through a population of Monte Carlo samples, called particles. The population consists of a set of $N$ weighted particles $\lbrace Z_n^i, W_n^i\rbrace_{i=1}^N$ which evolve in time according to the dynamic in~\eqref{eq:smc}.
Given the equally weighted population at time $n-1$, $\lbrace \widetilde{Z}_{n-1}^i, \frac{1}{N}\rbrace_{i=1}^N$, new particle locations $Z_n^i$ are sampled from $M_n(\cdot \mid \widetilde{Z}_{n-1}^i)$ to obtain the equally weighted population at time $n$, $\lbrace Z_n^i, \frac{1}{N}\rbrace_{i=1}^N$.
Then, the fitness of the new particles is measured through $\weight$, which gives the weights $W_n^i$. The new particles are then replicated or discarded using a resampling mechanism, giving the equally weighted population at time $n$, $\lbrace \widetilde{Z}_{n}^i, \frac{1}{N}\rbrace_{i=1}^N$.
Several resampling mechanisms have been considered in the literature (\citet[page 336]{douc2014nonlinear}, \cite{gerber2019negative}) the simplest of which consists of sampling the number of copies of each particle from a multinomial distribution with weights $\lbrace W_n^i\rbrace_{i=1}^N$ \citep{gordon1993novel}.

At each $n$, the empirical distribution of the particle population provides an approximation of the marginal distribution of $Z_n$ under $\predictive$ via $\eta_n^N = N^{-1}\sum_{i=1}^N \delta_{Z_{n}^i}.$
Throughout, in the interests of brevity, we will abuse notation slightly and treat $\eta_n^N$ as a density, allowing $\delta_{x_0}(x)\dx$ to denote a probability concentrated at $x_0$. These approximations possess various convergence properties (e.g. \cite{ del2013mean}), in particular $\lp$ error estimates and a strong law of large numbers for the expectations $\eta_n^N(\testfn) := \int_{\HH} \eta_n^N(u)\testfn(u)\ \du = N^{-1} \sum_{i=1}^N \testfn(Z_n^i)$ of sufficiently regular test functions $\testfn$ \citep{crisan2002survey, miguez2013convergence}.

\subsection{A stochastic interpretation of EMS}
The EMS recursion~\eqref{eq:ems} can be modeled as a sequence of densities satisfying~\eqref{eq:smc} by considering an extended state space.
Denote by $\predictive$ the joint density at $(x, y)\in\HH$ defined by $\predictive(x,y)=f_n(x)h(y)$ so that $f_n(x) = \eta_n\vert_{\X}(x) = \int_{\Y} \eta_n(x, y)\ \dy$. This density satisfies a recursion similar to that in~\eqref{eq:ems}
\begin{align}
\label{eq:ems_extended}
\eta_{n+1}(x, y) = \int_{\X} \int_{\Y} \eta_n(x^\prime, y^\prime) K(x^\prime, x) h(y)\frac{g(y^\prime\mid x^\prime)}{\int_{\X}f_n(z)g(y^\prime \mid z)\dz}
 \ \dy^\prime \ \dx^\prime.
\end{align}
With a slight abuse of notation, we denote by $\eta_n$ the joint density of $(x_{1:n}, y_{1:n})\in\HH^n$ obtained by iterative application of~\eqref{eq:ems_extended} with the integrals removed. 
\begin{prop}
\label{prop:smcems}
The sequence of densities $\lbrace \eta_n\rbrace_{n\geq1}$ defined over the product spaces $\HH^n=(\X\times\Y)^n$ by~\eqref{eq:smc} with $z_n:=(x_n, y_n)$,
\begin{equation}
\label{eq:mutation}
M_{n+1}\left((x_{n+1}, y_{n+1})\mid (x_n, y_n)\right) = K(x_n, x_{n+1})h(y_{n+1})
\end{equation}
and 
\begin{equation}
\label{eq:potential}
\weight(x_n, y_n) = \frac{g(y_n\mid x_n)}{\int_{\X} \eta_{n}\vert_{\X}(z)g(y_n\mid z)\ \dz}
\end{equation}
satisfies, marginally, recursion~\eqref{eq:ems_extended}. 
In particular, the marginal distribution over $x_{n}$ of $\predictive$,
\begin{align}
\label{eq:marginal}
\eta_{n}\vert_{\X}(x_{n}) = \int_{\Y}\int_{\HH^{n-1}} \eta_{n}\left(x_{1:n}, y_{1:n}\right)\ \dx_{1:n-1}\dy_{1:n}=\int_{\Y} \eta_{n}\left(x_{n}, y_{n}\right)\ \dy_{n}
\end{align}
satisfies recursion~\eqref{eq:ems} with the identification $f_n(x)=\eta_{n}\vert_\X(x)$.
\end{prop}
\begin{proof}

Starting from~\eqref{eq:smc} with $M_{n+1}$ and $\weight$ as in~\eqref{eq:mutation}-\eqref{eq:potential}
\begin{align}
\label{eq:fkflow}
\eta_{n+1}(x_{1:n+1}, y_{1:n+1}) &=  \frac{\predictive(x_{1:n}, y_{1:n})G_n(x_n,y_n)}{\predictive(G_n)}M_{n+1}\left((x_{n+1}, y_{n+1})\mid (x_n, y_n)\right),
\end{align}
where $\eta_n(G_n):= \int_{\HH}\predictive(x_{n}, y_{n})G_n(x_n,y_n)\ \dx_{n}\dy_{n} = 1$, and integrating out $(x_{1:n}, y_{1:n})$
\begin{align*}
\eta_{n+1} (x_{n+1}, y_{n+1})=& \int_{\HH^n}\frac{\predictive(x_{1:n}, y_{1:n})G_n(x_n,y_n)}{\predictive(G_n)}M_{n+1}\left((x_{n+1}, y_{n+1})\mid (x_n, y_n)\right)\ \dx_{1:n}\dy_{1:n}\\
=&  \int_{\HH}\int_{\HH^{n-1}} \Big\{ \predictive(x_{1:n}, y_{1:n})\ \dx_{1:n-1} \dy_{1:n-1} \\
&\qquad \times\frac{g(y_n\mid x_n)}{\int \eta_n\vert_\X(z) g(y_n\mid z) \dz}K(x_n, x_{n+1})h(y_{n+1})\ \dx_n \dy_n\Big\}\\
=&  \int_{\HH}\predictive(x_{n}, y_{n})\frac{g(y_n\mid x_n)}{\int \eta_n\vert_\X(z) g(y_n\mid z) \dz}K(x_n, x_{n+1})h(y_{n+1})\ \dx_n \dy_n.
\end{align*}
We can then compute the marginal over $\X$, $\eta_{n+1}\vert_{\X}$
\begin{align*}
\eta_{n+1}\vert_{\X}(x_{n+1}) &= \int_{\Y}\eta_{n+1} (x_{n+1}, y_{n+1})\ \dy_{n+1}\\
 &= \int_{\Y}h(y_{n+1})\ \dy_{n+1}\int_{\HH}\predictive(x_{n}, y_{n})\frac{g(y_n\mid x_n)}{\int \eta_n\vert_\X(z) g(y_n\mid z) \dz}K(x_n, x_{n+1})\ \dx_n \dy_n\\
&= \int_{\X}\predictiveX(x_n)K(x_n, x_{n+1})\int_{\Y}h(y_n)\frac{g(y_n\mid x_n)}{\int \eta_n\vert_\X(z) g(y_n \mid z) \dz}\ \dy_n\ \dx_n
\end{align*}
which, with the given identifications, satisfies the EMS recursion~\eqref{eq:ems}.
\end{proof}
To facilitate the theoretical analysis we separate the contribution of the mutation kernels~\eqref{eq:mutation} and of the potential functions~\eqref{eq:potential}, in particular, we denote the weighted distribution obtained from $\eta_n$ by
$\bg(\predictive)(x_{n}, y_{n}) := \predictive(x_{n}, y_{n})G_n(x_n,y_n)\big/\predictive(G_n)$.

\subsection{A particle method for EMS}
Having shown that the EMS recursion describes a sequence of densities satisfying~\eqref{eq:smc}, it is possible to use SMC techniques to approximate this recursion. This involves replacing the true density at each step with a
sample approximation obtained at the previous iteration, giving rise to Algorithm~\ref{alg:fpsmc}, which describes the case in which only a fixed number of samples from $h$ are available and in line 1-2 we draw $Y_n^i$ from their empirical distribution; when sampling freely from $h$ is feasible one could instead draw these samples from it.

The resulting SMC scheme is \emph{not} a standard particle approximation of~\eqref{eq:smc}, because of the definition of the potential~\eqref{eq:potential}. Indeed, $\weight$ cannot be computed exactly, because $\predictiveX$ is not known.
The SMC scheme provides an approximation for $\predictiveX$ at time $n$.
Let us denote by $\predictiveNX$ the particle approximation of the marginal $\predictiveX$ in~\eqref{eq:marginal}
\begin{align*}
\predictiveNX := \int_{\Y} \predictiveN\left(\cdot, y_{n}\right)\ \dy_{n}=\frac{1}{N} \sum_{i=1}^N \delta_{X_n^i}.
\end{align*}
We can approximate
\begin{equation*}
\weight(x_n, y_n)= \frac{g(y_n\mid x_n)}{h_n(y_n)} = \frac{g(y_n\mid x_n)}{\int_{\X} \predictiveX(z)g(y_n \mid z)\ \dz}
\end{equation*}
using the particle approximation of the denominator $h_n(y_n):= \int_{\X} f_n(z) g(y \mid z) \dz$,
\begin{equation}
\label{eq:hN}
h^N_n(y_n) := \frac{1}{N} \sum_{i=1}^N g(y_n\mid X_n^i) = \predictiveNX\left(g(y_n \mid \cdot)\right),
\end{equation}
to obtain the approximate potentials
\begin{equation}
\label{eq:potentialN}
\weightN(x_n, y_n):= \frac{g(y_n\mid x_n)}{h^N_n(y_n)}.
\end{equation}

\begin{algorithm}[ht]
\caption{Particle Method for Fredholm Equations of the First Kind}
\label{alg:fpsmc}
\nonl At time $n=1$\\
Sample $\widetilde{X}_1^i \sim f_1$, $\widetilde{Y}_1^i$ uniformly from $\mathbf{Y}$ for $i=1,\dots,N$ and set $W_1^i=\frac{1}{N}$\\
\nonl At time $n>1$\\
Sample $X_n^i\sim K(\widetilde{X}_{n-1}^i, \cdot)$ and $Y_n^i$ uniformly from $\mathbf{Y}$ for $i=1,\dots,N$\\
Compute the approximated potentials $\weightN(X_n^i, Y_n^i)$ in~\eqref{eq:potentialN} and obtain the normalized weights
$
W_n^i = {\weightN(X_n^i, Y_n^i)} \big/ {\sum_{j=1}^N \weightN(X_n^j, Y_n^j)}
$\\
(Re)Sample $\left\lbrace(X_{n}^i, Y_{n}^i), W_n^i\right\rbrace$ to get $\left\lbrace(\widetilde{X}_{n}^i,\widetilde{Y}_{n}^i), \frac{1}{N}\right\rbrace$ for $i=1, \ldots, N$\\
Estimate $f_{n+1}(x)$ as in~\eqref{eq:smc_kde1}
\end{algorithm}

The use of $\weightN$ within the importance weighting step corresponds
to an additional approximation which is not found in standard SMC
algorithms.
In particular,~\eqref{eq:potentialN} are biased estimators of the true potentials~\eqref{eq:potential}. As a consequence, it is not possible to use arguments based on extensions of the state space (as in particle filters using unbiased estimates of the potentials \citep{liu1998sequential,delmoraldoucetjasraBayescomp2006,fearnhead2008particle}) to provide theoretical guarantees for this SMC scheme.
If $\weight$ itself were available then it would be
preferable to make use of it; in practice this will never be the case but the
idealized algorithm which employs such a strategy is of use for
theoretical analysis.

At time $n+1$, we estimate $f_{n+1}(x)$ by computing a kernel density estimate (KDE) of the weighted particle approximation
\begin{align*}
\bgN(\predictiveN)\vert_{\X} := \sum_{i=1}^N\frac{ \weightN(X_n^i, Y_n^i)}{\sum_{j=1}^N \weightN(X_n^j, Y_n^j)}\delta_{X_n^i},
\end{align*}
and \emph{then} applying the EMS smoothing kernel $K$.
This approach may seem counter-intuitive but the KDE kernel and the EMS kernel are fulfilling different roles. The KDE gives a good smooth approximation of the density associated with the EMS recursion at a point in that recursion which we expect to be under-smoothed and is driven by the usual considerations of KDE when obtaining a smooth density approximation from an empirical distribution; going on to apply the EMS smoothing kernel is simply part of the EMS regularization procedure. One could instead apply kernel density estimation after step 2 of the subsequent iteration of the algorithm but this would simply introduce additional Monte Carlo variance, with the described approach corresponding to a Rao-Blackwellisation of that slightly simpler strategy. Using the kernel of Fredholm equations of the \emph{second} kind to extract smooth approximations of their solution from Monte Carlo samples has also been found empirically to perform well \citep{doucet2010solving}. Depending on the intended use of the approximation, the KDE step can be omitted entirely; the empirical distribution provides a good (in the sense of Proposition~\ref{prop:asw}) approximation to that given by the EMS recursion but one which does not admit a density.

We consider standard $d_{\X}$-dimensional kernels for KDE, $\bwN^{-d_{\X}}\vert \Sigma\vert^{-1/2} S\left(\left(\bwN^2\Sigma \right)^{-1/2}u\right)$,
where $\bwN$ is the smoothing bandwidth and $S$ is a continuous bounded symmetric density \citep{silverman1986density}.
To account for the dependence between samples, when computing the bandwidth, $\bwN$, instead of $N$ we use the effective sample size \citep{kong1994sequential}
\begin{equation}
\label{eq:ess}
\ESS = \left(\sum_{i=1}^N \weightN(X_n^i, Y_n^i)^2\right)^{-1} \left(\sum_{j=1}^N \weightN(X_n^j, Y_n^j)\right)^2 .
\end{equation}
The resulting estimator,
\begin{equation}
\label{eq:smc_kde1}
f^N_{n+1}(x) = \int_{\X}K(x^\prime, x)\sum_{i=1}^N\frac{ \weightN(X_n^i, Y_n^i)}{\sum_{j=1}^N \weightN(X_n^j, Y_n^j)}\bwN^{-d_{\X}}\vert \Sigma\vert^{-1/2} S\left(\left(\bwN^2\Sigma \right)^{-1/2}(X_n^i - x^\prime)\right)\ \dx^\prime,
\end{equation}
satisfies the standard KDE convergence results in $\mathbb{L}_1$ and in $\mathbb{L}_2$ (see Section~\ref{sec:convergence_kde}). 

As the EMS recursion~\eqref{eq:ems} aims at finding a fixed point, after a certain number of iterations the approximation of $f$ provided by the SMC scheme stabilizes. We could therefore average over approximations obtained at different iterations to get more stable reconstructions. When the storage cost is prohibitive, a thinned set of iterations could be used.

In principle, one could reduce the variance of associated estimators by using a different proposal distribution within Algorithm~\ref{alg:fpsmc} just as in standard particle methods (see, e.g., \citet[Section 25.4.1]{doucet2011tutorial}) but this proved unnecessary in all of the examples which we explored as we obtained good performances with this simple generic scheme (the effective sample size was above $70\%$ in 
all the examples considered). Another strategy to reduce the variance of the estimators would be to implement the quasi-Monte Carlo
version of SMC \citep{gerber2015sequential} which is particularly efficient in the relatively low-dimensional settings typically found in the context of Fredholm equations.

\subsubsection{Algorithmic Setting}
Algorithm~\ref{alg:fpsmc} requires specification of a number of parameters.
The initial density, $f_1$, must be specified but we did not find performance to be sensitive to this choice (see Appendix~\ref{sec:analytically_tractable}). We advocate choosing $f_1$ to be a diffuse distribution with support intended to include that of $f$ because the resampling step allows SMC to more quickly forget overly diffuse initializations than overly concentrated ones. For problems with bounded domains, choosing $f_1$ to be uniform over \X is a sensible default choice.

We propose to stop the iteration in Algorithm~\ref{alg:fpsmc} when the difference between successive approximations, measured through the $\mathbb{L}_2$ norm of
the reconstruction of $h$ obtained by convolution of $f^N_{n}$ with $g$, $\hat{h}^N_{n}(y) := \int_{\X}f^N_{n}(x)g(y \mid x) \dx$, is smaller than the variability due to the Monte Carlo approximation of~\eqref{eq:ems}
\begin{align}
    \label{eq:stop}
     \int_{\Y}\left\lvert\hat{h}^N_{n+1}(y)-\hat{h}^N_{n}(y)\right\rvert^2\dy < \textrm{var}\left(\zeta(f_{k}^N); k = n+1-m, \dots n+1 \right),
\end{align}
where $\zeta$ is some function of the estimator $f_{n+1}^N$ and we consider its variance over the last $m$ iterations. 
The term on the left-hand side is an indicator of whether the EMS recursion~\eqref{eq:ems} has reached a fixed point, while the variance takes into account the error introduced by approximating~\eqref{eq:ems} through Monte Carlo. For given $N$ there is a point at which further increasing $n$ does not improve the estimate because Monte Carlo variability dominates. 
We employ this stopping rule in the PET example in Section~\ref{sec:pet}.

The amount of regularization introduced by the smoothing step is controlled by the smoothing kernel $K$. In principle, any density $T$ can be used to specify $K$ as in~\ref{a:k}; we opted for isotropic Gaussian kernels since in this case the integral in~\eqref{eq:smc_kde1} can be computed analytically with an appropriate choice of $S$.
In this case, the amount of smoothing is controlled by the variance $\varepsilon^2$. If the expected smoothness of the fixed point of the EMS recursion~\eqref{eq:ems} is known, $\varepsilon$ should be chosen so that~\eqref{eq:smc_kde1} matches this knowledge. If no information is known on the expected smoothness, the level of smoothing introduced could be picked by cross validation, comparing, e.g., the reconstruction accuracy or smoothness.
In addition, one could allow extra flexibility by letting $K$ change at each iteration: e.g., allowing larger moves in early iterations can be beneficial in standard SMC settings to improve stability and ergodicity; alternatively one could choose the smoothing parameter adaptively using information on the smoothness of the current estimate.

We end this section by identifying a further degree of freedom which can be exploited to improve performance: a variance reduction can be achieved by averaging over several $Y_n^i$ when computing the approximated potentials $\weightN$. At time $n$, draw $M$ samples $Y_n^{ij}$, $j=1, \ldots, M$ without replacement for each particle $i=1, \ldots, N$ and compute the approximated potentials by averaging over the $M$ replicates
\begin{equation*}
G_n^{N, M}(X_n^i, Y_n^i) = \frac1M \sum_{j=1}^M \frac{g(Y_n^{ij}\mid X_{n}^i)}{h^N_n(Y_n^{ij})}.
\end{equation*}
This incurs an $O(MN)$ computational cost and can be justified by further extending the state space to $\X\times \Y^M$.
Unfortunately, the results on the optimal choice of $M$ obtained for pseudo-marginal methods (e.g. \cite{pitt2012some}) cannot be applied here, as the estimates of $\weight$ given by~\eqref{eq:potentialN} are \emph{not} unbiased.
In the examples shown in Section~\ref{sec:examples} we resample without replacement $M$ samples from $\mathbf{Y}$ where $M$ is the smallest between $N$ and the size of $\mathbf{Y}$, but smaller values of $M$ could be considered (see Appendix~\ref{sec:analytically_tractable}).

\subsubsection{Comparison with EMS}
The discretized EMS~\eqref{eq:ems_discrete} and Algorithm~\ref{alg:fpsmc} both approximate the EMS recursion~\eqref{eq:ems}.
There are two main aspects under which the SMC implementation of EMS is an improvement with respect to the one obtained by brute-force discretization: the information on $h$ which is needed to run the algorithm and the scaling with the dimensionality of the domain of $f$.

The discretized EMS~\eqref{eq:ems_discrete} requires the value of $h$ on each of the $D$ bins of the space discretization of $\Y$, when we only have a sample $\mathbf{Y}$ from $h$, as it is the case in most applications \citep{delaigle2008alternative, miao2018identifying, goldstein2009reconstructing, gostic2020practical, marschner2020back, hall2005nonparametric}, $h$ should then be approximated through a histogram or a kernel density estimator as in the Iterative Bayes algorithm \citep{ma2011indirect}. On the contrary, Algorithm~\ref{alg:fpsmc} does not require this additional approximation and naturally deals with samples from $h$.
In Section~\ref{sec:indirect_density_estimation} we show on a one dimensional example that the brute-force discretization~\eqref{eq:ems_discrete} struggles at recovering the shape of a bimodal distribution while the SMC implementation achieves much better performances in terms of accuracy. In addition, increasing the number of bins for EMS has a milder effect on the accuracy than increasing the number of particles in the SMC implementation.

Similar considerations apply when $\X, \Y$ are higher dimensional (i.e. $d_{\X}\geq 2$). The number of bins $B$ in the EMS recursion~\eqref{eq:ems_discrete} necessary to achieve reasonable accuracy increases exponentially with $d_{\X}$, resulting in higher runtime which quickly exceed those needed to run Algorithm~\ref{alg:fpsmc}. On the contrary the convergence rate for SMC remains $N^{-1/2}$, and although the associated constants may grow with $d_{\X}$, its performance is shown to scale better with dimension than EMS in Appendix~\ref{app:pdim}.

\subsection{Convergence properties}
\label{sec:results}

As the potentials~\eqref{eq:potential} cannot be computed exactly but need to be estimated, the convergence results for standard SMC (e.g., \citet{del2013mean}) do not hold. We present here a strong law of large numbers (SLLN) and $\lp$ error estimates for our particle approximation of the EMS and also provide theoretical guarantees for the estimator~\eqref{eq:smc_kde1}.

\subsubsection{Strong law of large numbers}
\label{sec:slln}
For simplicity, we only consider multinomial resampling \citep{gordon1993novel}. Lower variance resampling schemes can be employed but considerably complicate the theoretical analysis (\citet[page 336]{douc2014nonlinear}, \cite{ gerber2019negative}).
Compared to the SLLN proof for standard SMC methods, we need to analyze here the contribution of the additional approximation introduced by using $\weightN$ instead of $\weight$ and then combine the results with existing arguments for standard SMC; see, e.g., \cite{miguez2013convergence}.

The SSLN is stated in Corollary~\ref{prop:slln}. This result follows from the $\lp$ inequality in Proposition~\ref{prop:lp}, the proof of which is given in Appendix~\ref{app:lp} and follows the inductive argument of \citet{crisan2002survey, miguez2013convergence}.
Both results are proved for bounded measurable test functions $\testfn$, a set we denote $\measurable(\HH)$.

As a consequence of~\ref{a:g}, the potentials $\weight$ and $\weightN$ are bounded and bounded away from 0 (see Lemma~\ref{lemma:potential} in Appendix~\ref{app:lp}), a strong mixing condition that is common in the SMC literature and is satisfied in most of the applications which we have considered. 

\begin{prop}[$\lp$-inequality]
\label{prop:lp}
Under~\ref{a:space}, \ref{a:g} and~\ref{a:k}, for every $n\geq 1$ and every $p\geq 1$ there exist finite constants $\widehat{C}_{p, n}, \widetilde{C}_{p,n}$ such that
\begin{align}
\label{eq:lp4kde}
  \Exp\left[\vert\bgN(\predictiveN)(\testfn) - \bg(\predictive)(\testfn)\vert^p\right]^{1/p} \leq& \widehat{C}_{p,n}\frac{\supnorm{\testfn}}{\sqrt{N}}\\
  \textrm{ and } \qquad\qquad
\label{eq:lpthm}
\Exp\left[\vert\predictiveN(\testfn) - \predictive(\testfn)\vert^p\right]^{1/p} \leq& \widetilde{C}_{p,n}\frac{\supnorm{\testfn}}{\sqrt{N}},
\end{align}
for every bounded measurable function $\testfn\in \measurable(\HH)$, where the expectations are taken with respect to the law of all random variables generated within the SMC algorithm.
\end{prop}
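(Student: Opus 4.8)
The plan is to prove \eqref{eq:lp4kde} and \eqref{eq:lpthm} together by induction on $n$, interleaving them so that the inductive hypothesis ``\eqref{eq:lpthm} holds at time $n$'' is first used to derive \eqref{eq:lp4kde} at time $n$ (the weighting step), and then propagated through resampling and mutation to give \eqref{eq:lpthm} at time $n+1$. For the base case $n=1$ the pairs $(\widetilde{X}_1^i,\widetilde{Y}_1^i)$ are i.i.d.\ draws from $\eta_1$, so the Marcinkiewicz--Zygmund inequality yields \eqref{eq:lpthm} at time $1$ immediately. Throughout I rely on the consequence of \ref{a:g} that, since $f_n$ and $\predictiveNX$ are probabilities and $1/m_g\le g\le m_g$, both $h_n(y)$ and $h_n^N(y)$ lie in $[1/m_g,m_g]$; hence the exact and approximate potentials $\weight$ and $\weightN$ are bounded above and away from zero (the potentials lemma), which is exactly what keeps every self-normalising denominator under control.

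For the weighting step I introduce the intermediate quantity $\bg(\predictiveN)(\testfn)=\predictiveN(\weight\,\testfn)/\predictiveN(\weight)$, which pairs the \emph{exact} potential with the \emph{particle} measure, and split
\[
\bgN(\predictiveN)(\testfn)-\bg(\predictive)(\testfn)=\big[\bgN(\predictiveN)(\testfn)-\bg(\predictiveN)(\testfn)\big]+\big[\bg(\predictiveN)(\testfn)-\bg(\predictive)(\testfn)\big].
\]
The second bracket is a standard self-normalised importance sampling error with a fixed bounded potential: writing it as $\predictiveN\big(\weight(\testfn-\bg(\predictive)(\testfn))\big)/\predictiveN(\weight)$ and bounding $\predictiveN(\weight)$ below reduces it, via the induction hypothesis \eqref{eq:lpthm} applied to the bounded function $\weight(\testfn-\bg(\predictive)(\testfn))$, to a term of order $1/\sqrt N$.

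The first bracket is the only genuinely new term, as it is where the bias of $\weightN$ enters. Since $\weightN-\weight$ evaluated at $(x,y)$ equals $g(y\mid x)\big(1/h_n^N(y)-1/h_n(y)\big)$ and the denominators are bounded below, I linearise $|1/h_n^N(y)-1/h_n(y)|\le m_g^2\,|h_n(y)-h_n^N(y)|$, reducing everything to controlling $|h_n(y)-h_n^N(y)|=|\predictiveX(g(y\mid\cdot))-\predictiveNX(g(y\mid\cdot))|$. For each fixed $y$ this is bounded in $\lp$ by $\widetilde C_{p,n}m_g/\sqrt N$ through the marginal form of the induction hypothesis (a bounded test function on $\X$ lifts to $\HH$ with the same sup-norm). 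Because each $Y_n^i\sim h$ is drawn independently of the particle locations $\{X_n^k\}_{k=1}^N$, the random evaluation point can be integrated out, $\Exp[|h_n(Y_n^i)-h_n^N(Y_n^i)|^p]=\int\Exp[|h_n(y)-h_n^N(y)|^p]\,h(y)\,\dy\le(\widetilde C_{p,n}m_g/\sqrt N)^p$, and a Minkowski step over the $N$ particles together with the uniform potential bounds gives the required $O(1/\sqrt N)$ control of the first bracket.

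Finally, resampling and mutation are handled exactly as in the standard inductive SMC argument of \citet{crisan2002survey,miguez2013convergence}. Conditional on the weighted sample, multinomial resampling produces conditionally i.i.d.\ draws from $\bgN(\predictiveN)$, so a conditional Marcinkiewicz--Zygmund step adds a term of order $1/\sqrt N$ without changing the conditional mean; mutation through $M_{n+1}$ is treated identically using that $M_{n+1}\testfn$ is bounded by $\supnorm{\testfn}$, and applying \eqref{eq:lp4kde} (just established at time $n$) to the test function $M_{n+1}\testfn$ closes the loop to \eqref{eq:lpthm} at time $n+1$. Assembling the three contributions by the triangle inequality in $\lp$ gives finite constants for each fixed $n$. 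The single delicate point is the first bracket above: it is the only place the bias of $\weightN$ appears, and controlling it cleanly depends on the uniform lower bound on $h_n^N$ from \ref{a:g} (to linearise the reciprocal difference) and on the independence of $Y_n^i$ from the particle locations (to integrate the fixed-$y$ rate against $h$).
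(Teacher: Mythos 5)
Your proposal is correct and follows essentially the same route as the paper: the same splitting of the weighting error through the intermediate quantity $\bg(\predictiveN)$, the same potential bounds $\weight,\weightN\in[1/m_g^2,m_g^2]$ and linearisation of $1/h_n^N-1/h_n$ to reduce the bias term to $\predictiveX(g(y\mid\cdot))-\predictiveNX(g(y\mid\cdot))$, the same use of the independence of the $Y_n^i$ from the particle locations to integrate out the random evaluation point, and the same standard resampling/mutation steps closing the induction. No gaps.
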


The SLLN follows from the $\lp$-inequality using a standard Borel-Cantelli argument (see, e.g. \citet[Appendix D]{boustati2020generalised} for a reference in the context of SMC):
\begin{corollary}[Strong law of large numbers]
\label{prop:slln}
Under~\ref{a:space}, \ref{a:g} and~\ref{a:k}, for all $n\geq1$ and for every $ \testfn\in \measurable(\HH)$, we have almost surely as $N\to\infty$:
\begin{align*}
\bgN(\predictiveN)(\testfn) \rightarrow \bg(\predictive)(\testfn)
  \qquad \textrm{ and }  \qquad
 \predictiveN(\testfn) \rightarrow \predictive(\testfn).
\end{align*}
\end{corollary}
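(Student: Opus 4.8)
The plan is to deduce both almost-sure statements directly from the $\lp$-inequality of Proposition~\ref{prop:lp} via a standard Borel--Cantelli argument, which is precisely the route signposted by the citation to \citet[page 254]{shiryaev1996probability}. Fix $n \geq 1$ and a bounded measurable test function $\testfn \in \measurable(\HH)$. The two claims are controlled by the two bounds~\eqref{eq:lp4kde} and~\eqref{eq:lpthm} respectively, and since the arguments are identical I would carry out the details only for the second, $\predictiveN(\testfn) \to \predictive(\testfn)$, and then transcribe.

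The crucial feature of Proposition~\ref{prop:lp} is that the $N^{-1/2}$ rate holds for \emph{every} $p \geq 1$. Raising the bound to the $p$-th power and applying Markov's inequality gives, for any fixed $\epsilon > 0$,
\[
\pr\bigl(|\predictiveN(\testfn) - \predictive(\testfn)| > \epsilon\bigr) \leq \frac{\Exp\bigl[|\predictiveN(\testfn) - \predictive(\testfn)|^p\bigr]}{\epsilon^p} \leq \frac{\widetilde{C}_{p,n}^p\,\supnorm{\testfn}^p}{\epsilon^p\, N^{p/2}}.
\]
I would then choose any $p > 2$ (for instance $p = 3$), so that $\sum_{N \geq 1} N^{-p/2} < \infty$. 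By the first Borel--Cantelli lemma the event $\{|\predictiveN(\testfn) - \predictive(\testfn)| > \epsilon\}$ occurs for only finitely many $N$ almost surely, whence $\limsup_N |\predictiveN(\testfn) - \predictive(\testfn)| \leq \epsilon$ almost surely. Intersecting this conclusion over a countable sequence $\epsilon_k \downarrow 0$ yields $\predictiveN(\testfn) \to \predictive(\testfn)$ almost surely, and the same computation using~\eqref{eq:lp4kde} delivers $\bgN(\predictiveN)(\testfn) \to \bg(\predictive)(\testfn)$.

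I do not expect a substantive obstacle at this stage: all the genuine difficulty---in particular the bias of the approximate potentials $\weightN$ relative to $\weight$, and the inductive propagation of the error through the mutation, weighting and resampling steps---has already been absorbed into the proof of Proposition~\ref{prop:lp}. The only point requiring mild care is that the null set on which convergence may fail depends on $\testfn$ and $n$; since the corollary asserts only pointwise (in $\testfn$) convergence this causes no trouble, and the countable intersection over $\epsilon_k$ is all that is needed to pass from the family of tail estimates to a single almost-sure limit.
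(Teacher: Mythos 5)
Your argument is correct and is exactly the standard deduction the paper invokes by citing \citet[page 254]{shiryaev1996probability}: Markov's inequality applied to the $p$-th moment bound of Proposition~\ref{prop:lp} with any $p>2$, summability of $N^{-p/2}$, Borel--Cantelli, and a countable intersection over $\epsilon_k \downarrow 0$. The paper leaves these details implicit, so your proposal simply spells out the same route.
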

A standard approach detailed in Appendix~\ref{app:wcm} yields convergence of the sequence $\lbrace\eta_n^N\rbrace_{n\geq 1}$ itself, showing that the particle approximations of the distributions converge to the sequence in~\eqref{eq:fkflow}, whose marginal over $x$ satisfies the EMS recursion~\eqref{eq:ems}.
\begin{prop}
\label{prop:asw}
Under~\ref{a:space}, \ref{a:g} and~\ref{a:k}, for all $n\geq1$, $\predictiveN$ converges weakly to $\predictive$ with probability 1.
\end{prop}

\subsubsection{Convergence of kernel density estimator}
\label{sec:convergence_kde}

Under standard assumptions on the bandwidth $\bwN$ we can show that the estimator $f_{n+1}^N(x)$ converges in $\mathbb{L}_1$ to $f_{n+1}(x)$ and its mean integrated square error (MISE) goes to 0 as $N$ goes to infinity as shown in Appendix~\ref{sec:kde_misepf}: 
\begin{prop}\label{cor:kde_mise}
Under~\ref{a:space}, \ref{a:g} and~\ref{a:k}, if $\bwN\rightarrow0$ as $N\rightarrow \infty$, $f^N_{n+1}$ converges almost surely to $f_{n+1}$ in $\mathbb{L}_1$ for every $n\geq 1$:
\begin{equation}
\label{eq:kde_l11}
\lim_{N\to\infty} \int_{\X} \vert f_{n+1}^N(x) - f_{n+1}(x)\vert \dx \overset{\textrm{a.s.}}{=} 0;
\end{equation}
and the MISE satisfies
\begin{equation}
\label{eq:kde_mise1}
\lim_{N\to\infty} \mise(f_{n+1}^N) \equiv \lim_{N\to\infty} \Exp\left[\int_{\X} \vert f_{n+1}^N(x) - f_{n+1}(x)\vert^2\ \dx\right] {=} 0.
\end{equation}
\end{prop}

\section{Examples}
\label{sec:examples}

This section shows the results obtained using the SMC implementation of the recursive formula~\eqref{eq:ems} on some common examples. Two additional examples are investigated in Appendix~\ref{sec:additional_examples}.
We consider a simple density estimation problem and a realistic example of image restoration in positron emission tomography \citep{webb2017introduction}. In the first example, the analytic form of $h$ is known and is used to implement the discretized EM and EMS. IB and SMC are implemented using a fixed sample $\mathbf{Y}$ drawn from $h$.
For image restoration problems we consider the observed distorted image as the empirical distribution of a sample $\mathbf{Y}$ from $h$ and resample from it at each iteration of line 2 in Algorithm~\ref{alg:fpsmc}.

The initial distribution $f_1$ is uniform over $\X$ and the number of iterations is either fixed to $n=100$ (we observed that convergence to a fixed point occurs in a smaller number of steps for all algorithms; see Appendix~\ref{sec:analytically_tractable}) or determined using the stopping criterion~\eqref{eq:stop}.
For the smoothing kernel $K$, we use isotropic Gaussian kernels with marginal variance $\varepsilon^2$. The bandwidth $\bwN$ is the plug-in optimal bandwidth for Gaussian distributions where the effective sample size~\eqref{eq:ess} is used instead of the sample size $N$ \citep[page 45]{silverman1986density}.

The deterministic discretization of EM and EMS (\eqref{eq:em_discrete} and~\eqref{eq:ems_discrete} respectively) is obtained by considering $B$ equally spaced bins for $\X$ and $D$ for $\Y$. The number of bins, and the number of particles, $N$, for SMC vary between examples.
In the first example, the choice of $D, B$ and $N$ is motivated by a comparison of error and runtime.
For the image restoration problems, $D, B$ are the number of pixels in each image, while
the number of particles $N$ is chosen to achieve a good trade-off between reconstruction accuracy and runtime.

For the SMC implementation, we use the adaptive multinomial resampling scheme described by \citet[page 35]{liu2001monte}. At each iteration the effective sample size~\eqref{eq:ess} is evaluated and multinomial resampling is performed if $\ESS < N/2 $.
This choice is motivated by the fact that up to adaptivity (which we anticipate could be addressed by the approach of \cite{delmoral2012adaptive}) this is the setting considered in the theoretical analysis of Section~\ref{sec:results} and we observed only modest improvements when using lower variance resampling schemes (e.g. residual resampling, see \citet{liu2001monte}) instead of multinomial resampling.
The accuracy of the reconstructions is measured through the integrated square error
\begin{equation}
\label{eq:ise}
\ise(f_{n+1}^N) = \int_{\X} \left( f(x) - f^N_{n+1}(x)\right)^2 \ \dx.
\end{equation}

Although the density estimation example of Section~\ref{sec:indirect_density_estimation} and the example considered in  Appendix~\ref{sec:analytically_tractable} do not satisfy conditions~\ref{a:space} or 
\ref{a:g} under which our theoretical guarantees hold; we nonetheless observe good results in terms of reconstruction accuracy and smoothness, demonstrating that assumption~\ref{a:g} is not necessary and could be relaxed (see also Appendix~\ref{app:lb}). The other examples \emph{do} satisfy all of our theoretical assumptions.
\subsection{Indirect density estimation}
\label{sec:indirect_density_estimation}
The first example is the Gaussian mixture model used in \cite{ma2011indirect} to compare the Iterative Bayes (IB) algorithm with EM. Take $\X = \Y = \real$ (although note that $|1 - \int_0^1 f(x) \dx| < 10^{-30}$ and restricting out attention to $[0,1]$ would not significantly alter the results) and
\begin{align*}
& f(x)    = \frac{1}{3}\N(0.3, 0.015^2) + \frac{2}{3}\N(0.5, 0.043^2),\\
& g(y \mid x)  = \N(x, 0.045^2),\\
& h(y)    = \frac{1}{3}\N(0.3, 0.045^2 + 0.015^2) + \frac{2}{3}\N(0.5, 0.045^2 + 0.043^2).
\end{align*}
The initial distribution $f_1$ is Uniform on $[0,1]$ and the bins for the discretized EMS are $B$ equally spaced intervals in $[0,1]$, noting that discretization schemes essentially require known compact support and this interval contains almost all of the probability mass.
We run Algorithm~\ref{alg:fpsmc} assuming that we have a sample $\mathbf{Y}$ of size $10^3$ from $h$ from which we re-sample $M=\min(N, 10^3)$ times without replacement at each iteration of line 2.
We analyze the influence of the number of bins $B$ and of the number of particles $N$ on the integrated square error and on the runtime for the deterministic discretization of EMS~\eqref{eq:ems_discrete} and for the SMC implementation of EMS (Figure~\ref{fig:gaussian_mixture_runtimes}). We compare the two implementations of EMS with a class of estimators for deconvolution problems, deconvolution kernel density estimators with cross validated bandwidth (DKDE-cv;  \cite{stefanski1990deconvolving}) and plug-in bandwidth (DKDE-pi; \cite{delaigle2004practical})\footnote{MATLAB code is available on the authors' web page: \url{https://researchers.ms.unimelb.edu.au/~aurored/links.html\#Code}}. These estimators take as input a sample from $h$ of size $N$ and output a kernel density estimator for $f$. 

\begin{figure}[t]
\centering
\resizebox{0.6\textwidth}{!}{%
\begin{tikzpicture}[every node/.append style={font=\normalsize}]
\node (img1) {\includegraphics[width=0.67\textwidth]{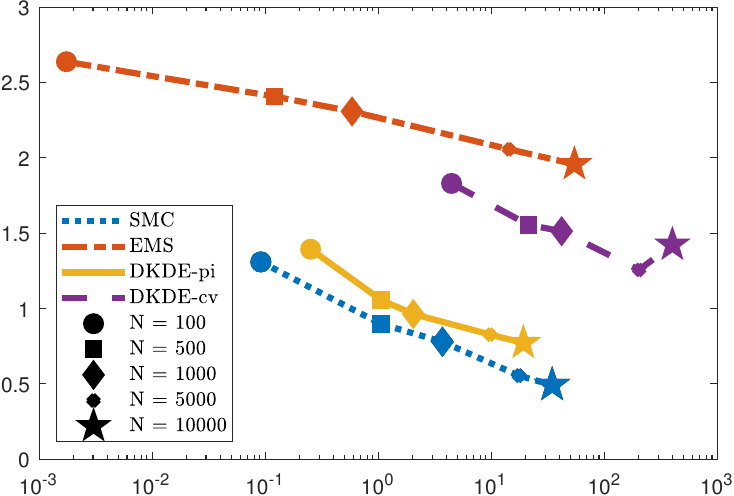}};
\node[below=of img1, node distance = 0, yshift = 1cm] { Runtime / s};
  \node[left=of img1, node distance = 0, rotate=90, anchor = center, yshift = -0.7cm] {$\ise(f_{n+1}^N)$};
\end{tikzpicture}
}
\caption{Average $\ise(f_{n+1}^N)$ and runtime for 1,000 repetitions of discretized EMS, SMC and DKDE. The number of bins and of particles/samples $N$ ranges between $10^2$ and $10^4$.}
\label{fig:gaussian_mixture_runtimes}
\end{figure}

The discretized EMS has the lowest runtime for fixed $N$, however $\ise(f_{n+1}^N)$ is the highest and finer discretizations for EMS do not significantly improve accuracy.
The runtime of DKDE are closer to those of the SMC implementation, however, the SMC implementation gives better results in terms of $\ise(f_{n+1}^N)$ for any particle size and, indeed, for given computational cost.
We set $\varepsilon=10^{-3}$, for both EMS and SMC, somewhat arbitrarily, based on the support of the target in this example; where that is not possible cross validation could be used --- and might be expected to provide better reconstructions --- at the expense of some additional computational cost. We did not find solutions overly sensitive to the precise value of $\varepsilon$ (see Appendix~\ref{sec:analytically_tractable}). A significant portion of the runtime of DKDE-cv is needed to obtain the bandwidth through cross validation and in this sense the comparison may not be quite fair, but the use of the much cheaper plug-in estimates of bandwidth within DKDE-pi also led to poorer estimates at given cost than those provided by the SMC-EMS algorithm.

Secondly, we compare the reconstructions provided by the proposed SMC scheme with those given by deterministic discretization of the EM iteration~\eqref{eq:em_discrete} with exact $h$ and when only samples are available (IB) and deterministic discretization of the EMS iteration~\eqref{eq:ems_discrete}.

Having observed a small decrease in $\ise(f_{n+1}^N)$ for large $B$, we fix the number of bins $B=D=100$.
For the SMC scheme, we compare $N = 500$, $N = 1,000$ and $N = 5,000$. We discard $N=10,000$, as it shows little improvement in $\ise(f_{n+1}^N)$ with respect to $N=5,000$, and $N=100$, because of the higher $\ise(f_{n+1}^N)$.
We draw a sample $\mathbf{Y}$ from $h$ of size $10^3$ and we use this sample to get a kernel density estimator for the IB algorithm, compute the DKDE and (re)sample points at line 2 of Algorithm~\ref{alg:fpsmc}.

We set $\varepsilon = 10^{-3}$ and compare the smoothing matrix obtained by discretization of the Gaussian kernel (EMS ($\Kop$)) with the three-point smoothing proposed in \citet[Section 3.2.2]{silverman1990smoothed}, where the value $f_b^{(n+1)}$ is obtained by a weighted average over the values $f_{\kappa}^{(n)}$ of the two nearest neighbors (the third point is $f_b^{(n)}$), with weights proportional to the distance $\vert \kappa - b\vert$
\begin{align*}
\Kop_{b\kappa} = 2^{-2}\binom{3 - 1}{\kappa - b + (3-1)/2}.
\end{align*}

The reconstruction process is repeated 1,000 times and the reconstructions are compared by computing their means and variances, the integrated squared error~\eqref{eq:ise} and the Kullback--Leibler divergence between $h$ and the reconstruction of $h$ obtained by convolution of $f^N_{n+1}$ with $g$, $\int_{\X}f^N_{n+1}(x)g(y \mid x)\ \dx$, (Table~\ref{tab:gaussian_mixture}).
To characterize the roughness of $f^N_{n+1}$, we evaluate both $f^N_{n+1}$ and $f$ at the 100 bin centers $x_c$ and for each bin center we approximate (with 1,000 replicates) the mean squared error (MSE)
\begin{equation}
\label{eq:mse}
\mse(x_c) = \Exp\left[\left( f(x_c) - f_{n+1}^N(x_c)\right)^2\right].
\end{equation}
Table~\ref{tab:gaussian_mixture} shows the 95th percentile w.r.t. the 100 bin centers $x_c$.

\begin{table}[t]
\centering
\caption{Estimates of mean, variance, ISE, 95th-percentile of MSE, KL-divergence and runtime for 1,000 repetitions of EM, EMS, IB, SMC and DKDE for the Gaussian mixture example. The mean of $f$ is 0.43333, the variance is 0.010196. \textbf{Bold} indicates best values.}
\footnotesize{
\begin{tabular}{lccccccc}
\multirow{2}{*} &Mean & Variance & $\ise(f_{n+1}^N)$ & $\mse(x_c)$  & $\KL$ & $\log_{10}$ \\
& & & & (95th) & & Runtime / s\\
\hline\noalign{\smallskip}
EM &0.36667&\textbf{0.010}&3.26&16.32&\textbf{2299}&\textbf{-6.01}\\
EMS ($\Kop$) &0.36646&0.012&2.41&8.20&2355&-5.90\\
EMS (3-point) &0.3668&0.011&1.58&13.04&2303&-5.88\\
IB &\textbf{0.43304}&0.011&1.71&10.17&2489&-5.29\\
SMC (500) &0.43303&0.011&0.90&3.42&2484&0.63\\
SMC (1000) &0.43302&0.011&0.78&3.33&2483&1.87\\
SMC (5000) &0.43302&0.011&\textbf{0.55}&\textbf{2.17}&2485&3.47\\
DKDE-pi &0.43288&0.012&0.96&3.38&2483&0.81\\
DKDE-cv &0.43287&0.014&1.5&4.76&2503&4.15\\
\end{tabular}
}
\label{tab:gaussian_mixture}
\end{table}

The discretized EM~\eqref{eq:em_discrete} gives the best results in terms of Kullback--Leibler divergence (restricting to the $[0,1]$ interval and computing by numerical integration). This is not surprising, as IB is an approximation of EM when the analytic form of $h$ is not known, and the EMS algorithms (both those with the deterministic discretization~\eqref{eq:ems_discrete} and those with the stochastic one given by the SMC scheme) do not seek to minimize the $\KL$ divergence, but to provide a more regular solution.
The solutions recovered by EM and IB have considerably higher $\ise$ than that given by the other algorithms and are considerably worse than the other algorithms at recovering the smoothness of the solution.

SMC is generally better at recovering the global shape of the solution ($\ise$ is at least two times smaller than EM and EMS ($\Kop$) and about half than EMS (3-point) and IB) and the smoothness of the solution (the 95th-percentile for $\mse(x_c)$ is at least two times smaller).
For the discretized EMS~\eqref{eq:ems_discrete} and the SMC implementation the estimates of the variance are higher than those of EM, this is a consequence of the addition of the smoothing step and can be controlled by selecting smaller values of $\varepsilon$.
DKDEs behave similarly to SMC, however their reconstruction accuracy and smoothness are slightly worse than those of SMC (even when both algorithms use the same sample size $N=1,000$). In particular, DKDE-cv has runtime of the same order of that of SMC but achieves considerably worse results.
IB, SMC and DKDE give similar values for the $\KL$ divergence.
The slight increase observed for the SMC scheme with $N=5,000$ is apparently due to the sensitivity of this divergence to tail behaviors; taking a bandwidth independent of $N$ eliminated this effect (results not shown).

\subsection{Positron emission tomography}
\label{sec:pet}

Positron Emission Tomography (PET) is a medical diagnosis technique used to analyze internal biological processes from radial projections to detect medical conditions such as schizophrenia, cancer, Alzheimer's disease and coronary artery disease \citep{phelps2000positron}.

The data distribution of the radial projections $h(\phi, \xi)$ is defined on $\Y=[0, 2\pi]\times[-R, R]$ for $R>0$ and is linked to the cross-section image of the organ of interest $f(x, y)$ defined on the 2D square $\X=[-r, r]^2$ for $r>0$ through the kernel $g$ describing the geometry of the PET scanner.
The Markov kernel $g(\phi, \xi \mid x, y)$ gives the probability that the projection onto $(\phi, \xi)$ corresponds to point $(x, y)$ \citep{vardi1985statistical} and is modeled as a zero-mean Gaussian distribution with small variance (we use $\sigma^2 = 0.02^2$) to mimic the alignment between projections and recovered emissions (see Appendix~\ref{app:pet}). As $g$ is defined on $\X\times \Y$ where $\X=[-r, r]^2$ and $\Y=[0, 2\pi]\times[-R, R]$, assumption~\ref{a:g} is satisfied. 

The data used in this work are obtained from the reference image in the final panel of Figure~\ref{fig:pet_reconstruction}, a simplified imitation of the brain's metabolic activity (e.g. \cite{vardi1993image}).
The collected data are the values of $h$ at 128 evenly spaced projections over $360^\circ$ and 185 values of $\xi$ in $[-92, 92]$ to which Poisson noise is added.
Figure~\ref{fig:pet_reconstruction} shows the reconstructions obtained with the SMC scheme with smoothing parameter $\varepsilon = 10^{-3}$ and number of particles is $N= 20,000$. Convergence to a fixed point occurs in less than 100 iterations, in fact the criterion~\eqref{eq:stop} with $\zeta(f^N_n)=\int_{\X}\vert f_{n}^N(x)\vert^2\dx$  and $m=15$ stops the iteration at $n=15$. The $\ise$ between the original image and the reconstructions stabilizes around 0.08. Additional results and model details are given in the supplementary material.

The results above show that the SMC implementation of the EMS recursion achieves convergence in a small number of steps ($\approx$ 12 minutes on a standard laptop) and that, contrary to EM \citep[Section 4.2]{silverman1990smoothed}, these reconstructions are smooth and do not deteriorate with the number of iterations.
In addition, contrary to standard reconstruction methods, e.g. filtered back-projection, ordered-subset EM, Tikhonov regularization (see, e.g., \cite{tong2010image}) the SMC implementation does not require that a discretization grid is fixed in advance.

\begin{figure}
\centering
\resizebox{0.7\textwidth}{!}{%
\begin{tikzpicture}[baseline, every node/.append style={font=\normalsize}]
\node (img1) {\includegraphics[width=0.25\textwidth]{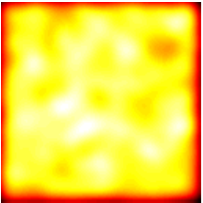}};
\node[below=of img1, node distance = 0, yshift = 1cm] (label1){Iteration 1};
\end{tikzpicture}
\begin{tikzpicture}[baseline, every node/.append style={font=\normalsize}]
\node[right=of img1] (img2) {\includegraphics[width=0.25\textwidth]{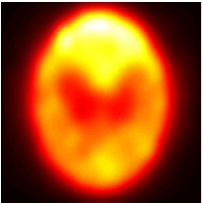}};
\node[below=of img2, node distance = 0, yshift = 1cm] {Iteration 5};
\end{tikzpicture}
\begin{tikzpicture}[baseline, every node/.append style={font=\normalsize}]
\node[right=of img2] (img3) {\includegraphics[width=0.25\textwidth]{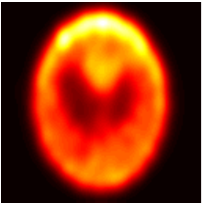}};
\node[below=of img3, node distance = 0, yshift = 1cm] {Iteration 10};
\end{tikzpicture}
\begin{tikzpicture}[baseline, every node/.append style={font=\normalsize}]
\node[right=of img3] (img4) {\includegraphics[width=0.25\textwidth]{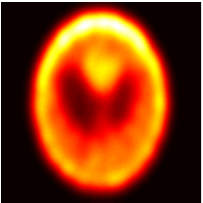}};
\node[below=of img4, node distance = 0, yshift = 1cm] {Iteration 15};
\end{tikzpicture}
}
\resizebox{0.7\textwidth}{!}{%
\begin{tikzpicture}[baseline, every node/.append style={font=\normalsize}]
\node (img5) {\includegraphics[width=0.25\textwidth]{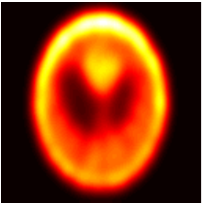}};
\node[below=of img5, node distance = 0, yshift = 1cm] {Iteration 20};
\end{tikzpicture}
\begin{tikzpicture}[baseline, every node/.append style={font=\normalsize}]
\node[right=of img5] (img6) {\includegraphics[width=0.25\textwidth]{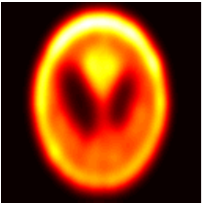}};
\node[below=of img6, node distance = 0, yshift = 1cm] {Iteration 50};
\end{tikzpicture}
\begin{tikzpicture}[baseline, every node/.append style={font=\normalsize}]
\node[right=of img6] (img7) {\includegraphics[width=0.25\textwidth]{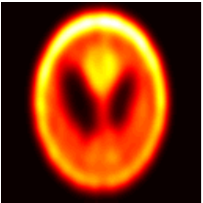}};
\node[below=of img7, node distance = 0, yshift = 1cm] {Iteration 100};
\end{tikzpicture}
\begin{tikzpicture}[baseline, every node/.append style={font=\normalsize}]
\node[right=of img7] (img8) {\includegraphics[width=0.25\textwidth]{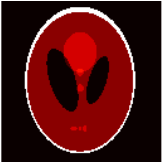}};
\node[below=of img8, node distance = 0, yshift = 1cm] {Shepp--Logan phantom};
\end{tikzpicture}
}
\caption{Reconstruction of the Shepp--Logan phantom with $N = 20,000$ particles and $\varepsilon=0.001$. The stopping criterion~\eqref{eq:stop} is satisfied at iteration 15.}
\label{fig:pet_reconstruction}
\end{figure}

\section{Conclusion}
We have proposed a novel particle algorithm to solve a wide class of Fredholm equations of the first kind.  This algorithm has been obtained by identifying a close connection between the continuous EMS recursion and the dynamics~\eqref{eq:smc}. It performs a stochastic discretization of the EMS recursion and can be naturally implemented when only samples from the distorted signal $h$ are available. Additionally, it does not require the assumption of piecewise constant solutions common to deterministic discretization schemes.

Having established that the continuous EMS recursion admits a fixed point, we have studied the asymptotic properties of the proposed particle scheme, showing that the empirical measures obtained by this scheme converge almost surely in the weak topology to those given by the EMS recursion as the number of particles $N$ goes to infinity. This result is a consequence of the $\lp$ convergence of expectations and the strong law of large numbers which we extended to the particle scheme under study. We have also provided theoretical guarantees on the proposed estimator for the solution $f$ of the Fredholm integral equation.
This algorithm outperforms the state of the art in this area in several examples.

\section*{Supplementary Material}
The supplementary material contains the analysis of the EMS map, proofs of all results and additional examples.
MATLAB code to reproduce all examples is available online at \if1\version{\url{https://anonymous.4open.science/r/31a22d2c-310f-48db-98a8-2e5744da40cb/}}
\else
{\url{https://github.com/FrancescaCrucinio/smcems}}\fi.

\bibliographystyle{agsm}
\bibliography{smcfe_biblio}

\newpage
\appendix
{\Large \center{\ourtitle:\\Supplementary Material\\}}
\if1\version{}
\else
{
\medskip
\centerline{\large Francesca R. Crucinio, Arnaud Doucet and Adam M. Johansen}
\medskip
}\fi

\section{Notation}
\label{app:notation}
For the convenience of the reader, we summarize the notation adopted in the following arguments. A slightly more technical presentation is adopted than within the main manuscript as a little care is required in order to obtain rigorous results.

We work on a probability space $(\Omega,\mathcal{A},\pr)$ rich enough to allow the definition of the particle system introduced in Section~\ref{sec:mf} for all $N\in\mathbb{N}$. All expectations and probabilities which are not explicitly associated with some other measure are taken with respect to $\pr$.

For any $\HH \subseteq \real^d$, we consider the Borel $\sigma$-algebra $B(\HH)$ with respect to the Euclidean norm, and we endow any product space with the product Borel $\sigma$-algebra.

Let the Banach space of real-valued bounded measurable functions on $\HH$, endowed with the supremum norm, $\supnorm{\testfn} = \sup_{u\in \HH} \left\vert\testfn(u)\right\vert$, be denoted by $\measurable(\HH)$.

Let $\mathcal{M}(\HH)$ be the Banach space of signed finite measures on $(\HH, B(\HH))$ endowed with the bounded Lipschitz norm (e.g. \citetapp[page 394]{dudley2002real})
\begin{equation}\label{eq:defTV}
\beta(\eta) := \sup_{\norm{\testfn}{BL}\leq 1}\left\lvert \int_{\HH} \eta(\dx)\testfn(x)\right\rvert,
\end{equation}
where $\norm{\cdot}{BL}$ denotes the bounded Lipschitz norm for bounded Lipschitz functions $\testfn$
\begin{align*}
    \norm{\testfn}{BL}:=\supnorm{\testfn}+\sup_{x\neq y}\frac{\vert \testfn(x)-\testfn(y)\vert}{\norm{x-y}{2}}.
\end{align*}

For ease of notation, for every measure $\eta\in\mathcal{M}(\HH)$ and every $\testfn\in \measurable(\HH)$ we denote the integral of $\testfn$ with respect to $\eta$ by $\eta(\testfn) := \int_{\HH} \eta(\du)\testfn(u)$.

We denote by $\mathcal{M}^+(\HH)\subset \mathcal{M}(\HH)$ the set of (unsigned) measures of nonzero mass and by $\mathcal{P}(\HH) \subset \mathcal{M}^+(\HH)$ the set of all probability measures on $(\HH, B(\HH))$.
For every $\eta \in \mathcal{P}(\HH)$ we have 
\begin{align*}
  \beta(\eta)=\sup_{\norm{\testfn}{BL}\leq 1}\left\lvert \int_{\HH} \eta(\dx)\testfn(x)\right\rvert  \leq \sup_{\norm{\testfn}{BL}\leq 1} \supnorm{\testfn}\eta(\HH)\leq 1.
\end{align*}

The $\beta$ norm metrizes weak convergence \citep[Theorem 11.3.3]{dudley2002real} in $\mathcal{M}(\X)$: for every $\mu \in \mathcal{M}(\HH)$, and sequence $\lbrace\mu_n\rbrace_{ n\geq 1}$ taking values in $\mathcal{M}(\HH)$, $\beta(\mu_n, \mu)\rightarrow 0$ is equivalent to
$\mu_n(\testfn)\rightarrow \mu(\testfn)$ for all continuous bounded functions $\testfn\in C_b(\HH)$.

For any $\eta\in \mathcal{M}^+(\HH)$ and any positive function $G$ integrable with respect to $\eta$ we denote by $\Psi_G(\eta)(\dx)$ the Boltzmann-Gibbs transform
\begin{equation*}
\Psi_G(\eta)(\dx) = \frac{1}{\eta(G)} G(x)\eta(\dx).
\end{equation*}

A Markov kernel $M$ from $\HH$ to $\HH$ induces two operators. One acts upon measures in $\mathcal{M} (\HH)$ and takes values in $\mathcal{M} (\HH)$ and is defined by
\begin{equation*}
\forall \eta \in \mathcal{M}(\HH) \quad \eta M(\cdot) = \int_{\HH} \eta(\du) M(u,\cdot)
\end{equation*}
and the other acts upon functions in $\measurable(\HH)$ and takes values in $\measurable(\HH)$ and may be defined as
\begin{equation*}
\forall u \in \HH \quad \forall \testfn \in \measurable(\HH) \quad M(\testfn)(u) = \int_{\HH}
M(u,\dv) \testfn(v).
\end{equation*}

For each $\omega\in\Omega$, we obtain a realization of the particle system with $N$ particles at time $n$ and a corresponding random measure denoted by $\predictiveN:\omega\in \Omega \mapsto \predictiveN(\omega)\in\mathcal{P}(\HH)$
\begin{align*}
\predictiveN(\omega)(\cdot) = \frac{1}{N}\sum_{i=1}^N \delta_{(X_n^i(\omega), Y_n^i(\omega))}(\cdot),
\end{align*}
where we suppress from the notation the dependence of $X_n^i(\omega)$ and $Y_n^i(\omega)$ upon $N$, as we shall throughout in the interest of readability.

\section{Existence of the Fixed Point}
\label{app:ems}
Let us formally define the EMS map as a map from the set of unsigned measures of nonzero mass to the set of probability measures, $\emsmap :\mathcal{M}^+(\X) \rightarrow \mathcal{P}(\X)$, such that
\begin{equation*}
\emsmap : \eta \mapsto \emsmap \eta := \int_{\X} \eta(\dx^\prime) K(x^\prime, \cdot)  \int_{\Y} \frac{g(y \mid x^\prime) h(y)}{\int_{\X} \eta(\dz)g(y \mid z)  }\ \dy
\end{equation*}
and the EM map, $\emmap :\mathcal{M}^+(\X) \rightarrow \mathcal{P}(\X)$ as in~\eqref{eq:emop}, slightly more formally as:
\begin{align}
\label{eq:emop2}
  \emmap(\eta)(\dx) &= \frac{1}{\eta(\bar{G}_{\eta})}\bar{G}_{\eta}(x) \eta(\dx),
\end{align}
where the normalizing constant $\eta(\bar{G}_{\eta})\equiv 1$ is introduced to highlight the connection with the particle methods introduced in Section~\ref{sec:mf}.
We introduce the smoothing operator, $\Kop:\mathcal{P}(\X) \rightarrow \mathcal{P}(\X)$, corresponding to the smoothing kernel in~\ref{a:k}
\begin{equation}\label{eq:smoothop}
\Kop: \eta \mapsto \eta K := \int_{\X} \eta(\dv) K(v, \cdot)
\end{equation}
and observe that $\emsmap \eta = \Kop\left( \emmap(\eta)\right) = \left(\emmap\eta\right)K$.

In order to prove that the EMS map admits a fixed point, a number of properties of the EM map, of the smoothing operator $\Kop$ and of the EMS map itself must be established.
We show that $\emsmap$ is a compact operator on $\mathcal{M}^+(\X)$ (Corollary~\ref{cor:compact}). To do so, we show that $\emmap$ is continuous and bounded (Proposition~\ref{prop:emop}) then we prove that $\Kop$ is compact (Proposition~\ref{prop:Kop}).
Compactness is needed to prove existence of a fixed point.

\subsection{Properties of the Continuous EMS Map}
\begin{prop}
\label{prop:emop}
Under~\ref{a:space} and~\ref{a:g}, the EM map $\emmap$ in~\eqref{eq:emop2} is a continuous and bounded operator on $\mathcal{M}^+(\X)$ endowed with the weak topology.
\end{prop}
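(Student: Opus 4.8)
The plan is to treat the two required properties separately, with boundedness being essentially immediate and continuity carrying the technical content. Since $\emmap$ maps into $\mathcal{P}(\X)$ by construction, its image lies in the unit sphere $\{\eta : \tvnorm{\eta}=1\}$, so $\emmap$ is trivially bounded once we verify that the normalization is well defined; the latter amounts to showing $\eta(\bar{G}_{\eta})$ is strictly positive and finite. I would therefore first record uniform bounds on $\bar{G}_{\eta}$ coming directly from~\ref{a:g}, and then establish continuity through a quotient argument.

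For the bounds, fix $\eta \in \mathcal{M}^+(\X)$ and note that for positive measures $\tvnorm{\eta} = \eta(\X)$. Using $g(y\mid x)\le m_g$ together with $\int_{\X}\eta(\dz)g(y\mid z)\ge \eta(\X)/m_g$ (and the reverse inequalities), one gets, uniformly in $x$,
\begin{equation*}
\frac{1}{m_g^2\,\eta(\X)} \le \bar{G}_{\eta}(x) \le \frac{m_g^2}{\eta(\X)},
\end{equation*}
whence integrating against $\eta$ yields $m_g^{-2} \le \eta(\bar{G}_{\eta}) \le m_g^2$. In particular the normalizer is bounded away from $0$ and $\infty$ independently of the mass of $\eta$, so $\emmap(\eta)$ is a well-defined probability measure and boundedness follows.

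For continuity at a fixed $\eta\in\mathcal{M}^+(\X)$, suppose $\eta_m\to\eta$ in total variation; since $|\eta_m(\X)-\eta(\X)|\le\tvnorm{\eta_m-\eta}$ and $\eta(\X)>0$, the masses $\eta_m(\X)$ are bounded away from $0$ for large $m$. The central estimate is the continuity of $\eta\mapsto\bar{G}_{\eta}$ in supremum norm. Writing the difference of reciprocals and using $|\eta_m(g(y\mid\cdot))-\eta(g(y\mid\cdot))|\le m_g\tvnorm{\eta_m-\eta}$ together with the lower bounds $\eta_m(g(y\mid\cdot)),\eta(g(y\mid\cdot))\ge\eta_m(\X)/m_g,\eta(\X)/m_g$, one obtains, uniformly in $x$,
\begin{equation*}
\supnorm{\bar{G}_{\eta_m}-\bar{G}_{\eta}} \le \frac{m_g^4}{\eta_m(\X)\,\eta(\X)}\,\tvnorm{\eta_m-\eta} \longrightarrow 0 ,
\end{equation*}
where the integral over $\Y$ contributes only $\int_{\Y}h(y)\dy=1$. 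With this in hand I would control the unnormalized measure $\mu_{\eta}(\dx):=\bar{G}_{\eta}(x)\eta(\dx)$ via the splitting $\bar{G}_{\eta_m}\eta_m-\bar{G}_{\eta}\eta=\bar{G}_{\eta_m}(\eta_m-\eta)+(\bar{G}_{\eta_m}-\bar{G}_{\eta})\eta$ and the elementary bound $\tvnorm{\psi\nu}\le\supnorm{\psi}\tvnorm{\nu}$, giving $\tvnorm{\mu_{\eta_m}-\mu_{\eta}}\to0$. Finally, since $\mu_{\eta}(\X)=\eta(\bar{G}_{\eta})\in[m_g^{-2},m_g^2]$ is bounded away from $0$ and $|\mu_{\eta_m}(\X)-\mu_{\eta}(\X)|\le\tvnorm{\mu_{\eta_m}-\mu_{\eta}}\to0$, the standard quotient decomposition
\begin{equation*}
\emmap(\eta_m)-\emmap(\eta)=\frac{\mu_{\eta_m}-\mu_{\eta}}{\mu_{\eta_m}(\X)}+\mu_{\eta}\left(\frac{1}{\mu_{\eta_m}(\X)}-\frac{1}{\mu_{\eta}(\X)}\right)
\end{equation*}
yields $\tvnorm{\emmap(\eta_m)-\emmap(\eta)}\to0$.

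The main obstacle is the uniform-in-$x$ control of $\bar{G}_{\eta}$: because the denominator $\int_{\X}\eta(\dz)g(y\mid z)$ sits inside the $y$-integral, naive bounds only give pointwise (in $y$) convergence, and it is the lower bound in~\ref{a:g} that both justifies passing the difference through the reciprocal and makes the resulting constant independent of $x$ and integrable in $y$. The remaining care is bookkeeping: ensuring the mass $\eta_m(\X)$ cannot degenerate to $0$ along the sequence (handled by TV convergence to a nonzero $\eta$) so that all the denominators stay uniformly bounded below.
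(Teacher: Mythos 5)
Your argument is correct, and its core ingredients --- the add-and-subtract treatment of the ratio, the bound $\vert\eta_m(g(y\mid\cdot))-\eta(g(y\mid\cdot))\vert\le m_g\tvnorm{\eta_m-\eta}$, and the two-sided control of the denominators supplied by \ref{a:g} --- are exactly those of the paper's proof; so is the boundedness argument (the image lies in $\mathcal{P}(\X)$). The organization differs: the paper bounds $\vert\emmap(\eta_m)(\testfn)-\emmap(\eta)(\testfn)\vert$ for $\testfn\in\measurabletv(\X)$ directly via a single two-term splitting of $\eta_m(\dx)/\eta_m(g(y\mid\cdot))-\eta(\dx)/\eta(g(y\mid\cdot))$, whereas you first prove sup-norm convergence of $\bar{G}_{\eta_m}$ to $\bar{G}_{\eta}$, then control the unnormalized measures, and finally pass through the quotient. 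That last step is actually unnecessary: by Tonelli, $\eta(\bar{G}_{\eta})=\int_{\Y}h(y)\,\eta(g(y\mid\cdot))/\eta(g(y\mid\cdot))\ \dy=1$ for every $\eta\in\mathcal{M}^+(\X)$, so the normalizing constant is identically one and the quotient decomposition (together with your bound $\eta(\bar{G}_{\eta})\in[m_g^{-2},m_g^2]$, which is correct but not sharp) can be dropped --- this is what the paper implicitly uses when it writes $\emmap(\eta)$ without the normalizer. A minor payoff of the paper's ordering is that, by cancelling $\int_{\X}\eta_m(\dx)g(y\mid x)$ against $\eta_m(g(y\mid\cdot))$ in its first term, all constants involve only the mass of the limit $\eta$, so no separate argument that $\eta_m(\X)$ stays bounded away from zero is required; your version needs that observation and correctly supplies it.
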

\begin{proof}

Let $\eta\in\mathcal{M}^+(\X)$ and $\lbrace \eta_n\rbrace_{n\geq 1}$ be a sequence of measures in $\mathcal{M}^+(\X)$ converging to $\eta$ in the weak topology as $n\rightarrow \infty$.
For any $\testfn\in C_b(\X)$ consider
\begin{align*}
&\left\lvert\int_{\X}\emmap(\eta_n)(\dx) \testfn(x) - \int_{\X}\emmap (\eta)(\dx) \testfn(x)\right\rvert \\			
&= \left\lvert \int_{\X}\eta_n(\dx) \testfn(x) \int_{\Y} \frac{g(y \mid x)h(\dy)}{\eta_n\left(g(y \mid \cdot)\right)} - \int_{\X} \eta(\dx)\testfn(x) \int_{\Y} \frac{g(y \mid x)h(\dy)}{\eta\left(g(y \mid \cdot)\right)}\right\rvert\\
&= \left\lvert \int_{\X}\int_{\Y}\testfn(x)g(y \mid x)h(\dy)  \left[\frac{\eta_n(\dx) }{\eta_n\left(g(y \mid \cdot)\right)} - \frac{\eta(\dx)}{\eta\left(g(y \mid \cdot)\right)}\right]\right\rvert\\
&\leq  \left\lvert \int_{\X}\int_{\Y}\frac{\eta_n(\dx)\testfn(x)g(y \mid x)h(\dy)  }{\eta_n\left(g(y \mid \cdot)\right)\eta\left(g(y \mid \cdot)\right)}\left[\eta\left(g(y \mid \cdot)\right) -\eta_n\left(g(y \mid \cdot)\right)\right]\right\rvert\\
& + \left\lvert \int_{\X}\int_{\Y}\left(\eta_n(\dx) - \eta(\dx)\right) \frac{\testfn(x)g(y \mid x)h(\dy) }{\eta\left(g(y \mid \cdot)\right)} \right\rvert,
\end{align*}
where the second equality follows from Fubini's Theorem since $g, \testfn$ are bounded functions.

The first term can be bounded by
\begin{align*}
&\left\lvert \int_{\X}\int_{\Y}\frac{\eta_n(\dx)\testfn(x)g(y \mid x)h(\dy)  }{\eta_n\left(g(y \mid \cdot)\right)\eta\left(g(y \mid \cdot)\right)}\left[\eta\left(g(y \mid \cdot)\right) -\eta_n\left(g(y \mid \cdot)\right)\right]\right\rvert\\
 \leq &  \supnorm{\testfn} \int_{\Y}\frac{h(\dy)\int_{\X}\eta_n(\dx)g(y \mid x)  }{\eta_n\left(g(y \mid \cdot)\right)\eta\left(g(y \mid \cdot)\right)}\left\lvert\eta\left(g(y \mid \cdot)\right) -\eta_n\left(g(y \mid \cdot)\right)\right\rvert\\
 \leq & \supnorm{\testfn} \int_{\Y}\frac{h(\dy)  }{\eta\left(g(y \mid \cdot)\right)}\left\lvert\eta\left(g(y \mid \cdot)\right) -\eta_n\left(g(y \mid \cdot)\right)\right\rvert.
\end{align*}
Under~\ref{a:g}, $g$ is bounded below by $1/m_g$ and we have
\begin{align*}
    \eta\left(g(y \mid \cdot)\right)=\int_{\X}\eta(\dx)g(y \mid x)\geq\frac{1}{m_g}\int_{\X}\eta(\dx)=\frac{1}{m_g}\eta(\X) > 0
\end{align*}
since $\eta\in\mathcal{M}^+(\X)$ is an unsigned measure with nonzero mass. Therefore we obtain
\begin{align*}
\supnorm{\testfn} \int_{\Y}\frac{h(\dy)  }{\eta\left(g(y \mid \cdot)\right)}\left\lvert\eta\left(g(y \mid \cdot)\right) -\eta_n\left(g(y \mid \cdot)\right)\right\rvert  &\leq \supnorm{\testfn}\frac{m_g}{\eta(\X)}\int_{\Y}h(\dy) \left\lvert\eta\left(g(y \mid \cdot)\right) -\eta_n\left(g(y \mid \cdot)\right)\right\rvert 
\end{align*}
For fixed $y$, $g(y \mid \cdot)\in C_b(\X)$, we have that
\begin{align*}
\left\lvert\eta\left(g(y \mid \cdot)\right) - \eta_n\left(g(y \mid \cdot)\right) \right\rvert \rightarrow 0
\end{align*}
as $n\rightarrow \infty$ since $\eta_n$ converges to $\eta$ in the weak topology.
Since $g$ is uniformly bounded by $m_g$, the Dominated Convergence Theorem then gives
\begin{align*}
    \int_{\Y}h(\dy) \left\lvert\eta\left(g(y \mid \cdot)\right) -\eta_n\left(g(y \mid \cdot)\right)\right\rvert  \rightarrow 0
\end{align*}
as $n\rightarrow \infty$, from which we obtain
\begin{align}
\label{eq:emop_1}
\left\lvert \int_{\X}\int_{\Y}\frac{\eta_n(\dx)\testfn(x)g(y \mid x)h(\dy)  }{\eta_n\left(g(y \mid \cdot)\right)\eta\left(g(y \mid \cdot)\right)}\left[\eta\left(g(y \mid \cdot)\right) -\eta_n\left(g(y \mid \cdot)\right)\right]\right\rvert \rightarrow 0
\end{align}
as $n\rightarrow\infty$.

For the second term, consider the function
\begin{align}
\label{eq:emop_function}
x \mapsto \int_{\Y}\frac{\testfn(x)g(y \mid x)h(\dy) }{\eta\left(g(y \mid \cdot)\right)}.
\end{align}
This function is bounded by $m_g^2\supnorm{\testfn}/\eta(\X)$; to see that it is also continuous, recall that $\testfn$, $g$ are continuous functions while the continuity of $y\mapsto\eta\left(g(y \mid \cdot)\right)$ follows from the continuity of $g$ and the Dominated Convergence Theorem.
The Dominated Convergence theorem and the fact that $g$ is continuous, bounded above and below give continuity of~\eqref{eq:emop_function}.  

Using Fubini's Theorem, whose applicability is granted by the boundedness of $g, \testfn$, we obtain
\begin{align}
\label{eq:emop_2}
&\left\lvert \int_{\X}\int_{\Y}\left(\eta_n(\dx) - \eta(\dx)\right) \frac{\testfn(x)g(y \mid x)h(\dy) }{\eta\left(g(y \mid \cdot)\right)} \right\rvert \notag\\
&\qquad \qquad =\left\lvert \int_{\X}\left( \eta_n(\dx) - \eta(\dx) \right)\int_{\Y}\frac{\testfn(x)g(y \mid x)h(\dy) }{\eta\left(g(y \mid \cdot)\right)} \right\rvert \rightarrow 0
\end{align}
as $n\rightarrow \infty$. 

Combining~\eqref{eq:emop_1} and~\eqref{eq:emop_2} we obtain  convergence of $\emmap\eta_n(\testfn)$ to $\emmap\eta(\testfn)$ for every $\testfn\in C_b(\X)$, and thus convergence in the weak topology of $\emmap\eta_n$ to $\emmap\eta$ \citep[Theorem 11.3.3]{dudley2002real} whenever $\eta_n$ converges weakly to $\eta$, proving that the EM map is continuous in $\mathcal{M}^+(\X)$.
Finally, consider boundedness. A non-linear operator is bounded if and only if it maps bounded sets into bounded sets (e.g. \citetapp[page 757]{zeidler1985nonlinear}).
The EM map maps the space of positive finite measures $\mathcal{M}^+(\X)$ into the space of probability measures $\mathcal{P}(\X)$, whose elements have $\beta$ norm uniformly bounded by 1; in particular $\emmap$ maps any bounded subset of $\mathcal{M}^+(\X)$ into a uniformly bounded subset of $\mathcal{P}(\X)$, showing that $\emmap$ is a bounded operator.
\end{proof}

\begin{prop}
\label{prop:Kop}
Under~\ref{a:space} and~\ref{a:k}, the smoothing operator $\Kop$ defined in~\eqref{eq:smoothop} is compact on $\mathcal{P}(\X)$ endowed with the weak topology.
\end{prop}
\begin{proof}
To prove that $\Kop$ is compact we need to prove that it maps bounded subsets into relatively compact subsets \citepapp[Definition 2.17]{kress2014linear}.
It is sufficient to observe that $\X$ is a complete subset of $\real^{d_{\X}}$ (as it is a compact subset of a metric space) from which it follows that $\mathcal{P}(\X)$ is complete by Prokhorov's Theorem (e.g. \citetapp[Corollary 11.5.5]{dudley2002real}) and therefore $\mathcal{P}(\X)$ is relatively compact \citepapp[Theorem 11.5.4]{dudley2002real}.

\end{proof}

\begin{corollary}[Compactness of $\emsmap$]
\label{cor:compact}
Under~\ref{a:space}, \ref{a:g} and~\ref{a:k}, the EMS map $\emsmap$ is compact on $\mathcal{M}^+(\X)$ endowed with the weak topology.
\end{corollary}
\begin{proof}
The EMS map is the composition of the continuous and bounded operator $\emmap$ (by Proposition~\ref{prop:emop}) which maps bounded sets into bounded sets with the compact smoothing operator $\Kop$ (by Proposition~\ref{prop:Kop}) which maps bounded sets into relatively compact sets. It follows that $\emsmap$ is continuous and maps bounded sets into relatively compact sets, hence $\emsmap$ is compact (e.g. \citetapp[page 54]{zeidler1985nonlinear}).
\end{proof}

\subsection{Proof of Proposition~\ref{prop:eu}}
The proposition may be established straightforwardly using the technical results obtained in the previous section.
\begin{proof}

Since $\X$ is a compact metric space (and therefore complete), the set of probability measures  $\mathcal{P}(\X)\subset \mathcal{M}(\X)$ is complete by Prokhorov's Theorem (e.g. \citet[Corollary 11.5.5]{dudley2002real}) and therefore $\mathcal{P}(\X)$ is closed.
Moreover, $\mathcal{P}(\X)$ is non-empty, bounded (since all of its elements have $\beta$ norm bounded by 1) and convex: take $\mu, \nu\in \mathcal{P}(\X)$ and $t\in\left[ 0, 1\right]$, then for every $A\in {B}(\X)$
\begin{align*}
& t\mu(A) + (1 - t)\nu(A) \geq 0
& t\mu(\X) + (1 - t)\nu(\X) = 1,
\end{align*}
showing that $t\mu+(1-t)\nu\in\mathcal{P}(\X)$ for all $t\in[0, 1]$ and all $\mu,\nu\in\mathcal{P}(\X)$.

These properties and the compactness of the EMS map (Corollary~\ref{cor:compact}) give the existence of a fixed point by Schauder's fixed point theorem see, e.g., \citetapp[Theorem 2.A]{zeidler1985nonlinear}.

\end{proof}

\section{Convergence of the SMC Approximation}
The theoretical characterization of the particle method approximating the EMS recursion is carried out by decomposing Algorithm~\ref{alg:fpsmc} into three steps: mutation, reweighting and resampling. This decomposition is standard in the study of SMC algorithms \citepapp{crisan2002survey, chopin2004central, miguez2013convergence} and allows us to examine the novelty of the particle approximation introduced in Section~\ref{sec:mf} by directly considering the contribution to the overall approximation error of the use of approximate weights $\weightN$.

First, consider the following decomposition of the dynamics in~\eqref{eq:smc} with potentials~\eqref{eq:potential} and Markov kernels~\eqref{eq:mutation}. In the selection step, the current state is weighted according to the potential function $\weight$
\begin{align*}
\update(x_{1:n}, y_{1:n})\equiv \bg(\predictive)(x_{1:n}, y_{1:n})=\frac{1}{\predictive(\weight)}\weight(x_n, y_n)\predictive(x_{1:n}, y_{1:n});
\end{align*}
then, in the mutation step, a new state is proposed according to $M_{n+1}$
\begin{align*}
\eta_{n+1}(x_{1:n+1}, y_{1:n+1})\propto\update(x_{1:n}, y_{1:n}) M_{n+1}(x_{n+1}\mid x_{n}).
\end{align*}
Each step of the evolution above is then compared to its particle approximation: the weighted distribution $\bg(\predictiveN)$ is compared with $\bg(\predictive)$, the resampled distribution $\updateN$ is compared with $\update$ and finally $\predictiveN$ is compared with $\predictive$. 

The proof of the $\lp$-inequality in Proposition~\ref{prop:lp} follows the inductive approach of \citetapp{crisan2002survey, miguez2013convergence} and consists of 4 Lemmata. Lemmata~\ref{lp:lemma2},~\ref{lp:lemma4} and~\ref{lp:lemma1} are due to \citetapp{crisan2002survey, miguez2013convergence} and establish $\lp$-error estimates for the reweighting step performed with the exact potential $\weight$ (exact reweighting), the multinomial resampling step and the mutation step.
Lemma~\ref{lp:lemma3} compares the exact reweighting with the reweighting obtained by using the approximated potentials $\weightN$ and is the main element of novelty in the proof.

In the following we commit the usual abuse of notation and we denote by $\predictive$ both a measure and its density with respect to the Lebesgue measure.

\subsection{Proof of Proposition~\ref{prop:lp}}
\label{app:lp}

Before proceeding to the proof of Proposition~\ref{prop:lp} we introduce the following auxiliary Lemma giving some properties of the approximated potentials $\weightN$:
\begin{lemma}
\label{lemma:potential}
Under~\ref{a:space} and~\ref{a:g}, the approximated and exact potentials are positive functions, bounded and bounded away from 0
\begin{align*}
\supnorm{\weight} \leq m_g^2<\infty\qquad\text{and}\qquad \inf_{(x, y)}\vert \weight(x, y)\vert \geq \frac{1}{m_g^2}>0\\
\supnorm{\weightN} \leq m_g^2<\infty\qquad\text{and}\qquad \inf_{(x, y)}\vert \weightN(x, y)\vert \geq \frac{1}{m_g^2}>0.
\end{align*}
We have the following decomposition
\begin{align*}
\weightN(x, y) - \weight(x, y) &=  \weight(x, y)\frac{ \predictiveX(g(y \mid \cdot)) - \predictiveNX(g(y \mid \cdot))}{ \predictiveNX(g(y \mid \cdot))}\\
& =  \weightN(x, y)\frac{\predictiveX(g(y \mid \cdot)) - \predictiveNX(g(y \mid \cdot))}{\predictiveX(g(y \mid \cdot))}
\end{align*}
for fixed $(x, y)\in \HH$.
\end{lemma}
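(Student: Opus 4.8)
The plan is to handle the three assertions in turn; the two pairs of bounds follow directly from Assumption~\ref{a:g}, while the final identity is purely algebraic. Throughout I would use the fact that both $\predictiveX$ and the particle approximation $\predictiveNX = \frac{1}{N}\sum_{i=1}^N \delta_{X_n^i}$ are probability measures, so that each assigns unit mass to $\X$.

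First, for the exact potential, recall that $\weight(x,y) = g(y\mid x)/\predictiveX(g(y\mid\cdot))$. Since $g>0$, both numerator and denominator are strictly positive, so $\weight$ is a positive function. Applying the two-sided bound $1/m_g \leq g(y\mid x)\leq m_g$ from~\ref{a:g} to the numerator directly, and noting that $\predictiveX(g(y\mid\cdot)) \in [1/m_g, m_g]$ because $\predictiveX$ has unit mass, gives $1/m_g^2 \leq \weight(x,y)\leq m_g^2$, which is exactly the pair of bounds claimed. For the approximated potential $\weightN(x,y) = g(y\mid x)/\predictiveNX(g(y\mid\cdot))$ the identical argument applies verbatim, because $\predictiveNX(g(y\mid\cdot)) = \frac{1}{N}\sum_{i=1}^N g(y\mid X_n^i)$ is an average of quantities each lying in $[1/m_g, m_g]$ and hence itself lies in $[1/m_g, m_g]$; thus $1/m_g^2 \leq \weightN(x,y)\leq m_g^2$.

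Finally, for the decomposition I would abbreviate $a := \predictiveX(g(y\mid\cdot))$ and $b := \predictiveNX(g(y\mid\cdot))$, both nonzero by the bounds just established, so that $\weight = g(y\mid x)/a$ and $\weightN = g(y\mid x)/b$. Then $\weightN - \weight = g(y\mid x)(1/b - 1/a) = g(y\mid x)(a-b)/(ab)$, and factoring out $\weight = g(y\mid x)/a$ recovers the first stated form $\weight\,(a-b)/b$, while factoring out $\weightN = g(y\mid x)/b$ recovers the second. Every step is an immediate consequence of~\ref{a:g} or elementary manipulation, so there is no substantive obstacle; the only point deserving attention is the observation that $\predictiveNX$, despite being a discrete measure, still has unit total mass, which is precisely what lets the denominator bounds transfer unchanged from the exact predictive to its particle approximation.
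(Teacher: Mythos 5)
Your proof is correct and follows essentially the same route as the paper: the two-sided bounds come from applying \ref{a:g} to the numerator and to the unit-mass averages $\predictiveX(g(y\mid\cdot))$ and $\predictiveNX(g(y\mid\cdot))$ in the denominators, and the decomposition is the same elementary algebra the paper performs via the relative errors $(\weightN-\weight)/\weight$ and $(\weightN-\weight)/\weightN$. If anything, you spell out the denominator bounds more explicitly than the paper, which simply asserts that boundedness "follows from the definitions and the boundedness of $g$."
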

\begin{proof}
The boundedness of $\weight$ and $\weightN$ follows from definitions~\eqref{eq:potential} and~\eqref{eq:potentialN} and the boundedness of $g$.
The second assertion is proved by considering the relative errors between the exact and the approximated potential:
\begin{align*}
\frac{ \weightN(x, y) - \weight(x, y)}{ \weight(x, y)} &=\frac{h_n(y)}{g(y \mid x)}\left[ \frac{g(y \mid x)}{h_n^N(y)} - \frac{g(y \mid x)}{h_n(y)} \right]\\
&= h_n(y) \left[\frac{1}{h_n^N(y)} - \frac{1}{h_n(y)}\right] \\
& = \frac{ h_n(y) - h_n^N(y)}{ h_n^N(y) }\\
& = \frac{ \predictiveX(g(y \mid \cdot)) - \predictiveNX(g(y \mid \cdot))}{ \predictiveNX(g(y \mid \cdot)) }
\end{align*}
and
\begin{align*}
\frac{\weightN(x, y) - \weight(x, y)}{ \weightN(x, y)} & = \frac{ \predictiveX(g(y \mid \cdot)) - \predictiveNX(g(y \mid \cdot))}{ \predictiveX(g(y \mid \cdot)) }
\end{align*}
respectively.

\end{proof}

\begin{lemma}[Exact reweighting]
\label{lp:lemma2}
Assume that for any $\testfn\in \measurable\left(\HH\right)$ and for some $p\geq1$
\begin{equation*}
\Exp\left[\vert\predictiveN(\testfn) - \predictive(\testfn)\vert^p\right]^{1/p} \leq \widetilde{C}_{p,n} \frac{\supnorm{ \testfn}}{N^{1/2}}
\end{equation*}
holds for some finite constant $\widetilde{C}_{p,n} $, then
\begin{equation*}
\Exp\left[\vert\bg(\predictiveN)(\testfn) - \bg(\predictive)(\testfn)\vert^p\right]^{1/p} \leq \bar{C}_{p,n} \frac{\supnorm{ \testfn}}{N^{1/2}}
\end{equation*}
for any $\testfn\in \measurable\left(\HH\right)$ for some finite constant $\bar{C}_{p,n} $.
\end{lemma}
\begin{proof}
The proof follows that of \citetapp[Lemma 4]{crisan2002survey} by exploiting the boundedness of $\weight$.
\end{proof}

\begin{lemma}[Approximate reweighting]
\label{lp:lemma3}
Assume that for any $\testfn\in \measurable\left(\HH\right)$ and for some $p\geq1$
\begin{equation*}
\Exp\left[\vert\predictiveN(\testfn) - \predictive(\testfn)\vert^p\right]^{1/p} \leq \widetilde{C}_{p,n} \frac{\supnorm{ \testfn}}{N^{1/2}}
\end{equation*}
holds for some finite constant $\widetilde{C}_{p,n} $, then
\begin{equation*}
\Exp\left[\vert\bgN(\predictiveN)(\testfn) - \bg(\predictiveN)(\testfn)\vert^p\right]^{1/p} \leq \ddot{C}_{p,n} \frac{\supnorm{ \testfn}}{N^{1/2}}
\end{equation*}
for any $\testfn\in \measurable\left(\HH\right)$ and for some finite constant $\ddot{C}_{p,n} $.
\end{lemma}

\begin{proof}
Apply the definition of $\bg$ and $\bgN$ and consider the following decomposition
\begin{align*}
\vert\bgN(\predictiveN)(\testfn) - \bg(\predictiveN)(\testfn)\vert & = \left\lvert \frac{\predictiveN( \weightN\testfn)}{\predictiveN( \weightN)} - \frac{\predictiveN( \weight\testfn)}{\predictiveN( \weight)} \right\rvert\\
&\leq \left\lvert \frac{\predictiveN( \weightN\testfn)}{\predictiveN( \weightN)} - \frac{\predictiveN( \weightN\testfn)}{\predictiveN( \weight)} \right\rvert\\
& + \left\lvert \frac{\predictiveN( \weightN\testfn)}{\predictiveN( \weight)} - \frac{\predictiveN( \weight\testfn)}{\predictiveN( \weight)} \right\rvert.
\end{align*}
Then, for the first term
\begin{align*}
\left\lvert \frac{\predictiveN( \weightN\testfn)}{\predictiveN( \weightN)} - \frac{\predictiveN( \weightN \testfn)}{\predictiveN( \weight)} \right\rvert & = \left\lvert \frac{\predictiveN( \weightN\testfn)}{\predictiveN( \weightN)}\right\rvert \left\lvert \frac{\predictiveN(\weight) - \predictiveN(\weightN)}{\predictiveN( \weight)} \right\rvert\\
&\leq \frac{\supnorm{\testfn}}{\vert \predictiveN( \weight) \vert} \predictiveN(\vert \weight - \weightN\vert).
\end{align*}
For the second term
\begin{align*}
\left\lvert \frac{\predictiveN( \weightN\testfn)}{\predictiveN( \weight)} - \frac{\predictiveN( \weight\testfn)}{\predictiveN( \weight)} \right\rvert & = \frac{1}{\vert \predictiveN( \weight) \vert}  \vert \predictiveN( \weightN\testfn) - \predictiveN( \weight\testfn)\vert\\
& \leq \frac{\supnorm{\testfn}}{\vert \predictiveN( \weight) \vert}\predictiveN(\vert  \weightN -  \weight\vert).
\end{align*}
Hence,
\begin{align*}
\vert\bgN(\predictiveN)(\testfn) - \bg(\predictiveN)(\testfn)\vert & \leq 2\frac{\supnorm{\testfn}}{\vert \predictiveN( \weight) \vert}\predictiveN(\vert  \weightN -  \weight\vert) \leq 2 m_g^2\supnorm{\testfn}\predictiveN(\vert  \weightN -  \weight\vert).
\end{align*}
By applying Minkowski's inequality and the decomposition of the potentials in Lemma~\ref{lemma:potential}
\begin{align*}
&\Exp\left[\left\lvert\predictiveN(\vert  \weightN -  \weight\vert)\right\rvert^p\right]^{1/p} \\
&= \Exp\left[\left\lvert \frac{1}{N}\sum_{i=1}^N \left\lvert  \weightN(X_n^i, Y_n^i) -  \weight(X_n^i, Y_n^i)\right\rvert \right\rvert^p\right]^{1/p}\\
&\leq  \frac{1}{N}\sum_{i=1}^N\Exp\left[\left\lvert \weightN(X_n^i, Y_n^i) -  \weight(X_n^i, Y_n^i) \right\rvert^p\right]^{1/p}\\
&\leq  \frac{1}{N}\sum_{i=1}^N \Exp\left[ \left\lvert \frac{\weightN(X_n^i, Y_n^i)}{\predictiveX\left( g(Y_n^i \mid \cdot)\right)}\right\rvert^{p} \vert \predictiveX\left( g(Y_n^i \mid \cdot)\right) - \predictiveNX\left( g(Y_n^i \mid \cdot)\right)\vert^p\right]^{1/p}\notag \\
&\leq \frac{1}{N}\sum_{i=1}^N  m_g^3\Exp\left[\vert \predictiveX\left( g(Y_n^i \mid \cdot)\right) - \predictiveNX\left( g(Y_n^i \mid \cdot)\right)\vert^p\right]^{1/p}.
\end{align*}
Then, consider $\mathcal{S}_n^N:= \sigma\left( Y_n^i: i\in\lbrace 1,\ldots, N\rbrace\right)$, the $\sigma$-field generated by all the $Y_n^i$ at time $n$.
By construction, the evolution of $X_n^i$ for $i=1, \ldots, N$ is independent of $\mathcal{S}_n^N$ (this is due to the definition of the mutation kernel~\eqref{eq:mutation}).
Conditionally on $\mathcal{S}_n^N$, the $Y_n^i$ are fixed for $i=1, \ldots, N$ and we can use the fact that the integrals of functions from $\X$ to $\mathbb{R}$ with respect to $\predictive$ and $\predictiveX$ coincide as do their integrals with respect to $\predictiveN$ and $\predictiveNX$, thus for fixed $y$:
\begin{align*}
\Exp\left[\vert \predictiveX\left( g(y \mid \cdot)\right) - \predictiveNX\left( g(y \mid \cdot)\right)\vert^p \right]^{1/p} &= \Exp\left[\vert \predictive\left( g(y \mid \cdot)\right) - \predictiveN\left( g(y \mid \cdot)\right)\vert^p \right]^{1/p}\\
&\leq \frac{m_g \widetilde{C}
_{p, n}}{N^{1/2}}
\end{align*}
where the last inequality follows from the hypothesis of the Lemma because $g(y \mid \cdot)$ is a bounded and measurable function for all fixed $y\in\mathbb{Y}$.

Hence, since $Y_n^i$ is $\mathcal{S}_n^N$-measurable and independent of $\predictiveN\vert_{\X}$, we have
\begin{align*}
\Exp\left[\left\lvert\predictiveN(\vert  \weightN -  \weight\vert)\right\rvert^p\right]^{1/p} &\leq m_g^3\frac{1}{N}\sum_{i=1}^N \Exp\left[\vert \predictiveX\left( g(Y_n^i \mid\cdot)\right) - \predictiveNX\left( g(Y_n^i \mid \cdot)\right)\vert^p\right]^{1/p}\\
&\leq m_g^3 \frac{1}{N}\sum_{i=1}^N\Exp\left[\Exp\left[\vert \predictiveX\left( g(Y_n^i \mid \cdot)\right) - \predictiveNX\left( g(Y_n^i \mid \cdot)\right)\vert^p| \mathcal{S}_n^N\right]\right]^{1/p}\\
&\leq  \frac{m_g^4 \widetilde{C}_{p, n}}{N^{1/2}}.
\end{align*}

Therefore,
\begin{align*}
\Exp\left[\vert\bgN(\predictiveN)(\testfn) - \bg(\predictiveN)(\testfn)\vert^p\right]^{1/p} & \leq 2\widetilde{C}_{p,n}  m_g^6\frac{\supnorm{\testfn}}{N^{1/2}},
\end{align*}
with the constant $\ddot{C}_{p,n}  = 2\widetilde{C}_{p,n} m_g^6$.
\end{proof}

\begin{lemma}[Multinomial resampling]
\label{lp:lemma4}
Assume that for any $\testfn\in \measurable\left(\HH\right)$ and for some $p\geq 1$
\begin{equation*}
\Exp\left[\vert\bgN(\predictiveN)(\testfn) - \update(\testfn)\vert^p\right]^{1/p} = \Exp\left[\vert\bgN(\predictiveN)(\testfn) - \bg(\predictive)(\testfn)\vert^p\right]^{1/p} \leq \widehat{C}_{p, n}\frac{\supnorm{\testfn}}{N^{1/2}}
\end{equation*}
holds for some finite constant $\widehat{C}_{p, n}$, then after the resampling step performed through multinomial resampling
\begin{equation*}
\Exp\left[\vert\updateN(\testfn) - \hat{\eta}_n(\testfn)\vert^p\right]^{1/p} \leq C_{p,n}\frac{\supnorm{\testfn}}{N^{1/2}}
\end{equation*}
for any $\testfn\in \measurable\left(\HH\right)$ for some finite constant $C_{p,n} $.
\end{lemma}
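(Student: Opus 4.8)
The plan is to peel off the error contributed purely by resampling from the error already present in the weighted empirical measure. Writing
\begin{align*}
\updateN(\testfn) - \update(\testfn) = \big[\updateN(\testfn) - \bgN(\predictiveN)(\testfn)\big] + \big[\bgN(\predictiveN)(\testfn) - \bg(\predictive)(\testfn)\big],
\end{align*}
I would note that the second bracket is controlled immediately by the hypothesis, with constant $\widehat{C}_{p,n}$, so by Minkowski's inequality the whole estimate reduces to bounding the $\mathbb{L}_p$ norm of the first bracket, i.e.\ the fluctuation introduced by the multinomial resampling step alone.

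To handle this first term I would condition on the $\sigma$-field $\mathcal{G}_n^N$ generated by the particle locations and normalized weights at time $n$, immediately before resampling. Conditionally on $\mathcal{G}_n^N$ the weighted measure $\bgN(\predictiveN)$ is deterministic, and multinomial resampling produces offspring $(\widetilde{X}_n^i,\widetilde{Y}_n^i)$ that are i.i.d.\ draws from $\bgN(\predictiveN)$. Hence
\begin{align*}
\updateN(\testfn) - \bgN(\predictiveN)(\testfn) = \frac{1}{N}\sum_{i=1}^N \Big(\testfn(\widetilde{X}_n^i,\widetilde{Y}_n^i) - \bgN(\predictiveN)(\testfn)\Big)
\end{align*}
is, conditionally, a normalized sum of i.i.d.\ centered random variables, each bounded in absolute value by $2\supnorm{\testfn}$. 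Applying the conditional Marcinkiewicz--Zygmund inequality yields a universal constant $B_p$, depending only on $p$, with $\Exp\big[|\updateN(\testfn) - \bgN(\predictiveN)(\testfn)|^p \mid \mathcal{G}_n^N\big]^{1/p} \leq 2B_p \supnorm{\testfn}/\sqrt{N}$; since the summands are bounded by $2\supnorm{\testfn}$ this conditional bound is uniform in the (random) conditioning, and the tower property then removes the conditioning without degrading the rate.

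Combining the two brackets through Minkowski's inequality would then give the claim with $C_{p,n} = 2B_p + \widehat{C}_{p,n}$. This is exactly the standard resampling estimate of \citetapp{crisan2002survey, miguez2013convergence}, and the only genuinely delicate point is the conditioning step: one must verify that, given $\mathcal{G}_n^N$, the resampled particles really are i.i.d.\ with common law $\bgN(\predictiveN)$, so that Marcinkiewicz--Zygmund applies with a constant independent of the conditioning. Because the test function is bounded, no further integrability condition on $\predictive$ or on the potentials is needed, and the argument is genuinely routine once the i.i.d.\ structure conditional on $\mathcal{G}_n^N$ is in place.
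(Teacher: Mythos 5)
Your proposal is correct and follows essentially the same route as the paper, which likewise reduces the claim to the hypothesis plus the conditional Marcinkiewicz--Zygmund inequality of \citetapp[Lemma 7.3.3]{del2004feynman} applied to the conditionally i.i.d.\ multinomial offspring, as in \citetapp[Lemma 5]{crisan2002survey}. The decomposition, the conditioning on the pre-resampling system, and the use of Minkowski's inequality to combine the two terms all match the paper's argument.
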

\begin{proof}
The proof follows that of \citetapp[Lemma 5]{crisan2002survey} using the Marcinkiewicz-Zygmund type inequality in \citetapp[Lemma 7.3.3]{del2004feynman} and the hypothesis.

\end{proof}

\begin{lemma}[Mutation]
\label{lp:lemma1}
Assume that for any $\testfn\in \measurable(\HH)$ and for some $p\geq 1$
\begin{equation*}
\Exp\left[\vert\hat{\eta}^N_{n}(\testfn) - \hat{\eta}_{n}(\testfn)\vert^p\right]^{1/p} \leq C_{p,n} \frac{\supnorm{\testfn}}{N^{1/2}}
\end{equation*}
holds for some finite constant $C_{p,n} $, then, after the mutation step
\begin{equation*}
\Exp\left[\vert\eta_{n+1}^N(\testfn) - \eta_{n+1}(\testfn)\vert^p\right]^{1/p} \leq \widetilde{C}_{p,n+1} \frac{\supnorm{\testfn}}{ N^{1/2}}
\end{equation*}
for any $\testfn\in \measurable(\HH)$ for some finite constant $\widetilde{C}_{p,n+1} $.
\end{lemma}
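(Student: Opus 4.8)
The plan is to use the standard two-stage decomposition for the mutation step (following \citetapp{crisan2002survey, miguez2013convergence}), separating the fresh Monte Carlo randomness produced by sampling the new particle locations from the error already carried by $\updateN$. Recall that after resampling the equally weighted population $\{(\widetilde{X}_n^i, \widetilde{Y}_n^i)\}_{i=1}^N$ has empirical measure $\updateN$, and the mutation step draws $X_{n+1}^i\sim K_{n+1}(\widetilde{X}_n^i, \cdot)$ and $Y_{n+1}^i\sim h$ to form $\eta_{n+1}^N = \frac{1}{N}\sum_{i=1}^N\delta_{(X_{n+1}^i, Y_{n+1}^i)}$. Since the exact flow satisfies $\eta_{n+1} = \update M_{n+1}$, so that $\eta_{n+1}(\testfn) = \update(M_{n+1}(\testfn))$, I would insert the intermediate quantity $\updateN(M_{n+1}(\testfn))$ and apply Minkowski's inequality:
\[
\Exp\left[|\eta_{n+1}^N(\testfn) - \eta_{n+1}(\testfn)|^p\right]^{1/p} \leq \Exp\left[|\eta_{n+1}^N(\testfn) - \updateN(M_{n+1}(\testfn))|^p\right]^{1/p} + \Exp\left[|(\updateN - \update)(M_{n+1}(\testfn))|^p\right]^{1/p}.
\]

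The second (propagated-error) term is handled immediately by the induction hypothesis applied with the test function $M_{n+1}(\testfn)$: because $M_{n+1}$ is a Markov kernel it is a sup-norm contraction, $\supnorm{M_{n+1}(\testfn)} \leq \supnorm{\testfn}$, so this term is at most $C_{p,n}\supnorm{\testfn}/\sqrt{N}$. This is also the step that makes the induction close cleanly, as the resulting bound is expressed in terms of $\supnorm{\testfn}$ rather than $\supnorm{M_{n+1}(\testfn)}$.

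For the first (sampling-error) term I would condition on $\mathcal{F}_n^N$, the $\sigma$-algebra generated by the particle system up to and including the resampling at time $n$. By the product structure of the mutation kernel~\eqref{eq:mutation}, given $\mathcal{F}_n^N$ the pairs $(X_{n+1}^i, Y_{n+1}^i)$ are independent across $i$ and
\[
\Exp\left[\testfn(X_{n+1}^i, Y_{n+1}^i)\mid \mathcal{F}_n^N\right] = (M_{n+1}(\testfn))(\widetilde{X}_n^i, \widetilde{Y}_n^i),
\]
so $\eta_{n+1}^N(\testfn) - \updateN(M_{n+1}(\testfn))$ is a normalized sum of $N$ conditionally independent, centered random variables, each bounded in absolute value by $2\supnorm{\testfn}$. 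The conditional Marcinkiewicz--Zygmund inequality (as in \citetapp[Lemma 7.3.3]{del2004feynman}, already invoked in Lemma~\ref{lp:lemma4}) then gives a conditional bound $B_p\supnorm{\testfn}/\sqrt{N}$ with $B_p$ depending only on $p$, and taking total expectations preserves it.

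Combining the two bounds yields the claim with $\widetilde{C}_{p,n+1} = B_p + C_{p,n}$. Of the four lemmata this is the most routine --- the genuine novelty lies in the approximate-reweighting Lemma~\ref{lp:lemma3} --- so I do not expect a serious obstacle here. The only points demanding care are verifying the conditional-independence-and-mean identity, which relies specifically on each $Y_{n+1}^i$ being drawn afresh and independently from $h$ in~\eqref{eq:mutation}, and confirming the sup-norm contraction property of $M_{n+1}$ so that the inductive bound is stated for $\testfn$ itself and the recursion can be iterated.
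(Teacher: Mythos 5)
Your proposal is correct and follows essentially the same route as the paper: the same Minkowski decomposition through the intermediate term $\updateN(M_{n+1}(\testfn))$, the conditional Marcinkiewicz--Zygmund inequality of \citetapp[Lemma 7.3.3]{del2004feynman} for the fresh sampling error, and the induction hypothesis applied to $M_{n+1}(\testfn)$ for the propagated error. The paper states this proof only in outline, so your explicit verification of the sup-norm contraction $\supnorm{M_{n+1}(\testfn)}\leq\supnorm{\testfn}$ and of the conditional independence and mean identity is a welcome filling-in of details rather than a deviation.
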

\begin{proof}
The proof follows that of \citetapp[Lemma 3]{crisan2002survey}, where after applying Minkowski's inequality
\begin{align*}
\Exp\left[\vert\eta_{n+1}^N(\testfn) - \eta_{n+1}(\testfn)\vert^p\right]^{1/p} & = \Exp\left[\vert\eta_{n+1}^N(\testfn) - \hat{\eta}_{n}M_{n+1}(\testfn)\vert^p\right]^{1/p}\\
& \leq \Exp\left[\vert\eta_{n+1}^N(\testfn) - \hat{\eta}^N_{n}M_{n+1 }(\testfn)\vert^p\right]^{1/p} \\
&+ \Exp\left[\vert\hat{\eta}^N_{n}M_{n+1}(\testfn)- \hat{\eta}_{n} M_{n+1}(\testfn) \vert^p\right]^{1/p},
\end{align*}
we can bound the first term with the Marcinkiewicz-Zygmund type inequality in \citetapp[Lemma 7.3.3]{del2004feynman} and the second term with the hypothesis.

\end{proof}

The proof of the $\lp$-inequality in Proposition~\ref{prop:lp} is based on an inductive argument which uses Lemmata~\ref{lp:lemma2}-\ref{lp:lemma1}:
\begin{proof}[Proof of Proposition~\ref{prop:lp}]
At time $n=1$, the particles $(X_1^i, Y_1^i)_{i=1}^N$ are sampled i.i.d. from $\eta_1\equiv\hat{\eta}_1$ thus $\Exp\left[\testfn(X_1^i, Y_1^i)\right] = \eta_1(\testfn)$ for $i=1,\ldots,N$.
We can define the sequence of functions $\Delta_1^i: \X\times\Y \mapsto \mathbb{R}$ for $i=1,\ldots, N$
\begin{equation*}
\Delta_1^i(x, y) :=  \testfn(x, y) -  \Exp\left[\testfn(X_{1}^{i}, Y_{1}^{i})\right]
\end{equation*}
so that,
\begin{align*}
\eta_1^N(\testfn) - \eta_1(\testfn) = \frac{1}{N}\sum_{i=1}^N \Delta_1^i(X_1^i, Y_1^i),
\end{align*}
and apply Lemma 7.3.3 in \citeapp{del2004feynman} to get
\begin{equation*}
\Exp\left[\vert \eta_1^N(\testfn) - \eta_1(\testfn)\vert^p\right]^{1/p} \leq 2b(p)^{1/p}\frac{\supnorm{\testfn}}{N^{1/2}},
\end{equation*}
with $b(p) < \infty$, for every $p\geq 1$.

Then, assume that the result holds at time $n$: for every $\testfn \in \measurable(\HH)$, every $p\geq1$ and some finite constant $\widetilde{C}_{p,n}$
\begin{equation*}
\Exp\left[\vert\predictiveN(\testfn) - \predictive(\testfn)\vert^p\right]^{1/p} \leq \widetilde{C}_{p,n} \frac{\supnorm{\testfn}}{ N^{1/2}}.
\end{equation*}

The $\lp$-inequality in~\eqref{eq:lp4kde} is obtained by combining the results of Lemma~\ref{lp:lemma2} and Lemma~\ref{lp:lemma3} using Minkowski's inequality
\begin{equation*}
\Exp\left[\vert\bgN(\predictiveN)(\testfn) - \bg(\predictive)(\testfn)\vert^p\right]^{1/p} \leq (\bar{C}_{p,n} +\ddot{C}_{p,n} )\frac{\supnorm{ \testfn}}{N^{1/2}}
\end{equation*}
for every $\testfn \in \measurable(\HH)$ and some finite constants $\bar{C}_{p,n}, \ddot{C}_{p,n}$. Thus, $\widehat{C}_{p, n} = \bar{C}_{p,n} +\ddot{C}_{p,n}$.

Lemma~\ref{lp:lemma4} gives
\begin{equation*}
\Exp\left[\vert\updateN(\testfn) - \update(\testfn)\vert^p\right]^{1/p} \leq C_{p,n} \frac{\supnorm{\testfn}}{N^{1/2}}
\end{equation*}
for every $\testfn \in \measurable(\HH)$ and some finite constants $C_{p,n}$, and Lemma~\ref{lp:lemma1} gives
\begin{equation*}
\Exp\left[\vert\eta_{n+1}^N(\testfn) - \eta_{n+1}(\testfn)\vert^p\right]^{1/p} \leq \widetilde{C}_{p,n+1} \frac{\supnorm{\testfn}}{ N^{1/2}}
\end{equation*}
for every $\testfn \in \measurable(\HH)$ and some finite constant $\widetilde{C}_{p,n+1}$.

The result follows for all $n\in \mathbb{N}$ by induction.
\end{proof}

\subsection{Proof of Proposition \protect{\ref{prop:asw}}} \label{app:wcm}
Using standard techniques following \citetapp[Chapter 11, Theorem 11.4.1]{dudley2002real} and \citetapp{berti2006almost} and given in detail for the context of interest by \citetapp[Theorem 4]{schmon2018large}, the result of Corollary~\ref{prop:slln} can be strengthened to the convergence of the measures in the weak topology.

\begin{proof}[Proof of Proposition~\ref{prop:asw}]
Consider $\BL(\HH)\subset \measurable(\HH)$, the Banach space of bounded Lipschitz functions. As shown in \citetapp[Theorem 11.4.1]{dudley2002real}, see also \citetapp[Proposition 5]{schmon2018large} for a more accessible presentation, $\BL(\HH)$ admits a countable dense subclass $C\subset \BL(\HH)$.

For every $\testfn\in C$ define $A_\testfn := \lbrace \omega\in \Omega: \predictiveN(\omega)(\testfn) \rightarrow\predictive(\testfn)\ N\rightarrow\infty\rbrace$. Then $\pr\left(A_\testfn\right) = 1$ $\forall\testfn\in C$ by Corollary~\ref{prop:slln} and
\begin{align*}
\pr\left(\lbrace\omega\in\Omega: \predictiveN(\omega)(\testfn) \rightarrow\predictive(\testfn)\ N\rightarrow\infty\ \forall\testfn\in C\rbrace \right)= \pr\left(\bigcap_{\testfn\in C} A_\testfn \right) = 1.
\end{align*}
The result follows from the fact that $C$ is dense in $\BL(\HH)$ and the Portmanteau Theorem (e.g. \citetapp[Theorem 11.1.1]{dudley2002real}).
\end{proof}

\section{Convergence of Density Estimates}
\label{sec:kde_misepf}

\subsection{Auxiliary Results}\label{sec:lp_kdepf}
Using a version of the Dominated Convergence Theorem for weakly converging measures \citepapp{serfozo1982convergence, feinberg2020fatou2}, standard results on kernel density estimation (e.g. \citetapp{parzen1962estimation, cacoullos1966estimation}) and an argument based on compactness as in \citeapp{newey1991uniform} we can establish the following result
\begin{prop}
\label{prop:lp_kde}
Under~\ref{a:space}, \ref{a:g} and~\ref{a:k}, if $\bwN\rightarrow0$ as $N\rightarrow \infty$, the estimator $f^{N}_{n+1}(x)$  in~\eqref{eq:smc_kde1} converges uniformly to $f_{n+1}(x)$ with probability 1 for all $n\geq 1$.
\end{prop}

\begin{proof}
Let us define for $N\in\mathbb{N}$
\begin{align*}
\testfn^N(t, x) := \int_{\X}K(x^\prime, x)\bwN^{-d_{\X}}\vert \Sigma\vert^{-1/2}S\left( (\bwN^2 \Sigma)^{-1/2}(t-x^\prime)\right)\dx^\prime,
\end{align*}
and note that the estimator~\eqref{eq:smc_kde1} is given by $f_{n+1}^N(x) = \bgN(\predictiveN) \left( \testfn^N(\cdot, x) \right)$ for any fixed $x\in\X$.
Standard results in the literature on kernel density estimation show that $\testfn^N(\cdot, x)$ converges to $K(\cdot, x)$ pointwise for all $ x\in\X$ (e.g. \citetapp[Theorem 2.1]{cacoullos1966estimation}).
Because $\X$ is compact, Assumption~\ref{a:k} ensures that $K$ is \emph{uniformly} continuous on $\X$ (e.g. \citetapp[Theorem~4.19]{rudin1964principles}), then, as argued in \citetapp[Theorem 3.A]{parzen1962estimation}, the sequence $\testfn^N(\cdot, x)$ converges uniformly to $K(\cdot, x)$ in $\X$
(see also \citetapp[Theorem 3.3]{cacoullos1966estimation}).
As a consequence, the sequence $\lbrace\testfn^N(\cdot, x)\rbrace_{N\in\mathbb{N}}$ is uniformly equicontinuous and uniformly bounded (e.g. \citetapp[Theorem 7.25]{rudin1964principles}). It follows that $\lbrace\testfn^N(\cdot, x)\rbrace_{N\in\mathbb{N}}$ is (asymptotically) uniformly integrable in the sense of \citetapp[Definition 2.6]{feinberg2020fatou2}.

Using an argument analogous to that in Proposition~\ref{prop:asw} we can establish that $\bgN(\predictiveN)$ converges to $\bg(\predictive)$ almost surely in the weak topology, then using the fact that the sequence $\lbrace\testfn^N(\cdot, x)\rbrace_{N\in\mathbb{N}}$ is asymptotically uniformly integrable and equicontinuous with continuous limit $K(\cdot, x)$, the Dominated Convergence theorem for weakly converging measures (\citetapp[Corollary 5.2]{feinberg2020fatou2}; see also \citetapp[Theorem 3.3]{serfozo1982convergence}) implies that 
\begin{align}
\label{eq:as_estimator}
    f_{n+1}^N(x)=\bgN(\predictiveN)\left(\testfn^N(\cdot, x)\right)\rightarrow \bg(\predictive)  \left(K(\cdot, x) \right)= f_{n+1}(x)
\end{align}
almost surely as $N\rightarrow\infty$ for any fixed $x\in\X$.

To turn the result above into almost sure uniform convergence, i.e. 
\begin{align*}
    \pr\left( \limsup_{N\to\infty} \left\lbrace  \ \sup_{x\in\X}\vert f_{n+1}^N(x)-f_{n+1}(x)\vert >\varepsilon \right\rbrace\right)=0
\end{align*}
for every $\varepsilon>0$, we exploit assumption~\ref{a:space} and the resulting continuity properties of $K$.

Under~\ref{a:space}--\ref{a:k}, $K$ is uniformly continuous and we have that for any $\varepsilon>0$, there exists some $\delta_\varepsilon > 0$ such that
 \begin{align*}
   \vert f_{n+1}(x)-f_{n+1}(x^\prime)\vert& = \vert  \bg(\predictive)  \left(K(\cdot, x) -K(\cdot, x^\prime)\right)\vert \\
   &\leq \sup_{z\in\X}\vert K(z, x) -K(z, x^\prime)\vert \leq\frac{\varepsilon}{3}
\end{align*}
whenever $\norm{x-x^\prime}{2}<\delta_{\varepsilon}$. 
Using the definition of $\testfn^N$ and exploiting again the uniform continuity of $K$ we also have that for every $\varepsilon>0$
\begin{align*}
    \vert \testfn^N(t, x) -\testfn^N(t, x^\prime)\vert & \leq  \int_{\X}\vert K(u, x)-K(u, x^\prime)\vert \bwN^{-d_{\X}}\vert \Sigma\vert^{-1/2}S\left( (\bwN^2 \Sigma)^{-1/2}(t-u)\right)\du\\
    &\leq \frac{\varepsilon}{3} \int_{\X}\bwN^{-d_{\X}}\vert \Sigma\vert^{-1/2}S\left( (\bwN^2 \Sigma)^{-1/2}(t-u)\right)\du\leq \frac{\varepsilon}{3}
\end{align*}
if  $\norm{x-x^\prime}{2}<\delta_{\varepsilon}$.
It follows that $f^N_{n+1}$ is uniformly continuous: for any $\varepsilon>0$ 
\begin{align*}
       \vert f_{n+1}^N(x)-f_{n+1}^N(x^\prime)\vert& = \vert  \bgN(\predictiveN)  \left(\testfn^N(\cdot, x) -\testfn^N(\cdot, x^\prime)\right)\vert \\
   &\leq \sup_{z\in\X}\vert \testfn^N(z, x) -\testfn^N(z, x^\prime)\vert \leq\frac{\varepsilon}{3}
\end{align*}
whenever $\norm{x-x^\prime}{2}<\delta_{\varepsilon}$.

Let $B(x, \delta_{\varepsilon}):=\{x^\prime\in \X:\norm{x-x^\prime}{2}< \delta_\varepsilon\}$ denote the ball in $\X$ centred around $x$ of radius $\delta_\varepsilon$.
Under~\ref{a:space}, $\X$ is compact and therefore there exists a finite subcover $\{B(x^j)\}_{j=1}^J$ of $\{B(x,\delta_{\varepsilon})\}_{x\in \X}$.
Using the uniform continuity above and the following decomposition, we obtain for all $x\in B(x^j)$, $j=1, \ldots, J$ and for all $N$,
\begin{align*}
    \vert f_{n+1}^N(x)-f_{n+1}(x)\vert& \leq \vert f_{n+1}^N(x)-f_{n+1}^N(x^j)\vert + \vert f_{n+1}^N(x^j)-f_{n+1}(x^j)\vert + \vert f_{n+1}(x^j)-f_{n+1}(x)\vert\\
    &\leq \frac{\varepsilon}{3} +\vert f_{n+1}^N(x^j)-f_{n+1}(x^j)\vert+\frac{\varepsilon}{3} \\
    &\leq \frac{2}{3}\varepsilon +\max_{j=1, \ldots, J}\vert f_{n+1}^N(x^j)-f_{n+1}(x^j)\vert,
\end{align*}
from which follows
\begin{align*}
   \sup_{x\in\X} \vert f_{n+1}^N(x)-f_{n+1}(x)\vert& \leq \frac{2}{3}\varepsilon +\max_{j=1, \ldots, J}\vert f_{n+1}^N(x^j)-f_{n+1}(x^j)\vert.
\end{align*}

Therefore, to obtain almost sure uniform convergence, it is sufficient to show that
\begin{align*}
\pr\left( \left\lbrace \omega\in \Omega: \max_{j=1, \ldots, J}\vert f_{n+1}^N(\omega)(x^j)-f_{n+1}(x^j)\vert \rightarrow 0\ N\rightarrow\infty \right\rbrace\right) =1.
\end{align*}
Let us define $A_j:= \lbrace \omega\in \Omega: f_{n+1}^N(\omega)(x^j) \rightarrow f_{n+1}(x^j)\ N\rightarrow\infty\rbrace$. As a consequence of~\eqref{eq:as_estimator} we have $\pr(A_j)=1$ for all $j=1,\ldots, J$ and
\begin{align*}
    \pr\left( \left\lbrace \omega\in \Omega: \max_{j=1, \ldots, J}\vert f_{n+1}^N(\omega)(x^j)-f_{n+1}(x^j)\vert \rightarrow 0\ N\rightarrow\infty \right\rbrace\right) =\pr\left( \bigcap_{j=1,\ldots, J} A_j\right)=1,
\end{align*}
which gives the result.
\end{proof}

\subsection{Proof of Proposition~\ref{cor:kde_mise}}
\begin{proof}
A direct consequence of Proposition~\ref{prop:lp_kde} is the almost sure pointwise convergence of $f_{n+1}^N$ to $f_{n+1}$.
As both $f_{n+1}^N(x)$ and $f_{n+1}(x)$ are probability densities on $\X$, we can extend them to $\real^{d_{\X}}$ by taking
\begin{align*}
\psi_{n+1}^N(x) := \begin{cases}
f_{n+1}^N(x)\qquad x\in \X\\
0 \qquad \text{otherwise}
\end{cases}
\qquad \text{and}\qquad \psi_{n+1}(x) := \begin{cases}
f_{n+1}(x)\qquad x\in \X\\
0 \qquad \text{otherwise}
\end{cases}
\end{align*}
respectively.
Both $\psi_{n+1}^N(x)$ and $\psi_{n+1}(x)$ are probability densities on $\real^{d_{\X}}$ and are measurable functions. Moreover, $\psi_{n+1}^N(x)$ converges almost surely to $\psi_{n+1}(x)$ for all $x\in\real^{d_{\X}}$.  Hence, we can apply Glick's extension to Scheff\'e's Lemma (e.g. \citeapp{devroye1979l1}) to obtain
\begin{equation*}
\int_{\real^d} \vert \psi_{n+1}^N(x) - \psi_{n+1}(x)\vert \ \dx \overset{\textrm{a.s.}}{\rightarrow} 0
\end{equation*}
from which we can conclude
\begin{equation*}
\int_{\X} \vert f_{n+1}^N(x) - f_{n+1}(x)\vert \ \dx = \int_{\X} \vert \psi_{n+1}^N(x) - \psi_{n+1}(x)\vert \ \dx \leq \int_{\real^{d_{\X}}} \vert \psi_{n+1}^N(x) - \psi_{n+1}(x)\vert \ \dx \rightarrow 0
\end{equation*}
almost surely as $N\rightarrow \infty$.

Convergence of the MISE is a consequence of Proposition~\ref{prop:lp_kde}, \ref{a:space} and the Dominated Convergence Theorem
\begin{align*}
\Exp\left[\int_{\X}  \vert f_{n+1}^N(x) - f_{n+1}(x)\vert^2\dx\right]  \leq \lambda(\X)\Exp\left[ \supnorm{ f_{n+1}^N - f_{n+1}}^2\right] \dx  \rightarrow 0
\end{align*}
as $N\rightarrow \infty$, where $\lambda(\X)<\infty$ denotes the Lebesgue measure of $\X$.

\end{proof}

\FloatBarrier
\section{Additional Examples}
\label{sec:additional_examples}
\subsection{Analytically tractable example}
\label{sec:analytically_tractable}
Here we consider a toy example involving Gaussian densities for which both the EM recursion~\eqref{eq:em} and the EMS recursion~\eqref{eq:ems} can be solved at least implicitly.
The Fredholm integral equation we consider is
\begin{equation*}
\N(y; \mu, \sigma_f^2 + \sigma_g^2) = \int_{\X} \N(x; \mu, \sigma_f^2)\N(y; x, \sigma_g^2)\ \dx,\qquad y\in\Y
\end{equation*}
where $\X = \Y = \real$.
The initial distribution $f_1(x)$ is $\N(x; \mu, \sigma_{\textrm{EMS}, 1}^2)$ for some $\sigma_{\textrm{EMS}, 1}^2 > 0$.

\begin{figure}[t]
\centering
\resizebox{0.9\textwidth}{!}{%
\begin{tikzpicture}[every node/.append style={font=\normalsize}]
\node (img1) {\includegraphics[width=0.40\textwidth]{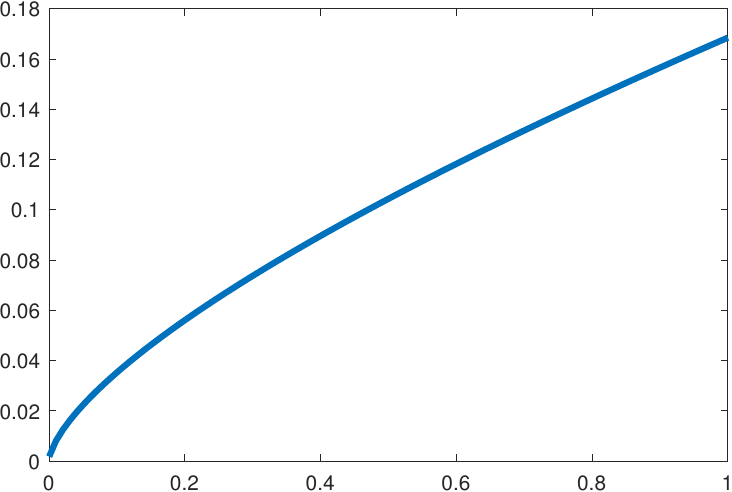}};
\node[below=of img1, node distance = 0, yshift = 1cm] { $\varepsilon$};
  \node[left=of img1, node distance = 0, rotate=90, anchor = center, yshift = -0.7cm] {$\sigma_{\textrm{EMS}}^2$};
\end{tikzpicture}
\begin{tikzpicture}[every node/.append style={font=\normalsize}]
\node[right= of img2, yshift = 0.1cm] (img3){\includegraphics[width=0.40\textwidth]{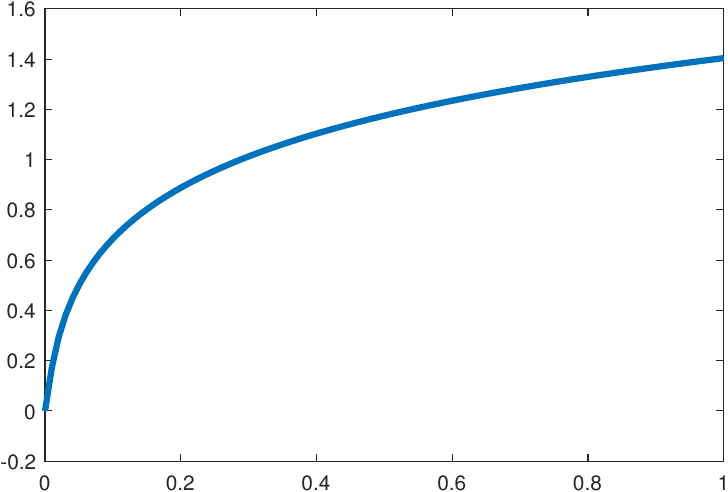}};
\node[below=of img3, node distance = 0, yshift = 1cm] {$\varepsilon$};
  \node[left=of img3, node distance = 0, rotate=90, anchor = center, yshift = -0.7cm] {$\KL$};
\end{tikzpicture}
}
\caption{Functional dependence of the variance of the resulting approximation $\sigma_{\textrm{EMS}}^2$ (left) and the Kullback--Leibler divergence~\eqref{eq:kl_gaussian} (right) on the smoothing parameter $\varepsilon$.}
\label{fig:gaussian_smoothing}
\end{figure}

The fixed point $f_{EMS}$ of the EMS recursion~\eqref{eq:ems} with Gaussian smoothing kernel $K(x^\prime, x)=\N(x; x^\prime, \varepsilon^2)$ is a Gaussian with mean $\mu$ and variance $\sigma_{\textrm{EMS}}^2$ solving
\begin{equation}
\label{eq:gaussian_fixedpoint}
\sigma_{\textrm{EMS}}^6 + \sigma_{\textrm{EMS}}^4(\sigma_g^2 - \sigma_h^2) -2 \sigma_{\textrm{EMS}}^2 \varepsilon^2\sigma^2_g -2\varepsilon^2\sigma^2_g = 0.
\end{equation}
We can compute the Kullback--Leibler divergence achieved by $f_{EMS}$:
\begin{equation}
\label{eq:kl_gaussian}
\KL\left( h, \int_{\X} f_{EMS}(x)g(y \mid \cdot)\ \dx\right) = \frac{1}{2}\log \frac{\sigma^2_{\textrm{EMS}} + \sigma^2_g}{\sigma^2_h} + \frac{\sigma^2_h}{2(\sigma^2_{\textrm{EMS}} + \sigma^2_g)} - \frac{1}{2},
\end{equation}
as $\int_{\X} f_{EMS}(x)g(y \mid \cdot)\ \dx$ is the Gaussian density $\N(y; \mu, \sigma^2_{\textrm{EMS}} + \sigma^2_g)$.
The fixed point for the EM recursion~\eqref{eq:em} is obtained setting $\varepsilon=0$. The corresponding value of the Kullback--Leibler divergence is 0.
Figure~\ref{fig:gaussian_smoothing} shows the dependence of $\sigma^2_{\textrm{EMS}}$ and of the $\KL$ divergence on $\varepsilon$.

The conjugacy properties of this model allow us to obtain an exact form for the potential~\eqref{eq:potential}
\begin{equation}
\label{eq:gaussian_potentialexact}
G_n(x_n, y_n) = \frac{g(y_n \mid x_n)}{h_n(y_n)} = \frac{\N(y_n; x_n, \sigma_g^2)}{\N(y_n; \mu, \sigma_g^2 + \sigma_{\textrm{EMS}, n}^2)}
\end{equation}
where $\sigma_{\textrm{EMS}, n}^2$ is the variance of $f_n(x)$.

We use this example to show that the maximum likelihood estimator obtained with the EM iteration~\eqref{eq:em_discrete} does not enjoy good properties, and to motivate the addition of a smoothing step in the iterative process (Figure~\ref{fig:gaussian_motivation}).

Taking $\sigma_f^2=0.043^2$ and $\sigma_g^2=0.045^2$ we have $|1 - \int_0^1 f(x) \dx| < 10^{-30}$, thus we can restrict our attention to $[0, 1]$ and implement the discretized EM and EMS by taking $B=D=100$ equally spaced intervals in this interval. The number of iterations $n=100$ is fixed for EM, EMS and SMC. The number of particles for SMC is $N=10^4$ and $\varepsilon = 10^{-2}$.
The smoothing matrix for EMS is obtained by discretization of the smoothing kernel $K(x^\prime, x)=\N(x; x^\prime, \varepsilon^2)$.

\begin{figure}[t]
\centering
\resizebox{0.8\textwidth}{!}{%
\begin{tikzpicture}[every node/.append style={font=\normalsize}]
\node (img1) {\includegraphics[width=0.75\textwidth]{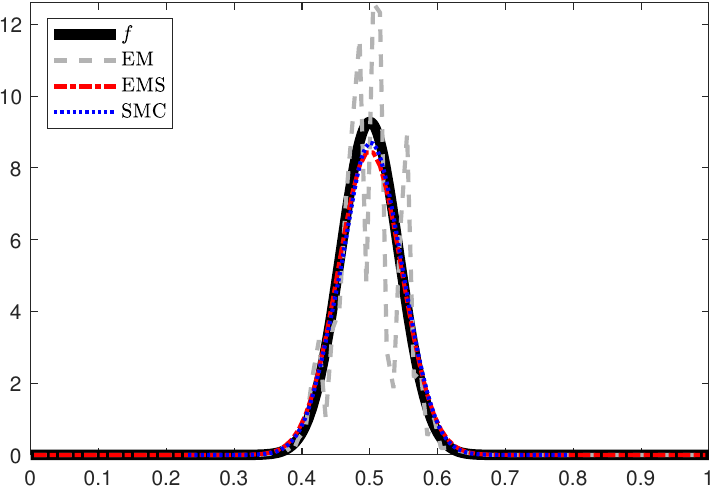}};
\node[below=of img1, node distance = 0, yshift = 1cm] { $x$};
  \node[left=of img1, node distance = 0, rotate=90, anchor = center, yshift = -0.7cm] {$f(x)$};
\end{tikzpicture}
}
\caption{Comparison of EM, EMS and SMC with exact potential $\weight$ for the analytically tractable example.}
\label{fig:gaussian_motivation}
\end{figure}

Figure~\ref{fig:gaussian_motivation} clearly shows that the EM estimate, despite identifying the correct support of the solution, cannot recover the correct shape and is not smooth. On the contrary, both EMS and SMC give good reconstruction of $f$ while preserving smoothness.

\begin{figure}[t]
\centering
\resizebox{0.9\textwidth}{!}{%
\begin{tikzpicture}[every node/.append style={font=\normalsize}]
\node (img1) {\includegraphics[width=0.45\textwidth]{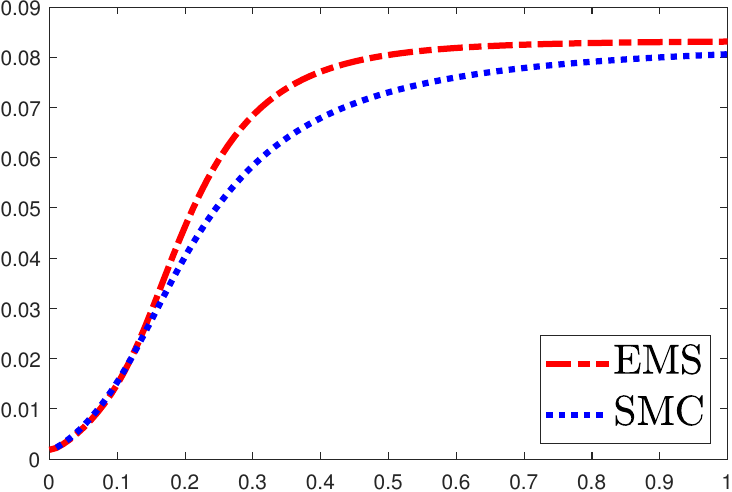}};
\node[below=of img1, node distance = 0, yshift = 1cm] { $\varepsilon$};
  \node[left=of img1, node distance = 0, rotate=90, anchor = center, yshift = -0.7cm] {$\hat{\sigma}_{\textrm{EMS}, n}^2$};
\end{tikzpicture}
\begin{tikzpicture}[every node/.append style={font=\normalsize}]
\node[right= of img1, yshift = 0.1cm] (img2){\includegraphics[width=0.45\textwidth]{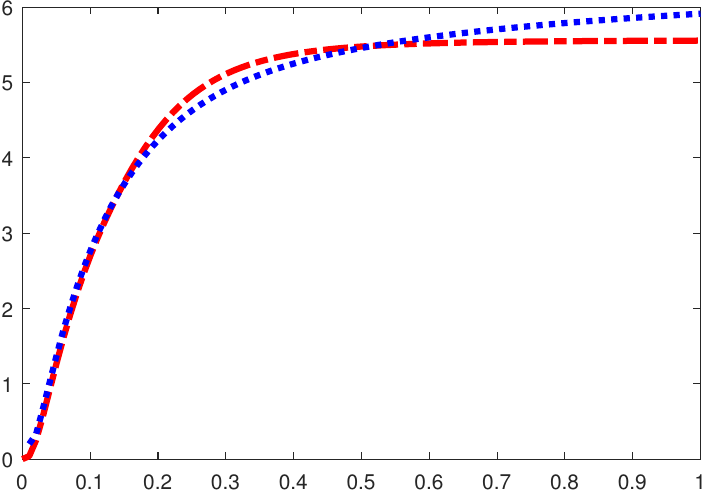}};
\node[below=of img2, node distance = 0, yshift = 1cm] {$\varepsilon$};
  \node[left=of img2, node distance = 0, rotate=90, anchor = center, yshift = -0.7cm] {$\ise(f_{n+1}^N)$};
\end{tikzpicture}
}
\resizebox{0.9\textwidth}{!}{%
\begin{tikzpicture}[every node/.append style={font=\normalsize}]
\node[below= of img2, yshift = 0.1cm] (img3){\includegraphics[width=0.45\textwidth]{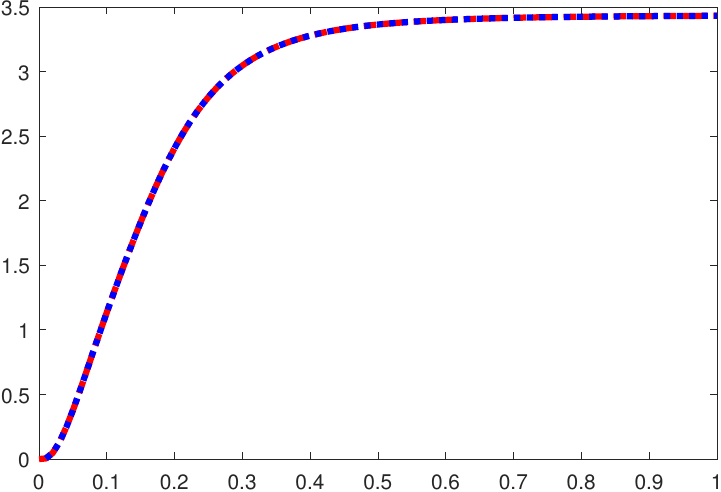}};
\node[below=of img3, node distance = 0, yshift = 1cm] {$\varepsilon$};
  \node[left=of img3, node distance = 0, rotate=90, anchor = center, yshift = -0.7cm] {$\ise(h_{n+1}^N)$};
\end{tikzpicture}
\begin{tikzpicture}[every node/.append style={font=\normalsize}]
\node[right= of img3, yshift = 0.1cm] (img4){\includegraphics[width=0.45\textwidth]{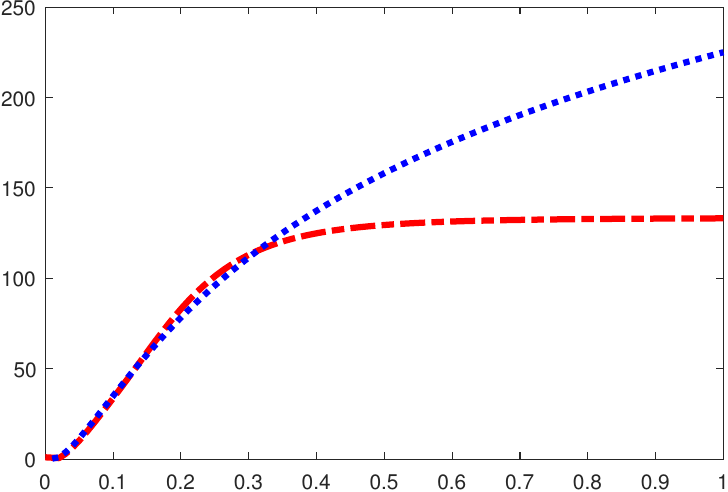}};
\node[below=of img4, node distance = 0, yshift = 1cm] {$\varepsilon$};
  \node[left=of img4, node distance = 0, rotate=90, anchor = center, yshift = -0.7cm] {$\KL$};
\end{tikzpicture}
}
\caption{Estimated variance (top left), $\ise(f_{n+1}^N)$ (top right), $\ise(h_{n+1}^N)$ (bottom left) and Kullback--Leibler divergence (bottom right) as functions of the smoothing parameter $\varepsilon$ for the analytically tractable example. The deterministic discretization~\eqref{eq:ems_discrete} (red) and the stochastic discretization via SMC with the exact potentials~\eqref{eq:gaussian_potentialexact} (blue) are compared.}
\label{fig:gaussian_comparison_ems_smc}
\end{figure}

Then we compare the deterministic discretization~\eqref{eq:ems_discrete} of the EMS recursion~\eqref{eq:ems} with the stochastic one given by SMC with the exact potential~\eqref{eq:gaussian_potentialexact}. To do so, we consider the variance of the obtained reconstructions, their integrated square error~\eqref{eq:ise}, the mean integrated square error for between $h$ and
\begin{equation*}
\hat{h}^N_{n+1}(y) = \int_{\X}f^N_{n+1}(x)g(y \mid x) \dx
\end{equation*}
and the Kullback--Leibler divergence $\KL(h, \hat{h}^N_{n+1})$ (restricting to the $[0,1]$ interval and computing by numerical integration) as the value of the smoothing parameter $\varepsilon$ increases (Figure~\ref{fig:gaussian_comparison_ems_smc}).
We consider one run of discretized EMS and compare it with 1,000 repetitions of SMC for each value of $\varepsilon$ (this choice follows from the fact that discretized EMS is a deterministic algorithm). The number of particles for SMC is $N=10^3$ and for each run we draw a sample $\mathbf{Y}$ of size $10^4$ from $h$ and resample from it $M=\min(N, 10^4)$ particles in line 2 of Algorithm~\ref{alg:fpsmc}.
Both algorithms correctly identify the mean for every value of $\varepsilon$ while the estimated variances increase from that obtained with the EM algorithm ($\varepsilon = 0$) to the variance of a Uniform distribution over $[0, 1]$ (Figure~\ref{fig:gaussian_comparison_ems_smc} top left).
Unsurprisingly, the $\ise$ for both $f_{n+1}^N$ and $h_{n+1}^N$ increases with $\varepsilon$ (Figure~\ref{fig:gaussian_comparison_ems_smc} top right and bottom left), showing that an excessive amount of smoothing leads to poor reconstructions. In particular for values of $\varepsilon \geq 0.5$ the reconstructions of $f$ become flatter and tend to coincide with a Uniform distribution in the case of EMS and with a normal distribution centered at $\mu$ and with high variance ($\geq 0.08$) in the case of SMC.
This difference reflects in the behavior of the Kullback--Leibler divergence, which stabilizes around 133 for EMS while keeps increasing for SMC (Figure~\ref{fig:gaussian_comparison_ems_smc} bottom right).

We now consider the effect of the use of the approximated potentials $\weightN$ in place of the exact ones $\weight$ in the SMC scheme. We compare the $\ise$ for $f_{n+1}^N$ given by the SMC scheme with exact and approximated potentials for values of the number $M$ of samples $Y^{ij}_n$ drawn from $h$ at each time step between 1 and $10^3$ with 1,000 repetitions for each $M$.
Through this comparison we also address the computational complexity $O(MN)$ of the algorithm, with focus on the choice of the value of $M$.
Figure~\ref{fig:gaussian_M} shows the results for $N = 10^3$ and $\varepsilon = 10^{-2}$. The behavior for different values of $N$ and $\varepsilon$ is similar.
The plot of $\ise(f_{n+1}^N)$ shows a significant improvement when $M>1$ but little further improvement for $M>10$.

\begin{figure}[t]
\centering
\resizebox{0.9\textwidth}{!}{%
\begin{tikzpicture}[every node/.append style={font=\normalsize}]
\node (img1) {\includegraphics[width=0.46\textwidth]{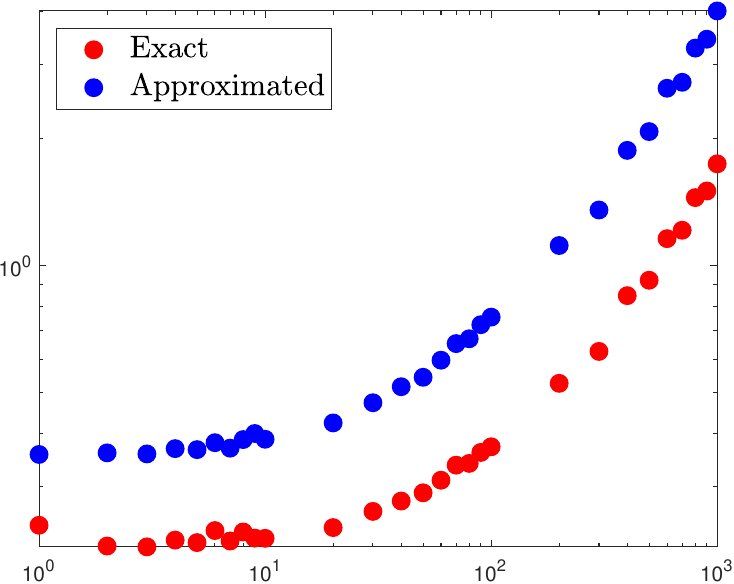}};
\node[below=of img1, node distance = 0, yshift = 1cm] { $M$};
  \node[left=of img1, node distance = 0, rotate=90, anchor = center, yshift = -0.7cm] { Runtime (s)};
\end{tikzpicture}
\begin{tikzpicture}[every node/.append style={font=\normalsize}]
\node[right= of img1, yshift = 0.1cm] (img2){\includegraphics[width=0.45\textwidth]{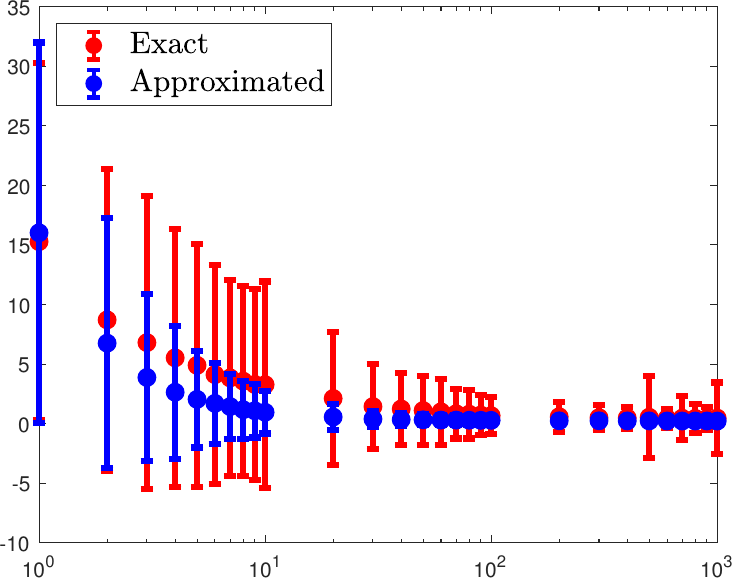}};
\node[below=of img2, node distance = 0, yshift = 1cm] {$M$};
  \node[left=of img2, node distance = 0, rotate=90, anchor = center, yshift = -0.7cm] {$\ise(f_{n+1}^N)$};
\end{tikzpicture}
}
\caption{Dependence of runtime and $\ise(f_{n+1}^N)$ on the value of $M$, the number of samples drawn from $h$ at each iteration, for the SMC scheme run with the exact potential (blue) and the approximated potential (red). The error bars represent twice the standard deviation of $\ise(f_{n+1}^N)$.}
\label{fig:gaussian_M}
\end{figure}

To further investigate the choice of $M$ we compare the reconstructions obtained using the exact and the approximated potentials for $M=10$, $M=10^2$ and $M=N=10^3$.
Figure~\ref{fig:gaussian_reconstruction} shows pointwise means and pointwise MSE~\eqref{eq:mse} for 1,000 reconstructions. The means of the reconstructions with the exact potentials (blue) coincide for the three values of $M$, the means of the reconstructions with the approximated potentials (red) also coincide but have heavier tails than those obtained with the exact potentials.
The MSE is similar for reconstructions with exact and approximated potentials with the same value of $M$. In particular, the little improvement of the MSE from $M=10^2$ to $M=10^3$ suggests that $M=10^2$ could be used instead of $M=N=10^3$ if the computational resources are limited. Using $M=10^2$ instead of $M=10^3$ reduces the average runtime by $\approx 80\%$ for both the algorithm using the exact potentials and that using the approximated potentials.

\begin{figure}[t]
\centering
\resizebox{0.9\textwidth}{!}{%
\begin{tikzpicture}[every node/.append style={font=\normalsize}]
\node (img1) {\includegraphics[width=0.45\textwidth]{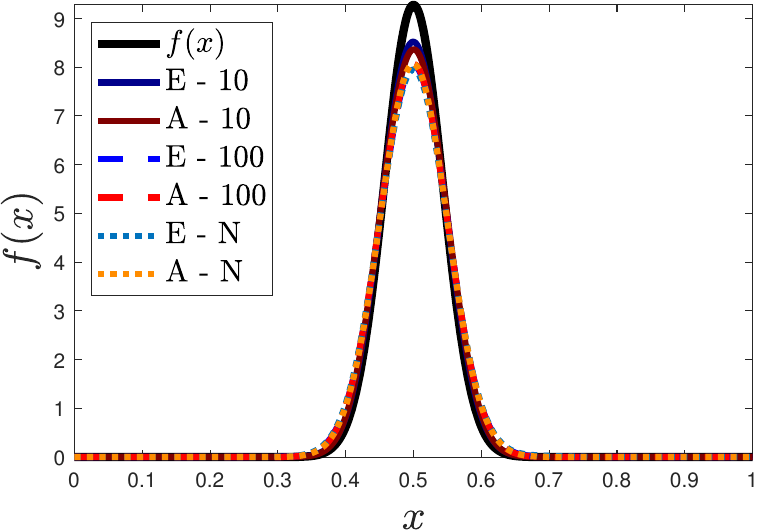}};
\node[below=of img1, node distance = 0, yshift = 1cm] { $x$};
  \node[left=of img1, node distance = 0, rotate=90, anchor = center, yshift = -0.7cm] { $\widehat{\Exp}\left[f^N_n(x)\right]$};
\end{tikzpicture}
\begin{tikzpicture}[every node/.append style={font=\normalsize}]
\node[right= of img1, yshift = 0.1cm] (img2){\includegraphics[width=0.45\textwidth]{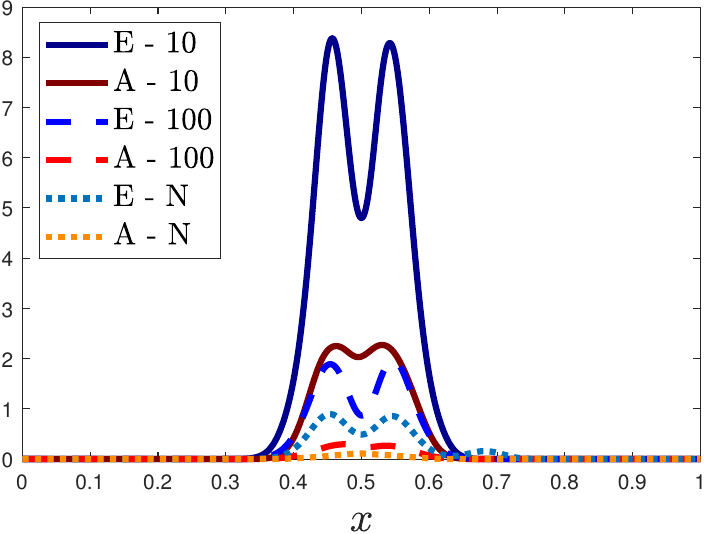}};
\node[below=of img2, node distance = 0, yshift = 1cm] {$x$};
  \node[left=of img2, node distance = 0, rotate=90, anchor = center, yshift = -0.7cm] { $\widehat{\text{MSE}}(x_c)$};
\end{tikzpicture}
}
\caption{Reconstruction of $f(x) = \N(x; 0.5, 0.043^2)$ from data distribution $h(y) = \N(y; 0.5, 0.043^2 + 0.045^2)$. The number of particles $N$ is $10^3$ and the smoothing parameter $\varepsilon=10^{-2}$. $M=10$, $M=10^2$ and $M=N$ are compared through the pointwise means of the reconstructions and the pointwise mean squared error (MSE).}
 \label{fig:gaussian_reconstruction}
\end{figure}

\citetapp[Section 5.4]{silverman1990smoothed} conjectured that under suitable assumptions the EMS map~\eqref{eq:ems} has a unique fixed point. This conjecture is empirically confirmed by the results in Figure~\ref{fig:gaussian_uniqueness}.
We run EM, EMS and SMC with approximated potentials for $n=100$ iterations starting from three initial distributions $f_1(x)$: a Uniform on $[0, 1]$, a Dirac $\delta$ centered at 0.5 and the solution $\N(x; \mu, \sigma_f^2)$.
The number of particles is set to $N=10^3$ and the smoothing parameter $\varepsilon = 10^{-1}$.
Both EMS and SMC converge to the same value of the Kullback--Leibler divergence regardless of the starting distribution. The speed of convergence of the three algorithms is similar, in each case little further change is observed once 4 iterations have occurred.

\begin{figure}[t]
\centering
\resizebox{0.65\textwidth}{!}{%
\begin{tikzpicture}[every node/.append style={font=\normalsize}]
\node (img1) {\includegraphics[width=0.6\textwidth]{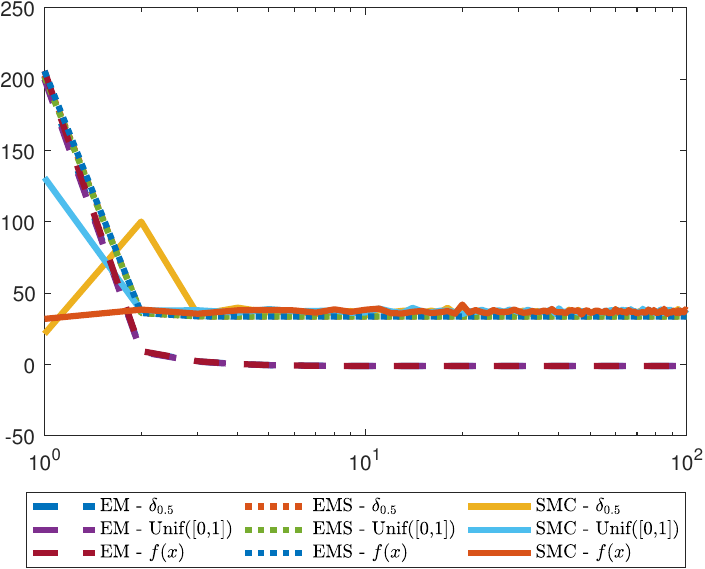}};
\node[below=of img1, node distance = 0, yshift = 1cm] { $\log_{10}($iteration$)$};
  \node[left=of img1, node distance = 0, rotate=90, anchor = center, yshift = -0.7cm] { $\KL(h, \hat{h}_n^N)$};
\end{tikzpicture}
}
\caption{Kullback--Leibler divergence between $h$ and $\hat{h}_n^N$ as function of the number of iterations. Three starting distributions are considered: Uniform$\left([0, 1]\right)$, $\delta_{0.5}$, $\N(x; \mu, \sigma_f^2)$. The behavior of EM (dashed lines), EMS (dotted lines) and SMC (solid lines) is compared.}
 \label{fig:gaussian_uniqueness}
\end{figure}

\FloatBarrier
\subsection{Motion deblurring}
Consider a simple example of motion deblurring where the observed picture $h$ is obtained while the object of interest is moving with constant speed $b$ in the horizontal direction \citepapp{vardi1993image, lee1994experiments}.
The constant motion in the horizontal direction is modeled by multiplying the density of a uniform random variable on $[-b/2, b/2]$ describing the motion in the horizontal direction and a Gaussian, $\N(v; y, \sigma^2)$, with small variance, $\sigma^2=0.02^2$, describing the relative lack of motion in the vertical direction
\begin{align*}
g(u, v \mid x, y) = \N(v; y, \sigma^2)\text{Uniform}_{[x-b/2, x+b/2]}(u).
\end{align*}

We obtain the corrupted image in Figure~\ref{fig:bcblur} from the reference image in Figure~\ref{fig:bc} using the model above with constant speed $b = 128$ pixels and adding multiplicative noise as in \citetapp[Section 6.2]{lee1994experiments}.
Figure~\ref{fig:bcblur} is a noisy discretization of the unknown $h(u, v)$ on a $300\times 600$ grid.
The addition of multiplicative noise makes the model~\eqref{eq:fe} misspecified, but still suitable to describe the deconvolution problem when the amount of noise is low. For higher levels of noise, the noise itself should be taken into account when modeling the generation of the data corresponding to $h$.

\begin{figure}[t]
\centering
\subfloat[Original sharp image \label{fig:bc}]{\includegraphics[width=0.37\textwidth]{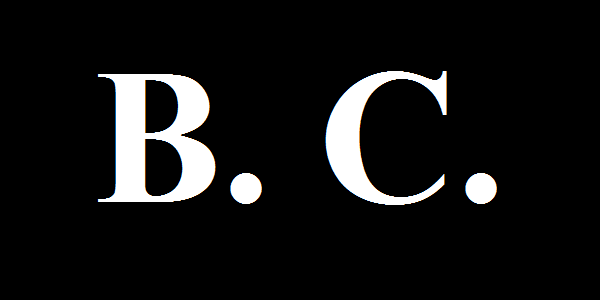}}\quad
\subfloat[Blurred image with 0.5\% multiplicative noise \label{fig:bcblur}]{\includegraphics[width=0.37\textwidth]{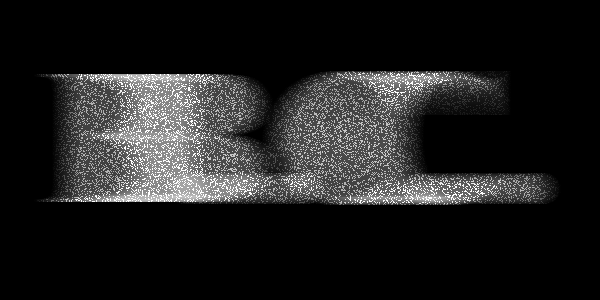}}
\\
\subfloat[Reconstruction with RL \label{fig:bcrl}]{\includegraphics[width=0.37\textwidth]{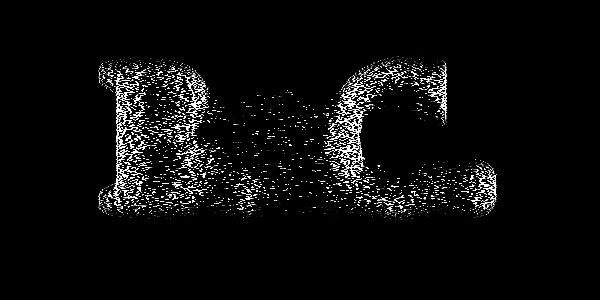}}\quad
\subfloat[Reconstruction with SMC \label{fig:bcsmc}]{\includegraphics[width=0.37\textwidth]{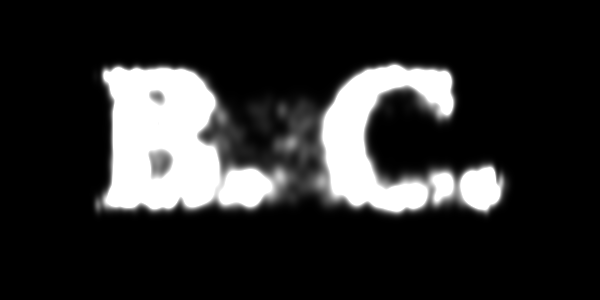}}
\caption{Reference image, blurred noisy data distribution and reconstructions for the motion deblurring example. Each scheme used 100 iterations; the SMC scheme used $N=5,000$ particles.}
\end{figure}

We compare the reconstruction obtained using the SMC scheme with that given by the \texttt{deconvlucy} function in MATLAB\textsuperscript{\textcopyright} \citepapp{rlmatlab}, an efficient implementation of the Richardson--Lucy algorithm  (i.e. EM for Poisson counts) for image processing which considers the data image as a discretization of the unknown density $h$ into bins. The same image is used to draw the samples necessary for the SMC implementation.

The smoothing parameter is $\varepsilon = 10^{-3}$, and the number of particles is $N=5,000$. These values are chosen to achieve a trade-off between smoothing and accuracy of the reconstruction and to keep the runtime under three minutes on a standard laptop.

The distance between the reconstructions and the original image is evaluated using both the ISE~\eqref{eq:ise} and the match distance, i.e. the $\mathbb{L}_1$ norm of the cumulative histogram of the image, a special case of the Earth Mover's Distance for gray-scale images \citepapp{rubner2000earth}.
SMC gives visibly smoother images and is better at recovering the shape of the original image ($\ise(f_{n+1}^N)$ is 1.4617 for SMC and 2.0863 for RL). In contrast, the RL algorithm performs better in terms of match distance (0.0054 for RL and 0.0346 for SMC).

\FloatBarrier
\section{Additional Results for PET Example}
\label{app:pet}

The reconstruction of cross-sectional images from projections given by PET scanners is modeled by the Radon transform \citepapp{radon1986determination}
\begin{align}
\label{eq:radon}
    h(\phi, \xi) = \int_{-\infty}^{+\infty} f(\xi\cos \phi-t\sin\phi, \xi\sin \phi+t\cos\phi)\dt,
\end{align}
for $(\phi, \xi)\in \Y=[0, 2\pi]\times[-R, R]$, where the right hand side is the line integral along the line with equation $x\cos\phi +y \sin\phi =\xi$.
We rewrite~\eqref{eq:radon} as a Fredholm integral equation~\eqref{eq:fe} modelling the alignment between the projections onto $(\phi, \xi)$ and the corresponding location $(x, y)$ in the reference image using a Gaussian distribution with small variance (in the experiments we use $\sigma^2= 0.02^2$)
\begin{align*}
h(\phi, \xi)
    = \int_{\X}\mathcal{N}\left( x\cos(\phi) +y \sin(\phi) -\xi; 0, \sigma^2\right)f(x, y)\dx\dy,
\end{align*}
where $\X=[-r, r]^2$.
The kernel $g(\phi, \xi \mid x, y) = \mathcal{N}\left( x\cos(\phi) +y \sin(\phi) -\xi; 0, \sigma^2\right)$ is not a Markov kernel (in the sense that it does not integrate to 1 for fixed $(x, y)$), however, we can use the re-normalization described in \citetapp[Section 6]{chae2018algorithm} to obtain the Markov kernel
\begin{align*}
   \tilde{g}(\phi, \xi \mid x, y) = \frac{g(\phi, \xi \mid x, y)}{C(x, y, \sigma^2)}
\end{align*}
where $C(x, y, \sigma^2)$ is the normalizing constant for each fixed $(x, y)\in \X$
\begin{align*}
    C(x, y, \sigma^2)=&\int_{\Y}g(\phi, \xi \mid x, y)\textrm{d}\phi\ \textrm{d}\xi\\
    =&\int_{0}^{2\pi}\frac{1}{2}\left[ \textrm{erf}\left(\frac{R-x\cos\phi-y\sin\phi}{\sqrt{2}\sigma}\right)+\textrm{erf}\left(\frac{R+x\cos\phi+y\sin\phi}{\sqrt{2}\sigma}\right)\right]\textrm{d}\phi
\end{align*}
with $\textrm{erf}$ the error function.
Recalling that $R=92$, $\phi\in[0, 2\pi]$ and $(x, y)\in[-64, 64]^2$ (i.e. we want to reconstruct a $128\times 128$ pixels image) and selecting $\sigma=0.02$ gives 
\begin{align*}
   \left\lvert \frac{1}{2}\left[ \textrm{erf}\left(\frac{R-x\cos\phi-y\sin\phi}{\sqrt{2}\sigma}\right)+\textrm{erf}\left(\frac{R+x\cos\phi+y\sin\phi}{\sqrt{2}\sigma}\right)\right] - 1\right\rvert< 10^{-17}
\end{align*}
for all $\phi\in[0, 2\pi]$ and $(x, y)\in[-64, 64]^2$.
The above shows that, for $\sigma^2$ sufficiently small (e.g. $\sigma^2=0.02^2$ as we use in our experiments), i.e. if the Gaussian distribution appropriately describes the alignment onto $x\cos\phi +y \sin\phi =\xi$,
\begin{align*}
  \left\lvert C(x, y, \sigma^2)- 2\pi\right\rvert <10^{-17}
\end{align*}
for all $(x, y)\in \X$.
Therefore we obtain, up to a negligible approximation, an integral equation satisfying~\ref{a:space}--\ref{a:g} dividing $h$ by $2\pi$:
\begin{align*}
\frac{h(\phi, \xi)}{2\pi}
    = \int_{\X}\frac{\mathcal{N}\left( x\cos(\phi) +y \sin(\phi) -\xi; 0, \sigma^2\right)}{2\pi}f(x, y)\dx\dy.
\end{align*}

\begin{figure}[b]
\centering
\resizebox{0.9\textwidth}{!}{%
\begin{tikzpicture}[baseline, every node/.append style={font=\normalsize}]
\node (img1) {\includegraphics[width=0.25\textwidth]{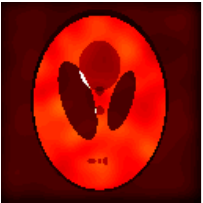}};
\node[below=of img1, node distance = 0, yshift = 1cm] (label1){Iteration 1};
\node[below=of label1, node distance = 0, yshift = 1cm] {$\ise = 0.4432$};
\end{tikzpicture}
\begin{tikzpicture}[baseline, every node/.append style={font=\normalsize}]
\node[right=of img1] (img2) {\includegraphics[width=0.25\textwidth]{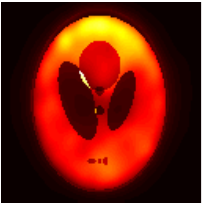}};
\node[below=of img2, node distance = 0, yshift = 1cm] (label2) {Iteration 5};
\node[below=of label2, node distance = 0, yshift = 1cm] {$\ise = 0.1131$};
\end{tikzpicture}
\begin{tikzpicture}[baseline, every node/.append style={font=\normalsize}]
\node[right=of img2] (img3) {\includegraphics[width=0.25\textwidth]{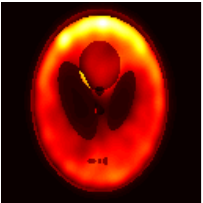}};
\node[below=of img3, node distance = 0, yshift = 1cm] (label3){Iteration 10};
\node[below=of label3, node distance = 0, yshift = 1cm] {$\ise = 0.0708$};
\end{tikzpicture}
\begin{tikzpicture}[baseline, every node/.append style={font=\normalsize}]
\node[right=of img3] (img4) {\includegraphics[width=0.25\textwidth]{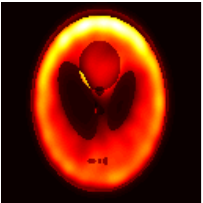}};
\node[below=of img4, node distance = 0, yshift = 1cm] (label4){Iteration 15};
\node[below=of label4, node distance = 0, yshift = 1cm] {$\ise = 0.0807$};
\end{tikzpicture}
}
\resizebox{0.9\textwidth}{!}{%
\begin{tikzpicture}[baseline, every node/.append style={font=\normalsize}]
\node (img5) {\includegraphics[width=0.25\textwidth]{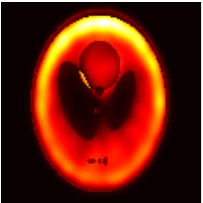}};
\node[below=of img5, node distance = 0, yshift = 1cm] (label5){Iteration 20};
\node[below=of label5, node distance = 0, yshift = 1cm] {$\ise = 0.0755$};
\end{tikzpicture}
\begin{tikzpicture}[baseline, every node/.append style={font=\normalsize}]
\node[right=of img5] (img6) {\includegraphics[width=0.25\textwidth]{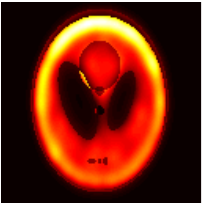}};
\node[below=of img6, node distance = 0, yshift = 1cm] (label6){Iteration 50};
\node[below=of label6, node distance = 0, yshift = 1cm] {$\ise = 0.0822$};
\end{tikzpicture}
\begin{tikzpicture}[baseline, every node/.append style={font=\normalsize}]
\node[right=of img6] (img7) {\includegraphics[width=0.25\textwidth]{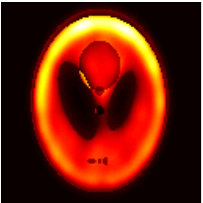}};
\node[below=of img7, node distance = 0, yshift = 1cm] (label7){Iteration 70};
\node[below=of label7, node distance = 0, yshift = 1cm] {$\ise = 0.0782$};
\end{tikzpicture}
\begin{tikzpicture}[baseline, every node/.append style={font=\normalsize}]
\node[right=of img7] (img8) {\includegraphics[width=0.25\textwidth]{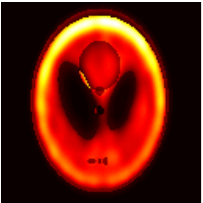}};
\node[below=of img8, node distance = 0, yshift = 1cm] (label8){Iteration 100};
\node[below=of label8, node distance = 0, yshift = 1cm] {$\ise = 0.0748$};
\end{tikzpicture}
\begin{tikzpicture}[baseline, every node/.append style={font=\normalsize}]
\end{tikzpicture}
}
\resizebox{0.9\textwidth}{!}{%
\begin{tikzpicture}[baseline, every node/.append style={font=\normalsize}]
\node {\includegraphics{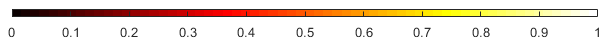}};
\end{tikzpicture}
}
\caption{Relative error for the reconstructions in Figure~\ref{fig:pet_reconstruction}. The $\ise$ at each iteration is given in the captions and stabilises below 0.08.}
\label{fig:pet_relative_error}
\end{figure}

Figure~\ref{fig:pet_relative_error} shows relative error and $\ise$ for the reconstructions in Figure~\ref{fig:pet_reconstruction}; the $\ise$ between the original image and the reconstructions at iteration 50 to 100 stabilizes below 0.08.
The stopping criterion~\eqref{eq:stop} is a trade-off between Monte Carlo error and convergence to a fixed point. In particular, when $\zeta( f^N_{k})= \int_{\X}\vert f^N_{k}(x)\vert^2\dx$, larger values of $N$ will make the r.h.s. of~\eqref{eq:stop} smaller which corresponds to a smaller tolerance to assess the convergence to the fixed point. On the other hand, small values of $N$ will give poorer reconstructions and might require more iterations $n$ to satisfy the stopping criterion~\eqref{eq:stop}. For instance, for $N=1,000$ the stopping criterion is not satisfied in 100 iterations despite the r.h.s. of~\eqref{eq:stop} being of order $10^{-3}$ against the $ 10^{-5}$ order when $N=20,000$.

\FloatBarrier
\section{Effect of Lower Bound on Gaussian Mixture Example}
\label{app:lb}
Consider the example in Section~\ref{sec:indirect_density_estimation} and instead of defining the integrals on $\X=\Y=\real$ take $\X=\Y=[0.4-a, 0.4+a]$ with $a\rightarrow\infty$ so that we obtain the integral equation
\begin{align*}
    \tilde{h}(y) = \int_{\X} \tilde{f}(x)\tilde{g}(y \mid x)\dx
\end{align*}
with
\begin{align*}
    \tilde{h}(y)&= \frac{h(y)}{\frac{1}{3}C(a, 0.3, 0.045^2 + 0.015^2)+\frac{2}{3}C(a, 0.5, 0.045^2 + 0.043^2)}\\
    \tilde{g}(y \mid x) &= \frac{g(y \mid x)}{C(a, x, 0.045^2)}\\
    \tilde{f}(x) &= \frac{f(x)C(a, x, 0.045^2)}{\frac{1}{3}C(a, 0.3, 0.045^2 + 0.015^2)+\frac{2}{3}C(a, 0.5, 0.045^2 + 0.043^2)}
\end{align*}
where
\begin{align*}
    C(a, \mu, \sigma):= \int_{0.4-a}^{0.4+a}\N(x; \mu, \sigma^2)\dx=\frac{1}{2} \left(\textrm{erf} \left(\frac{(a + 0.4 - \mu)}{\sqrt{2\sigma^2}}\right) - \textrm{erf}\left(\frac{(0.4 -a - \mu)}{\sqrt{2\sigma^2}}\right)\right).
\end{align*}

In any of the intervals $[0.4-a, 0.4+a]$ assumption~\ref{a:g} is satisfied, in particular $\tilde{g}$ is bounded below. We study the behaviour of the reconstructions as $a\rightarrow\infty$ to check the influence of the lower bound on $g$ on the accuracy of the reconstructions measured through the average $\ise$ in~\eqref{eq:ise} over 100 repetitions. The algorithmic set up is the same of Section~\ref{sec:indirect_density_estimation}.
Figure~\ref{fig:mixture_lb} show that for $a\in[0.2, 1]$ (which corresponds to $\X=\Y=[0.2, 0.6]$ up to $\X=\Y=[-0.6, 1.4]$) the average reconstruction error is not influenced by the lower bound on $g$, the behaviour for larger values of $a$ is equivalent since $|1 - \int_{-0.6}^{1.4} f(x) \dx| < 10^{-30}$.

\begin{figure}[t]
\centering
\resizebox{0.65\textwidth}{!}{%
\begin{tikzpicture}[every node/.append style={font=\normalsize}]
\node (img1) {\includegraphics[width=0.6\textwidth]{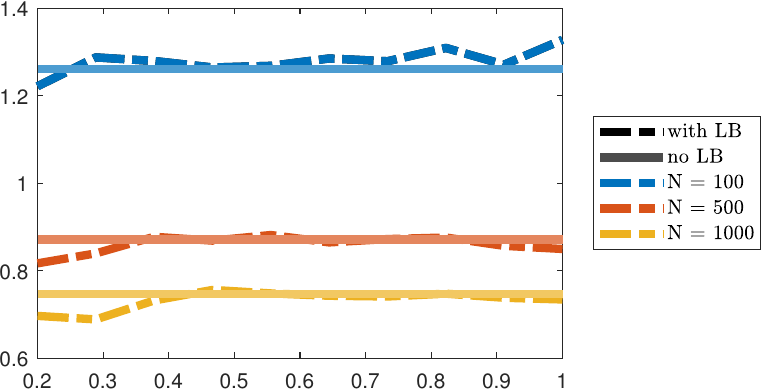}};
\node[below=of img1, node distance = 0, yshift = 1cm, xshift = -1cm] { $a$};
  \node[left=of img1, node distance = 0, rotate=90, anchor = center, yshift = -0.7cm] { $\ise(f_{n+1}^N)$};
\end{tikzpicture}
}
\caption{Influence of the lower bound (LB) on $g$ on average reconstruction accuracy over 100 repetitions. The solid lines represent the average $\ise(f_{n+1}^N)$ for $a=\infty$ while the dashed lines the average $\ise(f_{n+1}^N)$ for finite $a$.}
 \label{fig:mixture_lb}
\end{figure}
\section{Scaling with dimension}
\label{app:pdim}
To explore the scaling with the dimension $d_{\X}$ of the domain of $f$ of the discretized EMS~\eqref{eq:ems_discrete} and the SMC implementation of EMS we revisit the Gaussian mixture model in Section~\ref{sec:indirect_density_estimation} and extend it to higher dimension
\begin{align*}
f(x) =& \frac{1}{3}\N(x;0.3\cdot\mathbf{1}_{d_\X}, 0.07^2 I_{d_\X}) + \frac{2}{3}\N(x; 0.7\cdot\mathbf{1}_{d_\X}, 0.1^2I_{d_\X}),\\
g\left(y \mid x\right)  =& \N(y; x, 0.15^2 I_{d_\X}),\\
h(y)  =&  \frac{1}{3}\N(y;0.3 \cdot\mathbf{1}_{d_\X}, (0.07^2+0.15^2)I_{d_\X}) + \frac{2}{3}\N(y; 0.7\cdot\mathbf{1}_{d_\X}, (0.1^2+0.15^2) I_{d_\X}),
\end{align*}
where $\X=\Y=\real^{d_{\X}}$ and $\mathbf{1}_{d_\X}$ and $I_{d_\X}$ denote the unit function in $\mathbb{R}^{d_\X}$ and the $d_\X \times d_\X$ identity matrix, respectively. In particular, note that for $d_{\X}$ up to 5 at least $97\%$ of the mass of $f$ is contained in $[0, 1]^{d_{\X}}$.
We do not consider DKDE as these estimators approximate each marginal of $f$ separately and then use the product of the marginals as approximation for $f$. In the particular mixture model we consider, this results in reconstructions with additional modes due to the underlying independence assumption (e.g. reconstructions of the 2-dimensional model in Figure~\ref{fig:p_dim} present two additional modes at $(0.7, 0.3)$ and $(0.3, 0.7)$).

\begin{figure}[h]
    \centering
    \resizebox{0.98\textwidth}{!}{%
\begin{tikzpicture}[baseline, every node/.append style={font=\footnotesize}]
\node (img1) {\includegraphics[width=0.4\textwidth]{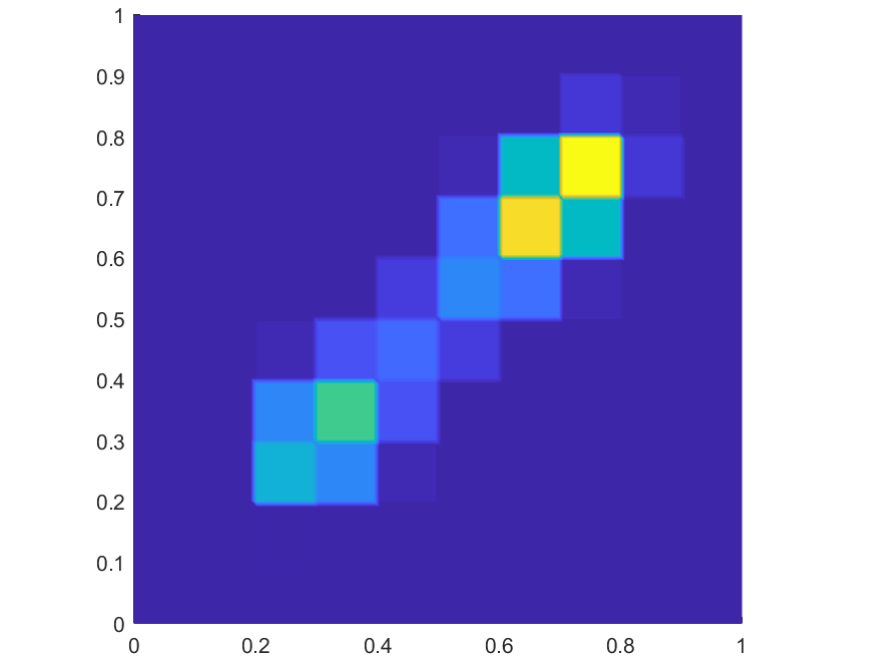}};
\node[left=of img1, node distance = 0, xshift = 1cm] {$\mathbf{10^2}$};
\node[above=of img1, node distance = 0] {\textbf{EMS}};
\node[below=of img1, node distance = 0, yshift = 1cm] {runtime $<$ 1s, $\ise=0.56$};
\node[right=of img1, node distance = 0, xshift = -2cm] (img2) {\includegraphics[width=0.4\textwidth]{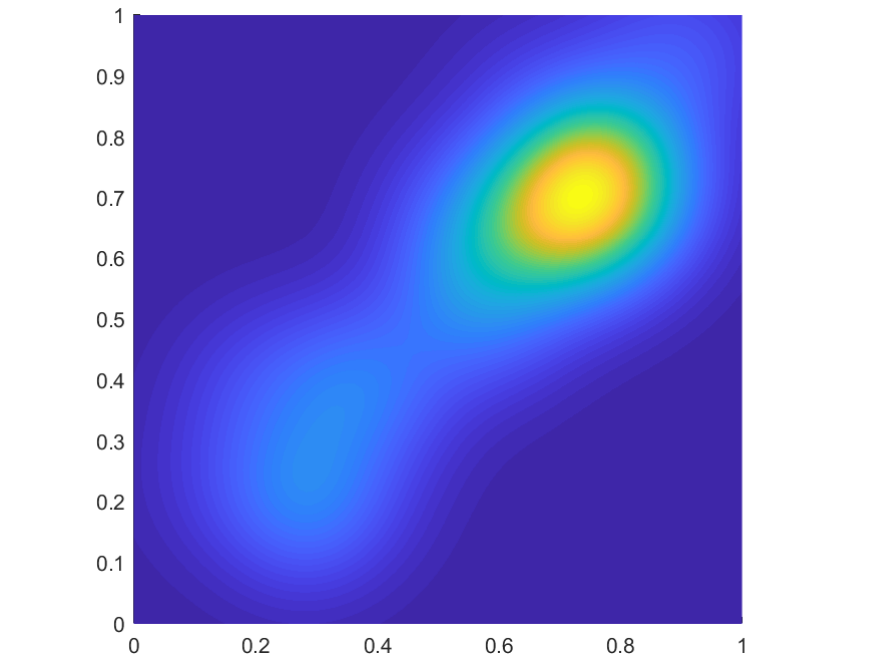}};
\node[above=of img2, node distance = 0,] {\textbf{SMC}};
\node[below=of img2, node distance = 0, yshift = 1cm] {runtime $<$ 1s, $\ise=0.91$};
\node[right=of img2, node distance = 0, xshift = -2cm] (img3) {\includegraphics[width=0.4\textwidth]{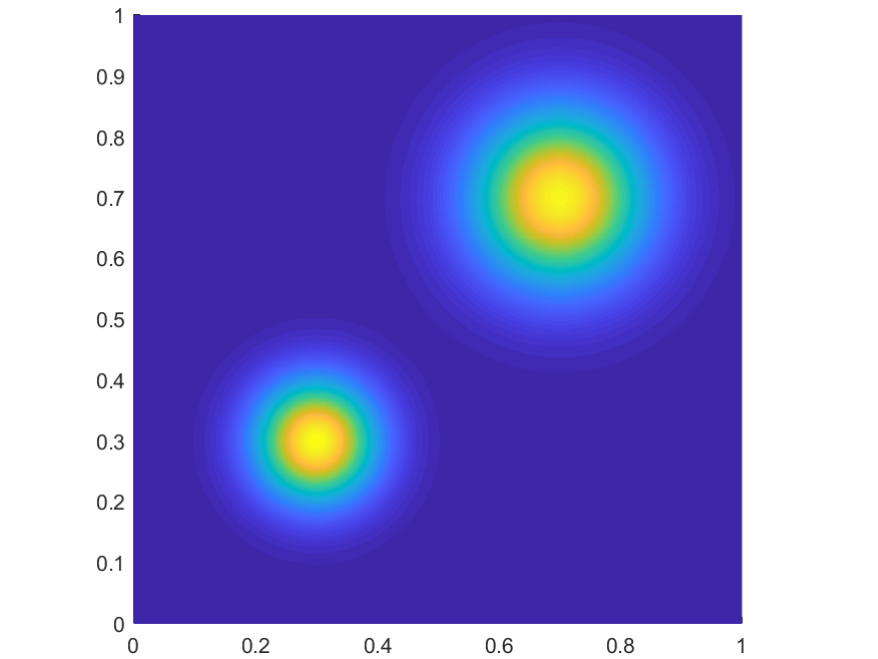}};
\node[above=of img3, node distance = 0] {\textbf{Truth}};
\end{tikzpicture}
}
    \resizebox{0.98\textwidth}{!}{%
\begin{tikzpicture}[baseline, every node/.append style={font=\footnotesize}]
\node (img1) {\includegraphics[width=0.4\textwidth]{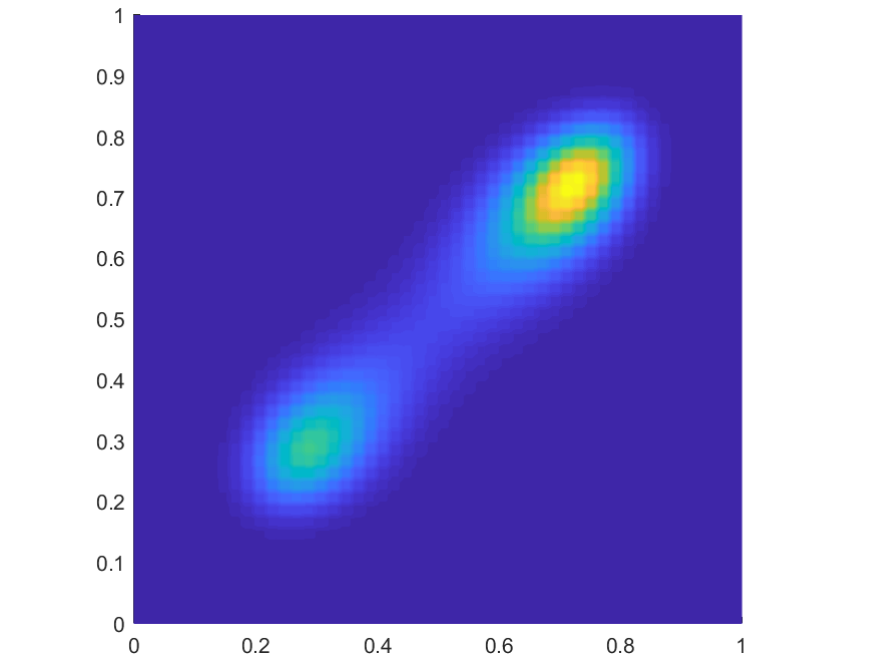}};
\node[left=of img1, node distance = 0, xshift = 1cm] {$\mathbf{50^2}$};
\node[below=of img1, node distance = 0, yshift = 1cm] {runtime $<$ 1m, $\ise=0.65$};
\node[right=of img1, node distance = 0, xshift = -2cm] (img2) {\includegraphics[width=0.4\textwidth]{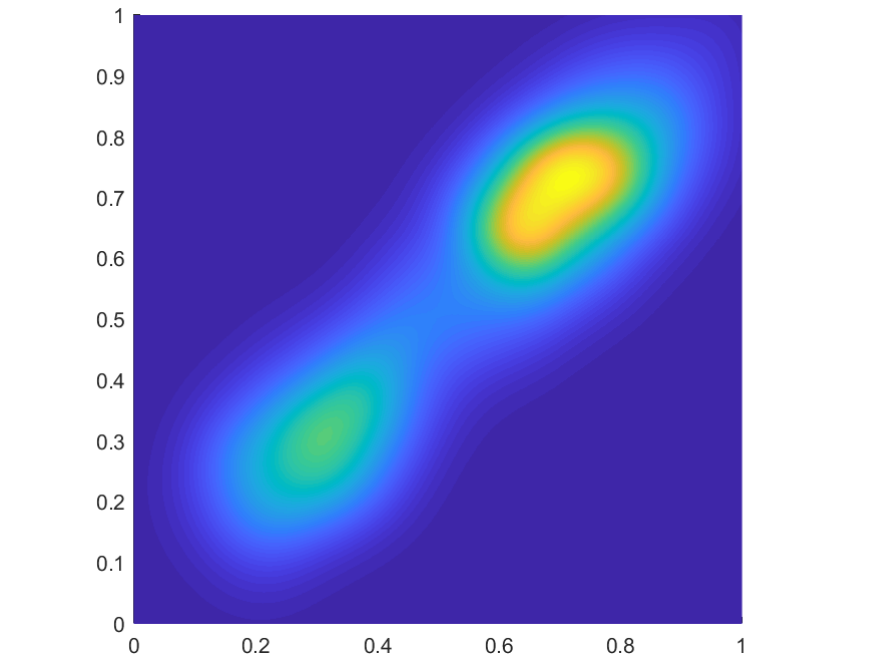}};
\node[below=of img2, node distance = 0, yshift = 1cm] {runtime $<$ 1m, $\ise=0.33$};
\node[right=of img2, node distance = 0, xshift = -2cm] (img3) {\includegraphics[width=0.4\textwidth]{mixture2}};
\end{tikzpicture}
}
    \resizebox{0.98\textwidth}{!}{%
\begin{tikzpicture}[baseline, every node/.append style={font=\footnotesize}]
\node (img1) {\includegraphics[width=0.4\textwidth]{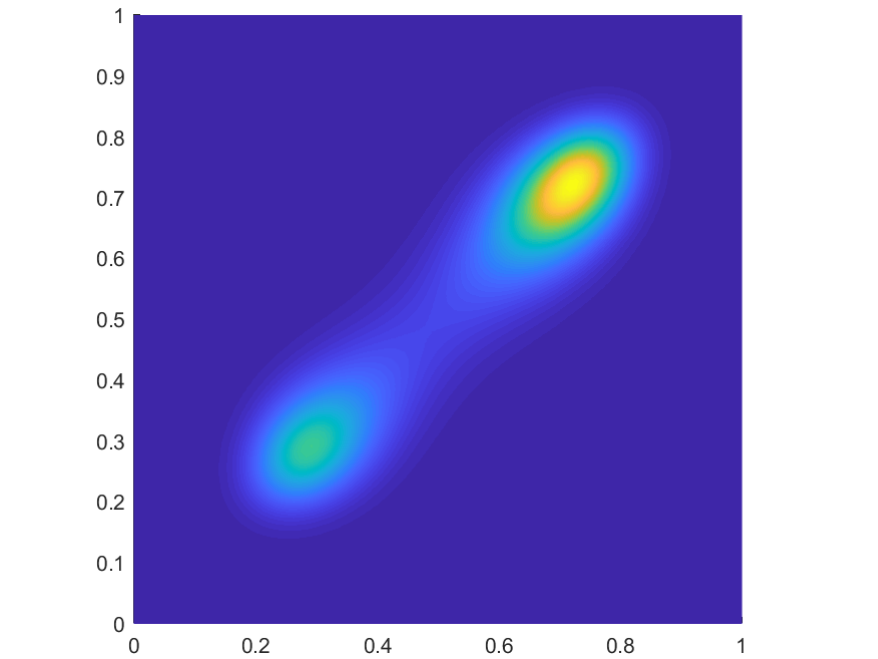}};
\node[left=of img1, node distance = 0, xshift = 1cm] {$\mathbf{100^2}$};
\node[below=of img1, node distance = 0, yshift = 1cm] {runtime $\approx$ 7m, $\ise=0.69$};
\node[right=of img1, node distance = 0, xshift = -2cm] (img2) {\includegraphics[width=0.4\textwidth]{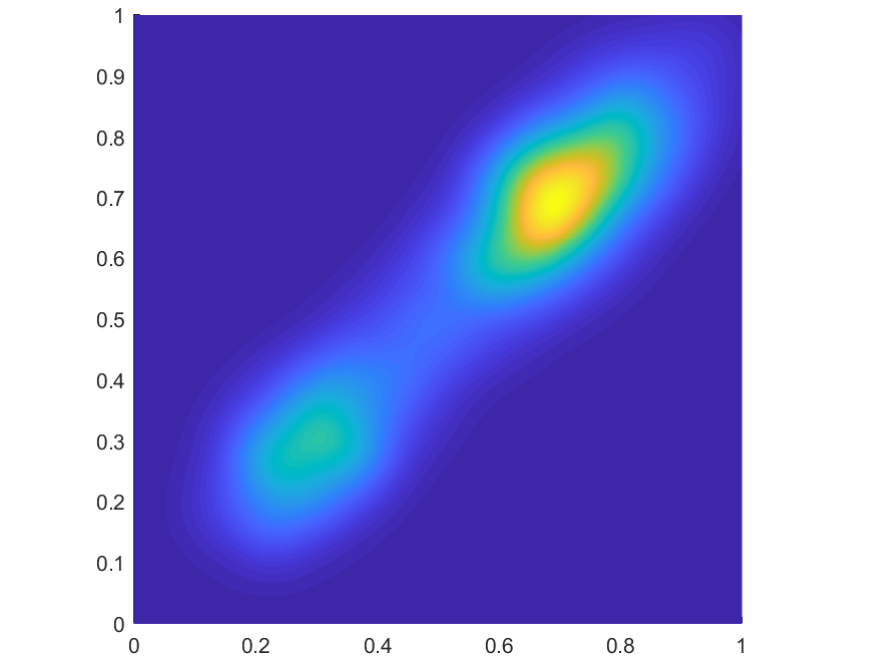}};
\node[below=of img2, node distance = 0, yshift = 1cm] {runtime $\approx$ 5m, $\ise=0.32$};
\node[right=of img2, node distance = 0, xshift = -2cm] (img3) {\includegraphics[width=0.4\textwidth]{mixture2}};
\end{tikzpicture}
}
    \caption{Reconstructions of a 2-dimensional mixture of Gaussian obtained with EMS and SMC. The number of bins/particles increases from $10^2$ to $100^2$. Runtime and accuracy are reported too.}
    \label{fig:p_dim}
\end{figure}

First, we take $d_{\X}=2$ and investigate the minimum number of bins/particles necessary to achieve reasonably good reconstructions.
We consider three particle sizes $N=10^2, 50^2, 100^2$ and set the total number of bins $B\approx N$ so that we obtain $\approx N^{1/2}$ equally spaced bins for each dimension.  We stop iterating after 30 steps since we observed that convergence occurs within 30 iterations, the value of $\varepsilon=10^{-3}$ is fixed and used for both the smoothing kernel and the smoothing matrix. The initial distribution is a uniform over $[0, 1]^2$ and we assume we have a sample $\mathbf{Y}$ of size $10^6$ from $h$, so that $M=N$. This corresponds to the highest computational cost for Algorithm~\ref{alg:fpsmc} but as observed in Appendix~\ref{sec:analytically_tractable} smaller values of $M$ could be considered and would reduce the computational cost of running the SMC implementation of EMS.
For small values of $B$ the runtimes of EMS and SMC are similar, however the reconstructions obtained with EMS are poor and low resolution due to the very coarse discretization (Figure~\ref{fig:p_dim}-left panel); on the contrary, the presence the kernel density estimator~\eqref{eq:smc_kde1} guarantees smooth reconstructions even when the particle size is small (Figure~\ref{fig:p_dim}-middle panel).
In addition, as the number of particles $N$ increases the accuracy of the reconstructions provided by SMC keeps increasing, while the EMS reconstructions do not improve as quickly, a phenomenon we already observed for the one-dimensional example in Section~\ref{sec:indirect_density_estimation}.
For $N=B\geq 100$ the runtime of SMC is roughly $30\%$ less than that of EMS with the accuracy of SMC being always larger than that of EMS.

Since the accuracy of kernel density estimators decreases when the dimension increases \citepapp{silverman1986density} and is primarily used in this work for visualisation and human interpretation (which becomes less informative in higher dimension, with the exception of low dimensional projections), to compare the performances of EMS and SMC in dimension $d_{\X}\geq 2$ we focus on approximating expectations w.r.t. $\eta_{n+1}$ of appropriate test functions $\varphi$, in this case, in fact, Proposition~\ref{prop:lp} gives the rate of convergence in terms of the number of particles $N$.
In particular, we consider mean, variance, the probability of the region $[0, 0.5]^{d_{\X}}$ and the probability of a hyper-sphere of radius 0.3 around the mode at $(0.3,\dots, 0.3)$.
We compare three particle sizes $N=10^2, 10^3, 10^4$ and obtain the number of bins for each dimension as $\lceil N^{1/d_{\X}}\rceil$ so that the total number of bins,  $B=\lceil N^{1/d_{\X}}\rceil^{d_\X}$, where $\lceil \cdot \rceil$ denotes the ceiling function, roughly matches $N$.
This choice allows us to compare EMS and SMC reconstructions which require roughly the same runtime (Table~\ref{tab:pdim}).
The SMC implementation in generally better at recovering the variance and the probability of the region $[0, 0.5]^{d_{\X}}$. For small values of $N$, $B$, both SMC and EMS have larger errors with discretized EMS achieving better crude estimates. However, as $N$, $B$ increase SMC is consistently better at approximating the four quantities considered, in particular, in the case of mean and variance the estimates are at least one order of magnitude more accurate. This is achieved at a computational cost which is always smaller than that of EMS and that could be in principle reduced by considering smaller values of $M$.

\FloatBarrier

\begin{center}
\scriptsize
  \tablelasttail{\hline}
  \topcaption{Mean squared error over 100 repetitions for mean, variance, probability of the lower quadrant and probability of a circle around the mode for the $d_{\X}$-dimensional Gaussian mixture model. Runtimes are reported too. Best values are in \textbf{bold}.}
   \label{tab:pdim}
   \tablefirsthead{%
\hline
 & mean & variance &  $\mathbb{P}(\square)$ & $\mathbb{P}(\bigcirc)$ &$\log_{10} (\textrm{runtime / s})$\\ 
\hline}
\tablehead{%
\hline
\multicolumn{6}{l}{\small\sl continued from previous page}\\
\hline
 & mean & variance &  $\mathbb{P}(\square)$ & $\mathbb{P}(\bigcirc)$ &$\log_{10} (\textrm{runtime / s})$\\ 

\hline}

\begin{supertabular}{lccccc}

 \hline\noalign{\smallskip}
$d_{\X}=2$\\
\hline
EMS - $B=10^2$&\textbf{1.38e-04}&4.96e-05&5.30e-02&\textbf{7.04e-03}&-1.71\\
SMC - $N=10^2$&3.87e-04&\textbf{1.26e-05}&\textbf{4.70e-02}&1.46e-02&\textbf{-2.02}\\
\hline
EMS - $B=32^2$&1.42e-04&5.31e-05&5.17e-02&\textbf{5.86e-03}&1.28\\
SMC - $N=10^3$&\textbf{4.29e-05}&\textbf{5.81e-06}&\textbf{3.02e-02}&8.29e-03&\textbf{0.94}\\
\hline
EMS - $B=100^2$&1.42e-04&5.38e-05&5.15e-02&\textbf{6.11e-03}&5.31\\
SMC - $N=10^4$&\textbf{3.84e-06}&\textbf{4.51e-06}&\textbf{2.77e-02}&8.57e-03&\textbf{5.11}\\
\hline\noalign{\smallskip}
$d_{\X}=3$\\
\hline
EMS - $B=5^3$&\textbf{2.53e-04}&1.26e-04&1.46e-01&8.59e-03&-1.47\\
SMC - $N=10^2$ &3.76e-04&\textbf{3.23e-05}&\textbf{7.41e-02}&\textbf{7.56e-03}&\textbf{-2.06}\\
\hline
EMS - $B=10^3$ &2.00e-04&5.75e-05&9.00e-02&2.42e-03&1.40\\
SMC - $N=10^3$&\textbf{4.62e-05}&\textbf{8.50e-06}&\textbf{7.00e-02}&\textbf{1.54e-03}&\textbf{1.08}\\
\hline
EMS - $B=22^3$&2.04e-04&6.12e-05&8.83e-02&1.64e-03&5.66\\
SMC - $N=10^4$&\textbf{3.53e-0}6&\textbf{6.68e-06}&\textbf{6.61e-02}&\textbf{9.38e-04}&\textbf{5.30}\\
    \hline\noalign{\smallskip}
$d_{\X}=4$\\
\hline
EMS - $B=4^4$ &\textbf{1.98e-04}&\textbf{1.55e-05}&1.22e-01&\textbf{1.16e-03}&-0.65\\
SMC - $N=10^2$&4.77e-04&9.77e-05&\textbf{6.85e-02}&5.48e-03&\textbf{-2.08}\\
\hline
EMS - $B=6^4$&2.43e-04&4.02e-05&1.09e-01&7.80e-04&1.70\\
SMC - $N=10^3$&\textbf{3.45e-05}&\textbf{1.80e-05}&\textbf{8.68e-02}&\textbf{7.21e-04}&\textbf{0.95}\\
\hline
EMS - $B=10^4$&2.60e-04&6.59e-05&1.03e-01&5.54e-04&5.32\\
SMC - $N=10^4$&\textbf{4.10e-06}&\textbf{8.58e-06}&\textbf{8.95e-02}&\textbf{2.22e-04}&\textbf{5.12}\\
\hline\noalign{\smallskip}
$d_{\X}=5$\\
\hline
EMS - $B=3^5$ &\textbf{5.66e-05}&2.67e-04&2.12e-01&\textbf{1.27e-02}&-0.56\\
SMC - $N=10^2$ &6.59e-04&\textbf{1.34e-04}&\textbf{3.89e-02}&1.41e-02&\textbf{-1.96}\\
\hline
EMS - $B=4^5$ &2.42e-04&\textbf{2.08e-05}&1.29e-01&\textbf{7.59e-04}&1.51\\
SMC - $N=10^3$ &\textbf{5.57e-05}&4.54e-05&\textbf{7.49e-02}&9.10e-04&\textbf{1.14}\\
\hline
EMS - $B=7^5$ &2.82e-04&5.71e-05&1.36e-01&2.09e-04&6.63\\
SMC - $N=10^4$ &\textbf{3.39e-06}&\textbf{1.27e-05}&\textbf{8.62e-02}&\textbf{5.73e-05}&\textbf{5.36}\\
\end{supertabular}
\end{center}

\bibliographystyleapp{agsm}
\bibliographyapp{smcfe_biblio}
\end{document}